\newtheorem{definition}{Definition}
\newtheorem{lemma}{Lemma}
\newtheorem{corollary}{Corollary}
\newtheorem*{proposition*}{Proposition}
\newtheorem*{thm*}{Theorem}
\newcommand{\rj}[7]{#1 \mid #2 \vdash #3 : #4 \sim #5 : #6 \mid #7}
\newcommand{\jrhol}[7]{#1 \mid #2 \vdash #3 : #4 \sim #5 : #6 ~\{#7\}}
\newcommand{\jimp}{\Rrightarrow}
\newcommand{\holwf}[2]{#1 \vdash #2}
\newcommand{\uj}[5]{#1 \mid #2 \vdash #3 : #4 \{#5\}}
\newcommand{\armj}[9]{#1 \mid #2 \vdash \{#3\}  #4 : #5 \sim #6 : #7
\{\!\{#8\}\!\}_{#9}}
\newcommand{\armjnoc}[7]{\vdash \{#1\}~#2 : #3 \sim #4 : #5~
\{\!\{#6\}\!\}_{#7}}
\newcommand{\aumj}[7]{#1 \mid #2 \vdash \{#3\} #4 : #5
  \{\!\{#6\}\!\}_{#7} }
\newcommand{\aumjnoctxt}[5]{\vdash \{#1 \} #2 : #3 \{\!\{ #4 \}\!\}_{#5}}
\newcommand{\aumje}[7]{#1 \mid #2 \vdash \{#3\} #4 : #5 \{\!\{#6\}\!\}_{#7} }
\newcommand{\juhol}[5]{#1 \mid #2 \vdash #3 : #4 \{ #5 \} }
\newcommand{\jhol}[3]{#1 \mid #2 \vdash #3}
\newcommand{\jlc}[3]{#1 \vdash #2 : #3}
\newcommand{\res}{{\mathbf r}}
\newcommand{\st}{{\mathbf s}}
\newcommand{\vl}{{\mathbf v}}
\newcommand{\resl}{\res_1}
\newcommand{\resr}{\res_2}
\newcommand{\vll}{\vl_1}
\newcommand{\vlr}{\vl_2}
\newcommand{\stl}{\st_1}
\newcommand{\str}{\st_2}
\newcommand{\subst}[2]{[#2/#1]}
\newcommand{\defeq}{\triangleq}
\newcommand{\loc}{\mathsf{Loc}}
\newcommand{\rname}[1]{${\sf [#1]}$}
\newcommand{\htriple}[5]{\mathcal{H}_{#1,#2}(#3,#4,#5)}
\newcommand{\hquad}[6]{\mathcal{H}_{#1,#2}(#3,#4,#5,#6)}
\newcommand{\sem}[1]{\llbracket #1 \rrbracket}
\newcommand{\semp}[1]{{(\![} #1 {]\!)}}
\newcommand{\tunit}{\mathbb{U}}
\newcommand{\tmem}{\mathbb{M}}
\newcommand{\tval}{\mathbb{V}}
\newcommand{\tbool}{\mathbb{B}}
\newcommand{\D}[1]{#1\cdot 1}
\newcommand{\qbsunit}{1}
\newcommand{\qbsmem}{M}
\newcommand{\qbsval}{V}
\newcommand{\qbsbool}{\D{\{\bot,\top\}}}
\newcommand{\qbsnat}[1]{\D{#1}}
\newcommand{\qbsposr}{[0,\infty]_{\QBS}}
\newcommand{\coe}[2]{c_{#1,#2}}
\newcommand{\monn}{{\mathsf T}} %name
\newcommand{\mon}[2]{{\monn_{#1}(#2)}}
\newcommand{\eff}[1]{\mathbf{Eff}(#1)}
\newcommand{\safe}[1]{\mathsf{Safe}(#1)}
\newcommand{\unit}[1]{{\sf unit}(#1)}
\newcommand{\mletA}[2]{{\sf let}\ #1 = #2\ {\sf in}}
\newcommand{\mlet}[3]{{\sf let}\ #1 = #2\ {\sf in}\ #3}
\newcommand{\wrt}[2]{#1:= #2}
\newcommand{\rd}[1]{{\sf read}\ #1}
\newcommand{\unif}[1]{{\sf Unif}(#1)}
\newcommand{\lap}[2]{{\sf Lap}_{#1}(#2)}
\newcommand{\mfold}[3]{{\sf mfold}\ #1\ #2\ #3}
\newcommand{\casebool}[3]{{\sf if}\ #1\ {\sf then}\ #2 \ {\sf else}\ #3}
\newcommand{\ttrue}{\mathbf{tt}}
\newcommand{\ffalse}{\mathbf{ff}}
\newcommand{\sskip}{{\sf skip}}
\newcommand{\sample}[1]{{\sf sample}(#1)}
\newcommand{\pair}[2]{\langle #1, #2 \rangle}
\newcommand{\prl}[1]{\pi_1(#1)}
\newcommand{\prr}[1]{\pi_2(#1)}
\newcommand{\id}[0]{{\sf id}}
\newcommand{\inj}[1]{\langle #1 \rangle}
\newcommand{\injs}[1]{[ #1 ]}
\newcommand{\cA}{\mathcal{A}}
\newcommand{\cD}{\mathcal{D}}
\newcommand{\cL}{\mathcal{L}}
\newcommand{\cO}{\mathcal{O}}
\newcommand{\cP}{\mathcal{P}}
\newcommand{\cR}{\mathcal{R}}
\newcommand{\cS}{\mathcal{S}}
\newcommand{\cT}{\mathcal{T}}
\newcommand{\BB}{\mathbb{B}}
\newcommand{\CC}{\mathbb{C}}
\newcommand{\EE}{\mathbb{E}}
\newcommand{\NN}{\mathbb{N}}
\newcommand{\PP}{\mathbb{P}}
\newcommand{\nat}{\mathbb{N}}
\newcommand{\RR}{\mathbb{R}}
\newcommand{\bool}{\mathbb{B}}
\newcommand{\posr}{[0,\infty]}
\newcommand{\QBS}{\mathbf{QBS}}
\newcommand{\Meas}{\mathbf{Meas}}
\newcommand{\Set}{\mathbf{Set}}
\newcommand{\BRel}[1]{\mathbf{BRel}(#1)}
\newcommand{\To}{\Rightarrow}
\newcommand{\dto}{\mathbin{\dot\rightarrow}}
\newcommand{\dtimes}{\mathbin{\dot\times}}
\newcommand{\ddtimes}{\mathbin{\ddot\times}}
\newcommand{\stmtr}[1]{{#1}{{\cS}}}
\newcommand{\dstmtr}[1]{{#1}\dot{{\sf S}}}
\newcommand{\ddstmtr}[1]{\ddot{\sf S}(#1)}
\newcommand{\mprobn}{{{\cP}}} % name
\newcommand{\mprob}{\mprobn} % operator
\newcommand{\glprobn}{{\dot\mprobn}} % name
\newcommand{\glprob}[3][]{\glprobn_{#1}(#2)(#3)} % operator
\newcommand{\glubn}{\dot\mprobn^{\mathrm{ub}}} % name
\newcommand{\glub}[3][]{\glubn_{#1}(#2)(#3)} % operator
\newcommand{\gldpn}{\dot\mprobn^{\mathrm{dp}}} % name
\newcommand{\glexpn}{\dot\mprobn^{\mathrm{exp}}} % name
\newcommand{\glexp}[3][]{\glexpn_{#1}(#2)(#3)} % operator
\newcommand{\mpstn}{{\stmtr\mprobn}} %name
\newcommand{\mpst}{\mpstn} % operator
\newcommand{\lpstn}{\dstmtr{\glprobn}} % name
\newcommand{\lpst}[3][]{\lpstn_{#1}(#2)(#3)} % operator
\newif\ifcomments
\newcommand{\tetsuya}[1]{\ifcomments\textit{\color{darkgreen}[TS]: #1}\fi}
\newcommand{\sk}[1]{\ifcomments\textit{\color{blue}[SK]: #1}\fi}
\definecolor{darkgreen}{RGB}{50,170,0}
\definecolor{darkred}{RGB}{180,50,50}
\newcommand{\ev}{\mathrm{ev}}
\newcommand{\qbszero}{\mathrm{zero}}
\newcommand{\qbssucc}{\mathrm{succ}}
\newcommand{\incl}{{{\pi_1^*I}}}
\newcommand{\mixm}[2]{m_{#1,#2}}
\newcommand{\mix}[3]{\mixm{#1}{#2:#3}}
\newcommand{\sub}[2]{\mathrm{sub}^{#1}_{#2}}
\newcommand{\dom}{\mathop{\mathrm {dom}}}
\newif\ifhideproofs
\begin{document}

\title [Higher-order probabilistic adversarial computations]{Higher-order probabilistic adversarial computations: Categorical semantics and program logics}

% %% Author with single affiliation.
% \author{First1 Last1}
% \authornote{with author1 note}          %% \authornote is optional;
%                                         %% can be repeated if necessary
% \orcid{nnnn-nnnn-nnnn-nnnn}             %% \orcid is optional
% \affiliation{
%   \position{Position1}
%   \department{Department1}              %% \department is recommended
%   \institution{Institution1}            %% \institution is required
%   \streetaddress{Street1 Address1}
%   \city{City1}
%   \state{State1}
%   \postcode{Post-Code1}
%   \country{Country1}                    %% \country is recommended
% }
% \email{first1.last1@inst1.edu}          %% \email is recommended

\author{Alejandro Aguirre}
\orcid{0000-0001-6746-2734}
\authornote{This research was carried
	out while the first author was affiliated to the IMDEA
	Software Institute and Universidad Polit\'ecnica de Madrid}
\affiliation{
	\institution{Aarhus University}
	\country{Denmark}
}
\email{alejandro@cs.au.dk}
%\affiliation{
%	\institution{IMDEA Software Institute}
%	\country{Spain}
%}
%\affiliation{
%	\institution{Universidad Polit\'ecnica de Madrid}
%	\country{Spain}
%}

\author{Gilles Barthe}
\affiliation{
	\institution{MPI-SP}
	\country{Germany}
}
\affiliation{
	\institution{IMDEA Software Institute}
	\country{Spain}
}
\email{gbarthe@mpi-sp.org}
 
\author{Marco Gaboardi}
\affiliation{
	\institution{Boston University}
	\country{USA}
}
\email{gaboardi@bu.edu}

\author{Deepak Garg}
\affiliation{
	\institution{Max Planck Institute for Software Systems}
	\country{Germany}
}
\email{dg@mpi-sws.org}

 \author{Shin-ya Katsumata}
 \affiliation{
   \institution{National Institute of Informatics}            %% \institution is required
   \streetaddress{2-1-2 Hitotsubashi}
   \city{Chiyoda-ku}
   \state{Tokyo}
   \postcode{101-8430}
   \country{Japan}                    %% \country is recommended
 }
 \orcid{0000-0001-7529-5489}
 \email{s-katsumata@nii.ac.jp}

 \author{Tetsuya Sato}
 \affiliation{
	 \institution{Tokyo Institute of Technology}
	 \country{Japan}
}
\email{tsato@c.titech.ac.jp}

\begin{abstract}
  Adversarial computations are a widely studied class of computations
  where resource-bounded probabilistic adversaries have access to
  oracles, i.e., probabilistic procedures with private state. These
  computations arise routinely in several domains, including security,
  privacy and machine learning.

  In this paper, we develop program logics for reasoning about
  adversarial computations in a higher-order setting. Our logics are
  built on top of a simply typed $\lambda$-calculus extended with a
  graded monad for probabilities and state. The grading is used to
  model and restrict the memory footprint and the cost (in terms of
  oracle calls) of computations. Under this view, an adversary is a
  higher-order expression that expects as arguments the code of its
  oracles. We develop unary program logics for reasoning about error
  probabilities and expected values, and a relational logic for
  reasoning about coupling-based properties. All logics feature rules
  for adversarial computations, and yield guarantees that are valid
  for all adversaries that satisfy a fixed resource policy. We prove
  the soundness of the logics in the category of quasi-Borel spaces,
  using a general notion of graded predicate liftings, and we use
  logical relations over graded predicate liftings to establish the
  soundness of proof rules for adversaries. We illustrate the working
  of our logics with simple but illustrative examples.
\end{abstract}

  \maketitle
  \renewcommand{\shortauthors}{A. Aguirre, G. Barthe, M. Gaboardi, D. Garg, S. Katsumata, and T. Sato}

\section{Introduction}
Probabilistic programs occur widely in privacy, security, and other
domains where formal guarantees are required. These guarantees are
often expressed using expectations, e.g, one may want to prove
that the expected value of a randomized algorithm remains close to
some deterministic function of its input.  This can be
established by means of expectation-based methods that originate from
the works of~\citet{Kozen85} and of~\citet{Morgan96}. Another class of guarantees is concerned with
proving the probability of events; e.g., one may want to prove
that a randomized algorithm has a small probability of returning an
incorrect answer, or more generally that there is a small probability
that a bad event happens. These kinds of properties are the target of
so-called Boolean-based methods, such as the union bound logic
proposed by~\cite{BartheGGHS16}. These two approaches
are traditionally used to reason about properties concerning a single
program execution. However, many security and privacy properties are
naturally expressed by relating two program executions; we call such
properties relational properties. Relational counterparts to
expectation-based and Boolean-based methods have been proposed,
including the relational expectation-based logic
of~\citet{BartheEGHS18}, and probabilistic relational Hoare
logic~\cite{BartheGZ09}.% \footnote{For a very simple illustration of
%   the different goals of these logics, consider a program that
%   computes the sum $s_{n,p}$ of $n$ bits sampled independently from
%   the Bernoulli distribution with bias $p$; as the program is
%   probabilistic, the output $s_{n,p}$ is a distribution over natural
%   numbers. As usual, we let $\PP$ denote the probability operator, and
%   $\EE{}$ denote the expectation operator. Then, the different logics
%   target the following statements:
%  \begin{itemize}
%  \item for union bound logic, $\PP [s_{n,p} \geq N] \geq 1 - \beta$,
%    where $N$ is a natural number and $\beta$ is an error bound
%    depending on $N$, $n$ and $p$;
 
% \item for expectation logics, $\EE{}[s_{n,p}] {}\leq R$,
%   $\EE{}[s_{n,p}]{}\geq R$, $\EE{}[s_{n,p}] = R$, where $R$ is a
%   real-valued expression depending on $n$ and $p$;

% \item for relational logics, $\PP [s_{n,p} \geq N] \leq \PP [s_{n,q}
%   \geq N]$, where $N$ is a natural number; 
% \item for relational expectation logic,
%   $\EE{}[s_{n,p}] \leq \EE{}[s_{n,q}]$.
% \end{itemize}}

Some of these logics additionally support reasoning about
\emph{adversarial computations}, where resource-bounded but otherwise
unconstrained adversaries interact with oracles, i.e.\, probabilistic
procedures with private state. These logics view adversaries as
uninterpreted procedures, and restrict their power by adding
constraints on the memory they can read and write, and on the number
of times they can call other procedures. These constraints are
captured by a notion of valid adversary, and it is reasonably simple
to define proof rules for valid adversaries. % The soundness of these
% rules can be proved by a direct induction on the structure of the
% adversary's code, using the inductive definition of valid adversaries.
%
The combination of program logics and adversary rules yield powerful
frameworks that have been used to reason about many examples,
including security of cryptographic constructions~\cite{BartheGZ09}
and stability of machine learning algorithms~\cite{BartheEGHS18}.

The aforementioned works are developed on top of a core probabilistic
imperative language. However, it is often desirable to reason about
higher-order programs, either because the programs of interest are
written in a higher-order language, or more fundamentally because the
programs manipulate higher-order objects. Unfortunately, program
logics for higher-order probabilistic languages are not as well
understood as their counterparts for imperative languages. One
potential reason for this is that denotational semantics of
higher-order probabilistic programs have been lacking. Indeed, a
classic result by~\citet{aumann1961borel} shows that the
category of Borel spaces is not Cartesian closed, and therefore it
cannot be used to interpret programs. % As a consequence, one cannot
% prove soundness program logics using denotational methods. One
% solution would be to develop syntactic methods, but we are not aware
% of systematic work in this direction. Another solution is to restrict
% probabilistic computations to output values of base type, as done
% e.g.\, in~\cite{BFGSSZ14}
% %,BGGHRS15, BFGGGHS16,ABBBGG18} (in fact,
% %\cite{ABBBGG18} additionally uses the topos of trees rather than the
% %usual set-theoretical semantics).
% This circumvents Aumann's impossibility result, but this is done at
% the cost of elegance and generality. 
Fortunately, recent works propose elegant semantics for higher-order
probabilistic programs, such as Probabilistic Coherent Spaces (or
PCoh)~\cite{DanosE11} and Quasi-Borel Spaces (or
QBS)~\cite{HeunenKSY17}.  These semantics can be used as a basis for
developing program logics, as shown for instance by~\citet{SatoABGGH19},
who develop unary and relational logics over QBS.  However, reasoning
in this system is based on an axiomatization of probabilities, and is
intricate. Moreover, this system does not support reasoning about
state and adversarial computations.

\paragraph*{Goals and technical outline}
In this paper we set out to develop a \emph{general framework} for
designing program logics that reason about resource-constrained
adversarial computations in a higher-order probabilistic language. The
reasoning principles themselves are fairly natural, and have been
considered in the first-order setting before~\cite{BartheGZ09}, but
generalizing them to the higher-order requires addressing the
following challenges:
\begin{itemize}
	\item How can we enforce the restrictions on the adversaries?
	\item Can we support relational or expectation-based logics?
	\item How can we formalize the reasoning principles into a
          \emph{common set of} proof rules? How can we prove these
          rules sound?
	\item How can we give a denotational model to these logics?
\end{itemize}

%This paper develops a general setting and three specific program logics for
%reasoning about resource-constrained adversarial computations in a higher-order
%probabilistic language.  Specifically, we consider a higher-order language with
%a monadic type for encapsulating probabilistic computation and first-order
%global state.  

Program properties in an adversarial setting usually make some
assumptions about adversaries by restricting the number of times they
can invoke the oracle, and denying them access to the private state of
the oracle (formally, the oracle is a function with hidden local
state, passed as an argument to the adversary). In the first-order
setting this is usually addressed by restricting the syntax of
adversaries in an ad-hoc manner, but for higher-order programs a more
principled approach would involve using the type system to enforce
these restrictions. Another idea would be to use local state and some
sort of separation logic~\cite{TassarottiH19}, but it is not clear how
such features can be added to denotational models for higher-order
probabilistic programs. The solution we propose here first involves
grading a monad for global state and probabilities by two parameters
$\Sigma$ and $k$: $\Sigma$ represents the memory footprint of the
computation and $k$ represents the number of oracle calls. Thus, our
language has types of the form $\mon{\Sigma,k}{\tau}$ to represent
computations with memory footprint $\Sigma$ and at most $k$ oracle
calls. Then, we allow quantification over memory grading, which can be
seen as a lightweight form of polymorphism. We then model adversaries
as computations of second-order types, e.g.\, the type $\forall
\alpha. (\sigma \rightarrow \mon{\alpha,1}{\tau}) \rightarrow
\mon{\Sigma\cup \alpha,k}{\tau'}$ captures an adversary that has
access to an oracle of type $\sigma \rightarrow \mon{\alpha,1}{\tau}$
and that returns values of type $\tau'$. The grading ensures that the
adversary can call the oracle at most $k$ times, and separation
between adversary and oracle memories is enforced by a parametricity
property derived from the quantification in the type: the adversary
can only read and write the memory region $\Sigma$; and in particular,
it cannot access the private memory of the oracle (denoted
$\alpha$). To our knowledge, this is the first use of this form of
parametricity.

On top of this language, we develop a Boolean-based unary logic, an
expectation-based unary logic, and a Boolean-based relational logic.
%While the case of relational expectations is also interesting, we
%leave it for future work.
The first of these logics can be used to reason about the probability
that the output of a program satisfies some assertion. Its judgments
are based on generalized Hoare triples of the form
$\{\phi\}~t:\mon{\Sigma,\epsilon}{A}~\{\!\{\psi\}\!\}_{\delta}$, with the meaning that if the
initial state satisfies $\phi$, then the final state after running $t$
satisfies $\psi$ with probability at least $1-\delta$. The logic's
interpretation is based on a graded monad lifting, which maps the
postcondition $\psi$ and the grading $\delta$ to an assertion over
probability distributions.

Crucially, soundness of this first logic does \emph{not} depend on the
concrete definition of the lifting, but only on some algebraic
properties of the lifting, so the logic can be generalized. We use
this observation to develop a second higher-order program logic for a
completely different purpose, namely, proving properties of
expectations, similar to~\citet{Morgan96}.  In this logic,
\emph{assertions are real-valued functions}, as opposed to
Boolean-valued assertions of the first logic. Remarkably, most of the
proof rules of the two logics are the same, thanks to the similar
algebraic properties of the underlying liftings. This shows how, by
exploiting similarity in the underlying liftings, we can get almost
similar proof rules to prove completely different properties with
different truth values. We believe that building two
differently-valued logics (real-valued and Boolean-valued) from common
rules is novel.

Our third logic is a relational logic that can be used to prove
properties~\cite{BartheFGGHS16} of \emph{pairs} of higher-order
probabilistic programs using couplings. Again, we exploit the
structure of liftings (couplings are particular cases of liftings),
this time for relational reasoning.

To each of the three logics we add (structurally very similar) proof
rules for reasoning about adversaries. Adversary rules combine all of
the features of our framework, and can be used to reason about the
interaction of an oracle $\cO$, whose code we know, with an adversary
$\cA$ of which we only know the type. Our type system enforces that
the adversary can only call the oracle at most $k$ times and it cannot
access the oracle's private memory. The adversary rules of all three
logics have similar structure and follow the same underlying pattern:
Assuming some invariant about the oracle's private state (which we can
discharge in our logics), derive a property of \emph{any} adversary
that can call the oracle at most $k$ times. For instance, in the first
logic above, the adversary rule says that if the oracle preserves an
invariant $\phi$ with probability at least $1 - \delta$, then any
adversary calling the oracle at most $k$ times preserves $\phi$ with
probability at least $1-k\delta$.

%One of the core ideas behind our logics is that proofs are carried out
%with no knowledge of adversaries except for their type, so that derivable
%judgments are valid under all possible interpretations of adversaries that
%respect their type.  Technically, this is achieved by means of a special
%context to declare adversary variables. These variables cannot be abstracted;
%however, they can be instantiated by any closed term of appropriate type, where
%the restriction to closed terms is instated in order not to put unrealistic
%assumptions on the adversary (essentially, we do not want to require that the
%adversary is well-behaved w.r.t.\ some semantic property; we only want to
%assume that it has the right type).

Next, we define a semantic model for our language and the three
logics. Just modeling the language with its higher-order nature and
probabilities is nontrivial as explained earlier.  Concretely, we
model our language in the category $\QBS$ of Quasi-Borel spaces. We
then interpret monadic types using the monad $\cT(-) \triangleq M \To
\cP( - \times M)$ for some QBS $M$ of memories, where $\cP$ denotes
the monad of probability measures over $\QBS$.

Next, we wish to build a uniform framework to model our three logics, 
their different notions of truth-values, different
liftings, and both unary and relational reasoning.  For this, we build
our theory using the notion of \emph{Heyting-valued predicates}, which
are maps from a set $X$ to a Heyting algebra $\Omega$. By
instantiating $\Omega$ differently, we are able to model our different
logics. Further, to interpret logics themselves we employ
\emph{graded monad liftings} \cite{10.1145/2535838.2535846}, which map a Heyting-valued predicate
over a set $X$ to a Heyting-valued predicate over the set of distributions over
$X$. % In this paper we present a concrete construction step of graded monad liftings
% in the setting of Heyting-valued predicates.
% before~, our formulation is novel.
We also introduce a novel concept of \emph{stateful lifting}, which
combines graded monad liftings with the state monad. This gives a
categorical semantics of our new Hoare-triple type
(c.f. \cite{NanevskiMB08}):
``$\{\phi\} t:\mon{\Sigma,\epsilon}{A} \{\!\{\psi\}\!\}_\delta$'', where
$\phi,\psi$ are $\Omega$-valued predicates, $\delta$ is a grading and
the whole type specifies properties of probabilities of state
transformers. In doing so, we carefully design a
categorical framework that unifies qualitative and quantitative
assertions using Heyting algebras, and admits interpretations
of the triples under a generic graded lifting.
Soundness of the different logics follows
\emph{uniformly} by suitably instantiating the liftings and the
Heyting algebras.

The soundness of the adversary rules needs separate proofs, since we
must show that the rules are sound for any term inhabiting the
adversary's type. This can usually be done with logical relations, but
an approach based on standard logical relations would fail here, since
it would not take into account the latent effect of the types and
their relation to the invariant. Therefore, we develop a novel logical
relation that is parametrized by the invariant we want to preserve,
and graded by the probability of failure.

\paragraph*{Contributions}
In summary, our contributions are the following:

\begin{itemize}
	\item We design a type system for a higher-order probabilistic
          language to model adversaries and restrict their
          capabilities.  This is achieved through the use of a monad
          graded by the memory footprint and the cost the
          computations, and exploiting parametricity over the memory
          usage. This novel application of parametricity allows us to
          enforce a separation between the adversary and the oracle
          memories in a setting with global state.

	\item We design three unary and relational logics to reason
          about probabilistic programs in this setting. We go beyond
          logics in which assertions are Boolean by also presenting a
          logic in which assertions are real-valued functions, whose
          expected value the logic establishes.  Assertions in our
          logics are also graded, to allow us to reason about the
          probability of failure, or the tightness of bounds. The
          logics are instances of a generic structure -- both in the
          proof rules and the semantics -- showcasing the common
          structure behind them.

	\item We introduce a notion of stateful lifting, which is used to
	interpret the triples in our judgments, from which we can
	construct a categorical model for the rest of the framework.
	This model is parametrized by a Heyting algebra of truth values
	and a graded lifting that interprets assertions. This allows us
	to have a uniform categorical model which is general enough for
	all the logics that we present. 
	
	\item We introduce rules to reason about the interaction
          between adversaries, from which we only know their type, and oracles.
          This uses the parametricity above, to
          show that an invariant is preserved, and moreover it uses
          the cost restriction on the adversary to compute the grading
          of the interaction. Soundness of these rules follows
          from a novel logical relation.
\end{itemize}

  \section{Illustrative examples}
\label{sec:motivation}

We introduce two illustrative examples, which we use to motivate our
modeling of adversaries, and later to showcase the mechanics of our
different logics. Our examples are deliberately simple; further
examples are in the conclusion and the appendix.

\paragraph*{Pollution attacks against Bloom filters~\cite{GerbetKL15}}
Bloom Filters~\cite{Bloom70} are probabilistic data structures useful
to represent sets efficiently at the cost of a loss in precision.
Informally, a Bloom Filter is a data structure with two procedures: an
insertion procedure for adding a value to the current set, and a
membership procedure to query whether a value belongs to the current
set. For simplicity, we assume that values are taken from the set
$[n]=\{0, \ldots, n-1\}$ for some $n$. A Bloom Filter represents
subsets of $[n]$ as an array $L$ of bits of fixed size $m$.  Initially
all bits in $L$ are set to 0. The insertion procedure is parametrized
by a hash function $H: ([n] \times [\ell])\rightarrow [m]$ sampled
uniformly at random, where $\ell$ is a parameter of the Bloom
Filter. The procedure ${\sf insert}(x)$ updates to 1 the value of the
array at positions $h(x,1),\ldots,h(x,\ell)$. The procedure ${\sf
  member}(x)$ computes $h(x,1),\ldots,h(x,\ell)$ and returns 1 if all
these bits are set to 1, and 0 otherwise. The main advantage of Bloom
Filters is their space-efficiency over other classical data structures
for sets. But this efficiency comes at a price: Bloom Filters may
yield false positives: a membership query may possibly return
true for a value that does not belong to the current set due to hash
collisions. Therefore, an adversary may attempt to pollute the Bloom
Filter in order to trigger false positives~\cite{GerbetKL15}. In this
paper, we consider a very simple form of pollution attacks, where an
adversary adaptively performs insertion queries with the goal to set
to 1 a maximal number of bits of the Bloom Filter. Since the adversary
is probabilistic, we use the \emph{expected} number of bits set to 1
as a measure of the adversary's success. Assuming that the Bloom
Filter is initially empty, i.e.\, all bits are set to 0, one can prove
that for \emph{every} adversary $\cA$ making at most $k$ queries to
the insertion oracle, the expected number of bits set to 1 after the
adversary returns is upper bounded by $m \cdot (1-((m-1)/m)^{\ell\cdot
  k})$.

We model this example in a simply typed calculus enriched with graded
monadic type constructors. Concretely, we model adversaries carrying a
pollution attack against a Bloom Filter as computations $\cA$ of type
$ \forall \alpha.  ([n]\to \mon{\alpha,1}{\tunit}) \to \mon{\alpha
  \cup \Sigma, k}{\tunit}$ where $\tunit$ is the unit type and by
abuse of notation we view $[n]$ as a type. The intended argument of
the adversary is the insertion oracle. The monadic type
$\mon{\alpha,1}{\tau}$ should be seen as stateful probabilistic
computations that can read and write to the set of locations $\alpha$
(but not others) and have cost $1$.  Therefore, the grading ensures
that each oracle call has cost 1, and that the adversary can make at
most $k$ calls to the oracle. The universal quantification on $\alpha$
ensures that the adversary can only read and write locations in
$\Sigma$, and that its effect on other memory locations like the
$L[i]$s is only indirect, through calls to its oracle.

We assume that hash functions are implemented as random oracles,
i.e.\, stateful probabilistic functions that lazily sample their
output when queried with a fresh input. The pseudo-code of the
insertion oracle ${\sf insert}$ is deferred to Section~\ref{sec:exl}.
Under this modeling we upper bound the success of pollution attacks
via the judgment:
\[
\bullet \mid \cA: \tau \mid \bullet \mid \bullet \vdash \{ m \cdot (1-((m-1)/m)^{\ell \cdot k})\}~{\cA~{\sf insert}}\colon{\mon{\Sigma\cup\{r,L,h\},k}{\tunit}}~
\left\{\!\left\{\textstyle \sum_{i=0}^{m-1} L[i] \right\}\!\right\}
\]
%$$\cA: \tau \mid \bullet \mid \bullet \vdash \left\{ s \cdot \left(1-\left(\dfrac{s-1}{s}\right)^{l\cdot q}\right)\right\}~{\cA~I}\colon{\mon{\Sigma\cup\{i,L,F\},k}{\tunit}}~
%\left\{\!\left\{\sum_{w} B[w] \right\}\!\right\}$$
%
where $\tau\triangleq\forall \alpha.  ([n]\to \mon{\alpha,1}{\tunit})
\to \mon{\alpha \cup \Sigma, k}{\tunit}$, ${\sf insert}$ is the
insertion oracle, and $\{r,L,h\}$ are the memory locations used
by the oracle.  The adversary, represented by the variable $\cA$,
is declared in a special \emph{adversary context}. The other contexts
for grading variables, standard variables and logical assumptions are
empty (the contexts are explained in Sections~\ref{sec:language}
and~\ref{sec:uhol}). The statement on the right hand side of the
turnstile can be seen as a generalized Hoare triple, given by two
assertions (between curly braces) and a program, as in Hoare Type
Theory~\cite{NanevskiMB08}. We have a generic syntax of judgments and
a generic set of generic inference rules, that can later be
instantiated to different notions of assertions and different
interpretations.  For the specific instantiation used here
(Section~\ref{sec:exl}), the assertions are \emph{quantities} -- maps
from states to the non-negative reals -- that are known as the
pre-expectation (the one on the left) and the post-expectation (on the
right), respectively. The interpretation of such a statement is that
the \emph{expected} value of the post-expectation over the output
distribution of the program is upper bounded by the pre-expectation.

We note that pollution attacks are a very simple example. More
advanced attacks are considered by \citet{NaorY19,ClaytonPS19}, who
develop an elaborate theory of Bloom Filters and probabilistic data
structures under adversarial environments.

\paragraph*{PRF/PRP Switching Lemma}
The PRF/PRP Switching Lemma~\cite{Impagliazzo:1989} is a classical tool
in provable security. Let
$\{0,1\}^l$ denote the set of bitstrings of length
$l$.  The lemma states that the probability of a bounded adversary to 
distinguish between a pseudo-random function (PRF) and a pseudo-random
permutation (PRP) is upper bounded by
$k(k+1)/2^{l+1}$
  where $k$ is the maximal number of calls allowed to the adversary. The
  PRF/PRP Switching Lemma is a popular benchmark for computer-aided
  cryptography, so multiple formalizations are available,
  e.g.~\cite{BartheGZ09}.

  We model the adversary as a computation of type
  $\forall \alpha. (\{0,1\}^l \to \mon{\alpha,1}{\{0,1\}^l} )\to
  \mon{\Sigma\cup\alpha, k}{\{0,1\}}$
  where $\Sigma$ models the private memory of the adversary.
  Similar to the case of Bloom filters, we follow a lazy modeling of
  PRF and PRP. The pseudo-code of PRF and PRP is given below:
\begin{align*}
  {\it PRF} (x_1:\{0,1\}^l) & \defeq
  {\sf if}\ x_1 \notin \mathsf{dom}~L_1\ {\sf then}\ \{z_1=
                                        \unif{\{0,1\}^l};\hspace{-1em}&L_1[x_1] :=
                                        z_1\}; \mathsf{return}~L_1[x_1]\\
  {\it PRP} (x_2:\{0,1\}^l) & \defeq  {\sf if}\ x_2 \notin \mathsf{dom}~L_2\ {\sf then}\ 
  \{ z_2 = \unif{\{0,1\}^l \setminus (\mathsf{im}\ L_2)};\hspace{-1em}&L_2[x_2] := z_2\}; \mathsf{return}~L_2[x_2] 
\end{align*}
We show that for every adversary $\cA$ with the aforementioned type,
the statistical distance between $\cA~{\it PRF}$ and $\cA~{\it PRP}$
is upper bounded by $k(k+1)/2^{l+1}$ using an approximate
relational logic (i.e., a logic that can prove approximations rather
than equalities). 
%The judgment we establish is as follows:
We establish the following judgment:
{\small
\[
\bullet \mid \cA\colon\tau \mid \bullet \mid \bullet \armjnoc{\stl = \str}{\cA~{\it PRF}}{\mon{\Sigma \cup \{ L_1 \},
    k}{\{0,1\}}}{\cA~{\it PRP}}{\mon{\Sigma\cup\{ L_2 \} ,k}{\{0,1\}}}{\stl = \str \wedge
        \vll=\vlr}{k(k+1)/2^{l+1} }
\]
}
where
$\tau\triangleq \forall \alpha. (\{0,1\}^l \to \mon{\alpha,1}{\{0,1\}^l} )\to \mon{\Sigma\cup\alpha, k}{\{0,1\}}$.
This judgment has the following interpretation: if we have two memories
$\stl,\str$ that are equal and we run the computation on the left and the
computation on the right with input memories $\stl$ and $\str$ respectively,
then the output distributions are going to be close, and their statistical
distance is upper bounded by $k(k+1)/2^{l+1}$.
      
Although our proof uses an approximate logic, there is an alternative
proof that uses an exact relational logic, and the Union Bound logic.
The latter proof uses the so-called up-to-bad technique, and defaults
to the union bound logic to prove that the probability of collisions
in a PRF is upper bounded by $k(k+1)/2^{l+1}$. This is
captured in the union bound logic by the judgment:
{\small
\[
\bullet \mid \cA:\tau \mid \bullet \mid \bullet \aumjnoctxt{|dom~L_1|=0}{\cA~{\it
    PRF}}{\mon{\Sigma\cup\{L_1\},k}{\{0,1\}}}{
|dom~L_1|=|im~L_1|}{k(k+1)/2^{l+1}}
\]
}
% where ${\it Empty}$ states that the domain of the PRF is initially empty and ${\it NoColl}$
% states that it does not contain a collision, i.e.\,
% $$\begin{array}{rcl}
%     Empty & \triangleq & L=\emptyset \\
%     NoColl & \triangleq & |\mathsf{dom}\ L| = |\mathsf{im}\ L|
%   \end{array}$$
This specification has the same syntax as the specification of the
pollution attacks, but uses different notions of predicates and a
different intrepretation (but crucially, the same set of inference
rules). The assertions are Boolean predicates, and the interpretation
of this judgment is that if the initial state satisfies
$|dom~L_1|=0$, then the final state satisfies
$|dom~L_1|=|im~L_1|$ with probability $1-\frac{k(k+1)}{2^{l+1}}$. In
other words, the judgment behaves as a Hoare triple that has some
probability of failure.

\section{Language}\label{sec:language}
We consider a core language that models higher-order, stateful,
probabilistic computations over algebraic datatypes.

\paragraph*{Syntax} The language combines the usual constructs of
$\lambda$-calculus and monadic constructs. Monadic computations are
introduced and composed by unit and let. In addition, we have
operations for sampling from a distribution in a set $\cD$ of base
distributions, and for reading or writing at a location
in a set $\loc$ of locations. We also consider a primitive $\sskip$
operation that represents an empty computation (we could also define
$\sskip$ as $\unit{*}$, where $*$ is the sole inhabitant of the unit
type), and a primitive ${\sf mfold}$ for nesting monadic computations
(the reason why we make ${\sf mfold}$ monadic will become apparent in
the next paragraph, when typing is considered). Finally, for technical
reasons that will become apparent when defining the logic, we
distinguish between adversarial variables and standard
variables. Formally, the terms of the language are given by the
following grammar:
\begin{align*}
  t, u ::=~ & x \mid \cA \mid * \mid 0 \mid S~u \mid \lambda x. u \mid
         t\ u
         \mid \pair{t}{u} \mid
         \casebool{t}{u_1}{u_2} \mid
         \prl{t} \mid \prr{t} \mid \\
         & \rd{a} \mid \wrt{a}{u} \mid \sskip \mid \unit{t} \mid \mlet{x}{t}{u} \mid \mfold{t}{u_1}{u_2} \mid
         \sample{\nu}
% \mid \Lambda \alpha. t \mid t[\Sigma] 
%\mid \comon{k}{t} \mid \clet{x}{t}{u}
\end{align*}
where $x$ ranges over variables, $\cA$ ranges over adversary
variables, $a$ ranges over a set $\loc$ of memory locations and $\nu$
ranges over a set $\cD$ of distribution symbols.
We assume that each distribution has arity
$\tau_{\nu,1} \times \dots \times \tau_{\nu,|\nu|} \to \sigma_{\nu}$,
that accounts for the parameters of the distribution.
The meaning of the expressions is standard, except for the monadic fold operation for
naturals, which sequences computations in the monadic step, that is:
\begin{align*}
    \mfold{0}{t}{u} = t \qquad
    \mfold{(S~n)}{t}{u} = \mlet{x}{(\mfold{n}{t}{u})}{u~x}
\end{align*}

%\aag{I can move this paragraph somewhere else if it makes sense}
\paragraph*{Syntactic sugar}
In our examples we use some syntactic sugar to simplify the code. Concretely, we
will write $x = t; u$ instead of $(\lambda x. u)~t$,
$l := t$ instead of ${\sf let}~\_~{\sf in}~\wrt{l}{t}$ (i.e., we do not bind the returned value)
and ${\sf inc}~l$ instead of ${\sf let}~y=\rd{l}~{\sf in}~\wrt{l}{y+1}$, where we assume $y$
is a free variable.

\paragraph*{Effects}
We use a type-and-simple effect system to model the memory footprint
and oracle complexity of computations.  We model the memory footprint
as (an overapproximation of) the set $\Sigma$ of memory locations read
and written by a computation. In addition, our effect system supports
abstract effects and effect polymorphism. These are used essentially
to model adversaries. The grading $k$ tracks how many times an
adversary calls its oracles. For simplicity, we use a single natural
number for tracking oracle calls; however, it is possible to track
oracle calls more finely by having a number per oracle.  Semantically,
effects form an ordered commutative monoid: memory effects are modeled
using $(\PP(\loc\cup \mathcal{R}), \emptyset, \cup, \subseteq)$, where
$\mathcal{R}$ is a set of memory regions, $\PP$ is the powerset operator,
and cost is modeled using $(\NN, 0, +, \leq)$.

% We note that our results are largely independent of the modelling of
% effects. For instance, the modelling of memory footprints can be
% easily generalized to pairs of sets $(\Sigma_R, \Sigma_W)$ that
% distinguish between memory locations that are read, and memory
% locations that are written.  However, our simple-minded treatment
% suffices for our purposes. Similarly, the oracle complexity upper
% bounds the number of oracle calls. For some applications, it may
% be more appropriate to reason about the exact number of calls;
% it is easy to cater for this by substituting $=$ for $\leq$ as a
% partial order relation on $\NN^d$. One can also consider calls
% with infinite cost by replacing natural numbers by extended
% natural numbers.

% OLD TEXT
% The order on oracle complexity shall have the effect that the grading
% is an upper, rather than exact, bound on the number of calls. This is
% in line with many applications of the counting monad, but is
% occasionally cumbersome for specific examples. However, we can change
% the order of the monoid to a flat order, which would disallow this
% form of subtyping and therefore give us exact cost.

\paragraph*{Types}
Our language is essentially simply typed. As base types we consider the
unit type $\tunit$, booleans $\tbool$, and natural numbers, which are
indexed by either a constant natural number $K$ or by infinity, to
indicate an upper bound on the inhabitants of the type. We will simply
write $\nat$ for $\nat[\infty]$. 
%We similarly consider lists indexed
%by an upper bound on their length.
On top of this we consider extended computations, which are given a
type $\mon{\Sigma,k}{\tau}$.  Here, $\tau$ is the return type, and
${\Sigma, k}$ is a grading that accounts for the memory effect and
cost of the computation. We assume all locations in memory contain
the same type $\tval$. We keep this abstract in
the current presentation, but we will instantiate it to a concrete type (e.g. $\NN,\BB,\dots$)
in the examples. Finally, we include a type $\tmem$ for memories.
These cannot be explicitly manipulated in the language, but are used
in specifications, see later in the section.

Formally, the set of types is given by the following syntax:
\begin{align*}
  \tau, \sigma & ::= B\mid \tbool \mid \nat[K] \mid \tmem\mid \tunit \mid \tval \mid \tau \to \sigma \mid \tau \times \sigma
%\mid \tau \multimap \sigma \mid !_k \tau
\mid \mon{\Sigma,k}{\tau}
\mid \forall \alpha. \tau 
\end{align*}
where $B$ ranges over a set of base types, $K$ ranges over natural numbers and
the expression $\Sigma$ is built from region variables and memory locations.
Note that bounded natural types $\nat[K]$ are used to compute the grading of
monadic folds.

\begin{figure*}[!t]
\small
\begin{gather*}
%% \inferrule*[right =\sf Var]
%% { (x:\tau) \in\Gamma}
%%   {\Xi; \Delta ; \Gamma\vdash x : \tau}
%% \qquad
%% %
%% \inferrule*[right =\sf Abs]
%%   {\Xi; \Delta ; \Gamma, x : \tau \vdash t : \sigma}
%%   {\Xi; \Delta ; \Gamma \vdash \lambda x.t : \tau \to \sigma}
%% \qquad
%% %
%% \inferrule*[right =\sf App]
%%   {\Xi;\Delta; \Gamma \vdash t_1 : \tau \to \sigma \\ \Xi;\Delta; \Gamma \vdash t_2 : \tau}
%%   {\Xi;\Delta; \Gamma \vdash t_1\ t_2 : \sigma}
%% \\
\inferrule*[right=\sf Star]
{ }
{ \Xi;\Delta; \Gamma \vdash \star \colon \tunit}
\qquad
\inferrule*[right=\sf Zero]
{ }
{ \Xi;\Delta; \Gamma \vdash 0 \colon \NN[0]}
\qquad
\inferrule*[right=\sf Succ]
{ \Xi;\Delta; \Gamma \vdash t \colon \NN[K]}
{ \Xi;\Delta; \Gamma \vdash S~t \colon \NN[K+1]}
\qquad
\\
%% \inferrule*[right =\sf Case]
%%   {\Xi;\Delta; \Gamma \vdash b : \BB \\ \Xi;\Delta; \Gamma \vdash t_1 : \tau \\ \Xi;\Delta; \Gamma \vdash t_2 : \tau}
%%   {\Xi;\Delta; \Gamma \vdash \casebool{b}{t_1}{t_2} : \tau}
%% \qquad
%% \inferrule*[right =\sf Proj]
%% {\Xi;\Delta; \Gamma \vdash t : \tau_1 \times \tau_2}
%% {\Xi;\Delta; \Gamma \vdash \pi_i(t) : \tau_i  }
%% \\
%% \inferrule*[right =\sf Pair]
%% {\Xi;\Delta; \Gamma \vdash t : \tau \\ \Xi;\Delta; \Gamma \vdash u : \sigma}
%% {\Xi;\Delta; \Gamma \vdash \pair{t}{u} : \tau \times \sigma  }
%% \qquad
%\inferrule*[right =\sf Zero]
%{\ }
%{\Delta; \Gamma \vdash 0 : \NN[0] }
%\qquad
%\inferrule*[right =\sf Succ]
%{\Delta; \Gamma \vdash t : \NN[\kappa]}
%{\Delta; \Gamma \vdash S(t) : \NN[\kappa+1] }
%\\
\inferrule*[right =\sf Fold]
{\Delta; \Gamma \vdash t : \NN[K] \\ \Delta; \Gamma \vdash u_1 : \mon{\Sigma,k}{\tau} \\ \Delta; \Gamma \vdash u_2 : \tau \to \mon{\Sigma',k'}{\tau}}
  {\Delta; \Gamma \vdash \mfold{t}{u_1}{u_2} : \mon{\Sigma\cup\Sigma',k+K\cdot k'}{\tau}}
%\\
%\inferrule*[right =\sf Case]
%{\Delta; \Gamma \vdash u : \BB \\ \Delta; \Gamma \vdash t_1 : \tau \\ \Delta; \Gamma \vdash t_2 : \tau }
%  {\Delta; \Gamma \vdash \casebool{u}{t_1}{t_2} : \tau}
%\\
%\inferrule*[right =\sf MRec]
%{\Delta; \Gamma \vdash u : \NN[\iota] \\ \Delta; \Gamma \vdash t_1 : \mon{\kappa}{\tau} \\ \Delta; \Gamma, x:\tau \vdash t_2 : \mon{\eta}{\tau}}
%{\Delta; \Gamma \vdash \mrec{u}{t_1}{x.t_2} : \mon{\kappa + \iota \eta}{\tau}}
%\\
%%
\\
\inferrule*[right =\sf Unit]
  {\Xi;\Delta; \Gamma \vdash t : \tau}
  {\Xi;\Delta; \Gamma \vdash \unit{t}: \mon{\emptyset,0}{\tau}}
\qquad
\inferrule*[right =\sf Bind]
{\Xi;\Delta; \Gamma \vdash t_1 : \mon{\Sigma_1,k_1}{\tau} \\ 
\Xi;\Delta; \Gamma, x : \tau \vdash t_2 : \mon{\Sigma_2,k_2}{\sigma}}
{\Xi;\Delta; \Gamma \vdash \mlet{x}{t_1}{t_2} : \mon{\Sigma_1 \cup \Sigma_2 \cup \eff{\tau}, k_1+k_2}{\sigma}}
\\
%\\
 \inferrule*[right =\sf Read]
  {a \in \loc %\qquad \Xi;\Delta; \Gamma \vdash t:\nat[|a|]
  }
  {\Xi;\Delta; \Gamma \vdash \rd{a} : \mon{\{a\},0}{\tval}}
\quad
\inferrule*[right =\sf Write]
%{\Xi;\Delta; \Gamma \vdash t : \nat[|a|]  \qquad 
{\Xi;\Delta; \Gamma \vdash u:\tval \\  a \in \loc }
 {\Xi;\Delta; \Gamma \vdash \wrt{a}{u} : \mon{\{a\},0}{\tunit}}
 \quad
\inferrule*[right =\sf Skip]
{\ }
{\Xi; \Delta; \Gamma \vdash \sskip : \mon{\emptyset,0}{\tunit}}
 \\
\inferrule*[right =\sf Sample]
{\Xi; \Delta; \Gamma \vdash t_i \colon \tau_{\nu,i} \qquad (\forall 1\leq i\leq |\nu|) }
{\Xi; \Delta; \Gamma \vdash \sample{\nu(t_1,\dots,t_{|\nu|})} : \mon{\emptyset,0}{\sigma_\nu}}
\qquad
%\quad
%\inferrule*[right =\sf Tick]
%{\ }
%{\Delta; \Gamma \vdash \tick : \mon{1}{\tunit}}
%\\
\inferrule*[right =\sf Subtype]
{\Xi; \Delta; \Gamma \vdash t : \tau' \\ \Xi \vdash \tau' \preceq \tau }
{\Xi; \Delta; \Gamma \vdash t : \tau}
\\
\inferrule*[right =\sf ForAll-I]
{\Xi, \alpha;\Delta; \Gamma \vdash t : \tau \\ \alpha \not\in FV(\Delta; \Gamma)}
{\Xi;\Delta; \Gamma  \vdash t : \forall \alpha. \tau}
\qquad
\inferrule*[right =\sf ForAll-E]
{\Xi;\Delta; \Gamma \vdash t : \forall\alpha.\tau \\ 
\Sigma \in \PP(\loc) }
{\Xi;\Delta; \Gamma \vdash t : \tau\subst{\alpha}{\Sigma}}
\\
\inferrule*[right =\sf Adv]
{(\cA: \forall \alpha. (\sigma \to
  \mon{\alpha,k}{\tau})  \to \mon{\alpha \cup \Sigma, k'}{\tau'}) \in \Delta
\\ \Xi, \alpha;\Delta; \Gamma \vdash t : \sigma \to
  \mon{\Sigma',k}{\tau}} {\Xi;\Delta; \Gamma \vdash \cA~t:
  \mon{\Sigma \cup \Sigma',
    k'}{\tau'}}
\\
\inferrule*[right =\sf Adv-Inst] {(\cA: \forall \alpha. (\sigma \to
  \mon{\alpha,k}{\tau})  \to \mon{\alpha \cup \Sigma, k'}{\tau'}) \in
  \Delta \\
  \Xi;\Delta ; \Gamma  \vdash t: \mon{\Sigma_1, k_1}{\tau'} \\
  \Xi;\Delta \backslash \cA; \bullet \vdash u: \forall \alpha. (\sigma \to
  \mon{\alpha,k}{\tau})    \to \mon{\alpha \cup \Sigma, k'}{\tau'}}{\Xi;\Delta \backslash \cA;\Gamma \vdash t[u/\cA]: \mon{\Sigma_1, k_1}{\tau'} }
\end{gather*}
\caption{Selected typing rules}\label{fig:typing}
\end{figure*}

\paragraph*{Type system}
A typing judgment $\Xi ; \Delta; \Gamma \vdash t : \tau$ is a relation
between contexts, terms and types.  Contexts are triples of the form
$\Xi ; \Delta; \Gamma$, where $\Xi$ is a grading context, $\Delta$ is
an adversary context, and $\Gamma$ is a variable context. A grading
context $\Xi$ is a collection of variables $\alpha, \beta, \cdots$
representing the memory regions manipulated by the computation.  The
adversary and variable contexts are functions from a finite set of
variables (adversary and standard, respectively).  For a context
$\Gamma$ and $n$ distinct variables $x_i$ such that
$x_i\not\in\dom\Gamma$, by $\Gamma,x_1:\tau_1,\cdots,x_n:\tau_n$ we
mean the context obtained by extending $\Gamma$ with
$x_1:\tau_1\cdots x_n:\tau_n$.  We use a similar notation for
$\Delta$.

Typing rules are presented in Figure~\ref{fig:typing}. Many rules are
standard so we focus on the remaining rules.
The read and the write rules assume that locations store values of
type $\tval$. The effect of a read or write is the location itself.
The unit and bind rules act on the grading as the unit and multiplication
of the monoid from which the gradings are taken, but the bind rule also
adds the effect $\eff{\tau}$ of the type $\tau$ encapsulated by the monad.
This will be important for proving soundness of the adversary rules, and it is
defined as:
\begin{gather*}
    \eff{B}\triangleq \emptyset \qquad \eff{\tau\to\sigma} \triangleq \eff{\sigma}
\qquad \eff{\mon{\Sigma,k}{\tau}} \triangleq \Sigma\cup\eff{\tau} \qquad \eff{\forall \alpha. \tau} = \eff{\tau\subst{\alpha}{\emptyset}}
\end{gather*}

Monadic fold (for natural numbers) defines an iterator. It receives a natural
number bounded by $K$, a computation $u_1$ with cost $k$ for the zero case,
and a computation $u_2$ with cost $k$ for the successor case. The operational
semantics imply that $u_2$ will be run at most $K$ times, and $u_1$ will be
run exactly once, so we can give a bound $K\cdot k' + k$ on the total cost.
The rules for quantifier introduction and elimination are a
lightweight version of effect polymorphism. We can quantify over any
grading in $\Xi$ that does not appear free in $\Gamma$ and $\Delta$,
and we can instantiate a quantifier to any concrete memory
region. Since this does not actually have any computational content,
we choose to not reflect these rules in the term.

The adversary rule allows applying an adversary variable to an
expression with matching type. The instantiation rule for adversary
variables substituting an adversary variable by a \emph{closed}
expression of the same type. This is the only distinction between
standard and adversary variables -- adversary variables represent
closed expressions, while standard variables represent arbitrary
expressions. The reason for making this distinction will become clear
when we describe logics.

Types are ordered by subtyping $\tau' \preceq \tau$, which is used in
the \rname{Subtype} rule. Subtyping is mostly standard. On the type
$\mon{\Sigma ,k}{\tau}$, subtyping allows increasing $k$, $\Sigma$ and
weakening $\tau$. The rules for subtyping are presented in
Figure~\ref{fig:subtyping} in Appendix~\ref{ap:typing-rules}.

\paragraph*{Expressions about memories}

In previous work~\cite{ABGGS17}, the terms appearing in logical assertions and the
terms (i.e., the programs) they specify about are derived from the same grammar.  In the
current setting, program specifications contain distinguished variables
representing the state, because they need to be able to refer explicitly to
initial or final states and their contents, but we do not want programs to have
this capability. Therefore, terms appearing in logical assertions will be
derived from a grammar that extends the grammar of programs: 
\[ \tilde{t},\tilde{u} ::= \ldots \mid \tilde{t}[a] \mid \tilde{t}[a\mapsto \tilde{u}] \] 
Here, $\tilde{t}[a]$ denotes the contents of state $\tilde{t}$ at location $a$, $\tilde{t}[a\mapsto \tilde{u}]$ denotes the state resulting by
replacing the contents of location $a$ in $\tilde{t}$ by $\tilde{u}$, and the ellipsis contains
all the other term constructors.
Figure~\ref{fig:typ-mem}
in Appendix~\ref{ap:typing-rules} presents the (obvious) typing rules for these
new constructs.

\section{Higher-Order Unary Logics}
\label{sec:uhol}

%\dg{GB's note: There is a mismatch between the language and logics:
%  The language uses offsets on locations to model arrays. The logics
%  need to do this too.}

In this section, we describe two program logics for our
language. Both use the same syntactic proof rules derived from
a common template, but they differ significantly in their
semantics and apply to very different verification problems. The first
one is a higher-order Union Bound Logic, in the line
of~\cite{BartheGGHS16}. This logic allows proving postconditions (for
probabilistic computations) that may not hold with an explicit
``error'' probability $\delta$. The second one is a higher-order
expectation logic, in the line
of~\cite{Kozen85,Morgan96,BartheEGHS18}. Instead of specifying
programs with qualitative assertions (that can either be true or false), this logic uses
\emph{quantitative} assertions ranging over the non-negative reals.
The logic can be used to prove bounds on the expected values of quantitative
postconditions.  Both logics have adversaries and state.

\subsection{Higher-order Union Bound Logic}
\label{sec:ubl}

The syntax of our union bound logic (HO-UBL) is shown
below. \emph{Propositions} $\phi, \psi$ are standard (intuitionistic)
HOL formulas over terms. Quantifiers range over these terms.  In
contrast, \emph{assertions} $P,Q$ denote pre- and post-conditions that
relate the language's expressions \emph{and} the current heap state.
$R$ and $f$ denote atomic propositions and atomic assertions,
respectively. $\inj{\phi}$ is an injection of propositions into
assertions. The connectives $\sqcap$ and $\sqcup$ respectively denote
conjunction ($\wedge$) and disjunction ($\vee$) at the level of
assertions.%
\footnote{There is a reason for using different symbols for these
  connectives in propositions and assertions: In the expectation logic
  (Section~\ref{sec:exl}), we want to reuse the same syntax, but give
  assertions \emph{quantitative} interpretations while retaining the
  Boolean interpretations for propositions. Using different symbols
  for the connectives prevents confusion there.}
\footnote{We can add quantifiers $\forall, \exists$ to assertions, but
  we elide them here. Our examples only use \emph{finite}
  quantification in assertions, which can be encoded using $\sqcap$
  and $\sqcup$.}
To refer to the state, assertions $P, Q$ may contain a distinguished
variable $\st$, which stands for the current state. Similarly,
propositions $\phi, \psi$ can contain a distinguished variable $\vl$
that stands for (the value of) the term being verified.

\[\begin{array}{@{}llll}
\mbox{Propositions} & \phi, \psi & ::= & R(t_1, \dots, t_n) \mid \top \mid \bot \mid \phi \wedge \psi \mid \phi \vee \psi \mid \phi \To \psi \mid \neg \phi \mid \forall x : \sigma. \phi \mid
 \exists x:\sigma. \phi \\
	\mbox{Assertions} & P, Q & ::= & f(\tilde{t_1},\dots,\tilde{t_n}) \mid \top \mid \bot \mid \inj{\phi} \mid P \sqcup Q \mid P \sqcap Q \\
 \mbox{Assumptions} & \Psi & ::= & \bullet \mid \Psi, \psi \\ 
 \mbox{Judgments} & \multicolumn{3}{l}{
   \Xi \mid \Delta \mid \jhol{\Gamma}{\Psi}{\phi}
   } \\
 & \multicolumn{3}{l}{
   \Xi \mid \Delta \mid \jhol{\Gamma}{\Psi}{P \jimp Q}
   } \\
 &
 \multicolumn{3}{l}{
   \Xi \mid \Delta \mid \juhol{\Gamma}{\Psi}{t}{\sigma}{\phi}
 }  \\
  & \multicolumn{3}{l}{
   \Xi \mid \Delta \mid \aumj{\Gamma}{\Psi}{P}{t}{\mon{\Sigma,k}{\sigma}}{Q}{\delta}
 } 
\end{array}\]

The logic has four judgments that rely on four contexts---$\Xi$,
$\Delta$ and $\Gamma$ that were described earlier---and the new
context $\Psi$, which contains logical assumptions (propositions)
ranging over the variables in $\Delta$ and $\Gamma$. Since most of our
rules do not modify or read the contexts and $\Xi$ and $\Delta$, we
omit them from most of the discussion below. This simplifies the
judgments, e.g., we write $\jhol{\Gamma}{\Psi}{\phi}$ instead of $\Xi
\mid \Delta \mid \jhol{\Gamma}{\Psi}{\phi}$.

The first judgment $\jhol{\Gamma}{\Psi}{\phi}$ is HOL's standard
entailment judgment. It means that the proposition $\psi$ holds for
all typed instantiations of the variables in $\Gamma$ satisfying all
propositions in $\Psi$. The second judgment $\jhol{\Gamma}{\Psi}{P
  \jimp Q}$ is entailment of assertions; it means that the assertion
$P$ entails the assertion $Q$ for all typed instantiations of $\Gamma$
satisfying $\Psi$.

The third judgment $\juhol{\Gamma}{\Psi}{t}{\sigma}{\phi}$ means that
for all typed instantiations of $\Gamma$ satisfying $\Psi$, the term
$t$ (of type $\sigma$) satisfies $\phi\subst{\vl}{t}$. (Recall that
$\vl$ is a distinguished variable.) In other words, the judgment
specifies a property $\phi$ of the term $t$ being verified. This
judgment's proof rules are directed by the syntax of $t$ and are taken
as-is from the prior logic UHOL~\cite{ABGGS17}. The work on UHOL also
shows that these syntax-directed rules are sound and complete relative
to HOL: $\juhol{\Gamma}{\Psi}{t}{\sigma}{\phi}$ iff
$\jhol{\Gamma}{\Psi}{\phi\subst{\vl}{t}}$. We reproduce these rules in
Appendix~\ref{ap:proof-rules}.

The fourth judgment
$\aumj{\Gamma}{\Psi}{P}{t}{\mon{\Sigma,k}{\sigma}}{Q}{\delta}$ is new
to our logic. It specifies a pre-condition $P$ and a post-condition
$Q$ for a \emph{monadic computation} $t$ of type
$\mon{\Sigma,k}{\sigma}$. Recall that a monadic computation of this
type is stateful and probabilistic --- it takes a state and produces a
\emph{distribution} on results of type $\sigma$ and final states. The
judgment means that, for any instantiation of $\Gamma$ satisfying all
propositions in $\Psi$, starting the execution of $t$ in any state $m$
that satisfies $P\subst{\st}{m}$, the final state $m'$ and result $t'$
(of type $\sigma$) satisfy $Q\subst{\st}{m'}\subst{\vl}{t'}$ \emph{with
  probability at least} $1-\delta$. Additionally, only locations in
the set $\Sigma$ are modified. In other words, the judgment represents
a standard Hoare-triple for stateful computations, but with a small
twist: the postcondition may not hold with an error probability
$\delta$. The semantics of this judgment is defined by lifting
standard Hoare triples to distributions (Section~\ref{sec:st-gr-lift}).

Formally, the pre-condition $P$ can contain the free variable $\st:
\tmem$ (where $\tmem$ is the type of memories), while $Q$ can contain
the variables $\st : \tmem$ and $\vl: \sigma$. Additionally, both may
mention variables from $\Gamma$ and the elided context $\Delta$. The
restriction that only locations in $\Sigma$ be modified during $t$'s
reduction is needed for handling adversaries as we explain soon. The
judgment also does not make use of the grade $k$ in the type
$\mon{\Sigma,k}{\sigma}$; this grade is also used for handling
adversaries.

\begin{figure*}[!htb]
\small
\begin{gather*}
%r
%
\inferrule*[right =\sf\scriptsize MLET-U]
{\aumj{\Gamma}{\Psi}{P}{t}{\mon{\Sigma,k}{\tau}}{Q}{\delta} \\
\aumj{\Gamma, x : \tau}{\Psi}{Q\subst{\vl}{x}}{u}{\mon{\Sigma',k'}{\sigma}}{R}{\delta'} \\  x \not\in R }
{\aumj{\Gamma}{\Psi}{P}
{\mlet{x}{t}{u}}{\mon{\Sigma\cup\Sigma'\cup\eff{\tau}, k+k'}{\sigma}}{R}{\delta+\delta'}}
\\[0.3em]
 \inferrule*[right =\sf\scriptsize UNIT-U]
  {\uj{\Gamma}{\Psi}{t}{\tau}{\phi} \\
  \holwf{\Gamma, \st : \tmem}{P}}
  {\aumj{\Gamma}{\Psi}{P}
  {\unit{t}}{\mon{\emptyset,0}{\tau}}{\inj{\phi} \sqcap P}{0}}
\qquad
\inferrule*[right =\sf\scriptsize READ-U]
{~}%\Xi;\Gamma \vdash a \colon \loc
{\aumj{\Gamma}{\Psi}{P\subst{\vl}{\st[a]}}
{\rd{a}}{\mon{\{a\},0}{\tval}}{P}{0}}
\\[0.3 em]
\inferrule*[right =\sf\scriptsize WRITE-U]
{%\Xi;\Gamma \vdash a : \loc \\
 \Xi;\Gamma \vdash t : \tval  }
{\aumj{\Gamma}{\Psi}{P\subst{\st}{\st[a\mapsto t]}}
{\wrt{a}{t}}{\mon{\{a\},0}{\tunit}}{P}{0}}
\\[0.3em]
\inferrule*[right =\sf\scriptsize MCASE-U]
{\Gamma \vdash b : \bool \\
    \aumj{\Gamma}{\Psi, b=\ttrue}{P_1}{t_1}{\mon{\Sigma,k}{\tau}}{Q}{\delta}\\
\aumj{\Gamma}{\Psi,  b=\ffalse}{P_2}{t_2}{\mon{\Sigma,k}{\tau}}{Q}{\delta}}
{\aumj{\Gamma}{\Psi}{\inj{b=\ttrue} \sqcap P_1) \sqcup (\inj{b=\ffalse} \sqcap P_2)}
{\casebool{b}{t_1}{t_2}}{\mon{\Sigma,k}{\tau}}{Q}{\delta}}
\\[0.3em]
\inferrule*[right =\sf\scriptsize MFOLD-U]
{\Gamma \vdash n : \nat[K] \\
    \aumj{\Gamma}{\Psi \wedge n=0}{P}{t_1}{\mon{\Sigma,k}{\tau}}{Q}{\delta}\\
\aumj{\Gamma, x\colon \tau}{\Psi \wedge n \neq 0}{Q}{t_2}{\mon{\Sigma',k'}{\tau}}{Q}{\delta'}}
{\aumj{\Gamma}{\Psi}{P}
{\mfold{n}{t_1}{(\lambda x. t_2)}}{\mon{\Sigma\cup\Sigma',k + K\cdot k'}{\tau}}{Q}{\delta + K\cdot \delta'}}
%\\
%\inferrule*[right= \sf ADV-U]
%{\cdot \vdash \cA :\forall \alpha. (\sigma \to \cT_{\alpha,1}(\tau)) \to \cT_{\Sigma \cup \alpha, k}(\tau') \\
%    \aumj{ x : \sigma}{\Psi}{\phi}{t}{\cT_{\Sigma',1}(\tau)}{\phi}{\delta} \\
%    \holwf{\st:M}{\phi} \\
% \phi \in {\sf Safe}(\Sigma) \\
% \sigma,\tau,\tau'\ \text{non-monadic types}
%}
%{\aumj{\cdot}{\Psi}{\phi}{\cA[\Sigma'](\lambda x. t)}{\cT_{\Sigma\cup\Sigma',k}(\tau)}{\phi}{k\cdot\delta}}
\end{gather*}
%\vspace{-0.4cm}
\caption{Monadic proof rules of our higher-order union bound
  logic. These rules are reused for the higher-order expectation
  logic with a different interpretation of $\inj{\phi}$, $\sqcap$, and
  $\sqcup$.}\label{fig:muhol}
\end{figure*}
\paragraph{Monadic rules}
Figure~\ref{fig:muhol} presents the main rules of the fourth
judgment. As before, we omit the contexts $\Xi$ and $\Delta$; these
transfer unchanged from the conclusion to the premises in all
rules. All our rules are syntax-directed. The rule \textsf{UNIT-U}
applies to the term $\unit{t}$, which returns the term $t$ without
modifying the state with probability $1$. The rule just restates this
differently. Formally, if $t$ satisfies $\phi$ (first premise), then
executing $\unit{t}$ from a state satisfying $P$ results in a state
and return term that satisfy $\inj{\phi} \sqcap P$. The probability of
this \emph{not} happening is $0$. (The premise $\holwf{\Gamma, \st :
  \tmem}{P}$ just means that the assertion $P$ is a well-formed
predicate over the typed variables in $\Gamma, \st:\tmem$.)

The rule \textsf{MLET-U} for the monadic bind is a generalization of
the usual sequencing rule of Hoare logic. The error probabilities
$\delta$ and $\delta'$ are summed in the conclusion. This is easy to
see: From the first premise, with probability at least $1-\delta$, the
postcondition $Q$ of $t$ holds and, hence, from the second premise,
with probability at least $(1-\delta)-\delta'$, the postcondition $R$
holds. Hence, the error probability is at most
$\delta+\delta'$.

The rules \textsf{READ-U} and \textsf{WRITE-U} propagate heap changes
backwards, as in standard Hoare logic. We also have the rule
\textsf{MCASE-U} for conditionals of monadic type. Again, this rule
follows the rule for conditionals in Hoare logic.
The rule \textsf{MFOLD-U} applies to $\mfold{n}{t_1}{(\lambda
  x. t_2)}$. Here, $K$ is a bound on the number of iterations (first
premise). The error probability in the conclusion is the error
probability $k'$ of the iteration's body scaled by $K$. 

\begin{figure*}
\small
\begin{gather*}
\inferrule*[right =\sf\scriptsize CONSEQ-U]
{\jhol{\Gamma}{\Psi}{P \jimp P'}\\
\aumj{\Gamma}{\Psi}{P'}
{t}{\mon{\Sigma,k}{\tau}}{Q'}{\delta'}\\
\jhol{\Gamma}{\Psi}{Q' \jimp Q}\\
    \delta'\leq\delta
}
{\aumj{\Gamma}{\Psi}{P}
{t}{\mon{\Sigma,k}{\tau}}{Q}{\delta}}
\\[0.3em]
\inferrule*[right =\sf\scriptsize OR-PRE-U]
{\aumj{\Gamma}{\Psi}{P}
{t}{\mon{\Sigma,k}{\tau}}{Q}{\delta} \\
\aumj{\Gamma}{\Psi}{P'}
{t}{\mon{\Sigma,k}{\tau}}{Q}{\delta}
}
{\aumj{\Gamma}{\Psi}{P \sqcup P'}
{t}{\mon{\Sigma,k}{\tau}}{Q}{\delta}}
\\[0.3em]
\inferrule*[right =\sf\scriptsize AND-POST-U]
{\aumj{\Gamma}{\Psi}{P}
{t}{\mon{\Sigma,k}{\tau}}{Q}{\delta}\\
\aumj{\Gamma}{\Psi}{P}
{t}{\mon{\Sigma,k}{\tau}}{Q'}{\delta'}
}
{\aumj{\Gamma}{\Psi}{P}
{t}{\mon{\Sigma,k}{\tau}}{Q \sqcap Q'}{\delta + \delta'}}
%\\[0.3em]
%\inferrule*[right =\sf SWITCH-U]
%{\aumj{\Gamma}{\Psi}{P}
%{t}{\cT_{\Sigma,k}(\tau)}{Q}{\delta'}}
%{\juhol{\Gamma}{\Psi}{t}{\cT_{\Sigma,k}(\tau)}{\htriple{\tau}{\delta}{P}{\res}{Q}}
%}\;
%\inferrule*[right =\sf UNSWITCH-U]
%{\juhol{\Gamma}{\Psi}{t}{\cT_{\Sigma,k}(\tau)}{\htriple{\tau}{\delta}{P}{\res}{Q}}}
%{\aumj{\Gamma}{\Psi}{P}
%{t}{\cT_{\Sigma,k}(\tau)}{Q}{\delta'}}
\end{gather*}
\vspace{-0.4cm}
\caption{Selected structural rules of the higher-order union bound
  logic. These rules are reused for the higher-order expectation logic
  with a different interpretation of $\sqcap$, $\sqcup$ and
  $\jimp$.}\label{fig:str-muhol}
\end{figure*}

\paragraph{Structural rules}
Figure~\ref{fig:str-muhol} shows selected structural rules of our
logic. The rule of consequence, \textsf{CONSEQ-U}, allows weakening
postconditions and error probabilities, and strengthening
preconditions. The rule \textsf{OR-PRE-U} allows case analysis in the
precondition. Finally, the rule \textsf{AND-POST-U} allows splitting a
conjunction in the postcondition. Note that in this case, the error
probability $\delta+\delta'$ is the sum of the error probabilities of
the two conjuncts. This sum is a standard union bound on (error)
probabilities, which explains the name of our logic.

\paragraph{Rules for monadic primitives}
Additionally, we include rules for monadic primitives that we use in
examples. For instance, the following rule is used for typing the term
$\unif{\sigma}$, which samples a value uniformly from the
\emph{finite} type $\sigma$. The sampling does not change the state,
so the rule copies the precondition $P$ to the postcondition. The
sampled value additionally satisfies any predicate $\phi$ of
cardinality $N$ with probability $1 - N/|\sigma|$.
\[\small
\inferrule*[right =\sf\scriptsize SAMPLE-UBL]
    { \Gamma \mid \Psi \vdash P \Rrightarrow (|\{x\in \sigma \mid x \in \phi \}|\,/\,|\sigma|\, =\, \delta)  }
  {\aumj{\Gamma}{\Psi}{P}
  {\unif{\sigma}}{\mon{\emptyset,0}{\sigma}}{P\wedge \phi}{1-\delta}}
\]

%\begin{figure}[t]
%\small
%\begin{gather*}
%  \inferrule*
%      [right= \sf ADV-U]
%      {
%        (\cA :\forall \alpha. (\sigma \to \mon{\alpha,1}{\tau}) \to \mon{\Sigma \cup \alpha, k}{\tau'}) \in \Delta \\
%        \Delta \mid \aumj{ x : \sigma}{\Psi}{P}{t}{\mon{\Sigma',1}{\tau}}{P}{\delta} \quad(x \notin \Psi, P)\\
%        \Delta \mid \holwf{\st:\tmem}{P} \\
%        P \in {\sf Safe}(\Sigma) \\
%        \sigma,\tau,\tau'\ \text{non-monadic types}
%      }
%      {
%        \Delta \mid \aumj{\bullet}{\Psi}{P}{\cA\ (\lambda x. t)}{\mon{\Sigma\cup\Sigma',k}{\tau}}{P}{k\cdot\delta}
%      }
%\end{gather*}
%\vspace{-0.4cm}
%\caption{Adversary rule}
%\label{fig:adv-muhol}
%\end{figure}

\subsection*{Adversary rule}
In security applications, one often wants to prove properties of
``adversarial'' code, of which very little is known
statically. Typically, one may know or assume that the adversarial
code is closed, has a specific simple type and that it has a certain
bounded complexity, but not much else. Verification of such
unknown code, unsurprisingly, relies on parametricity properties
of the language. To this end, we need proof rules that internalize
parametric reasoning into the logic. Below we show
one such rule, \textsf{ADV-U}, which suffices for our examples. We
first explain the rule informally and then give more formal details:
\[\small
  \inferrule*
      [right= \sf\scriptsize ADV-U]
      {
        (\cA :\forall \alpha. (\sigma \to \mon{\alpha,1}{\tau}) \to \mon{\Sigma \cup \alpha, k}{\tau'}) \in \Delta \\
        \Delta \mid \aumj{ x : \sigma}{\Psi}{P}{t}{\mon{\Sigma',1}{\tau}}{P}{\delta} \quad(x \notin \Psi, P)\\
        \Delta \mid \holwf{\st:\tmem}{P} \\
        P \in {\sf Safe}(\Sigma) \\
        \sigma,\tau,\tau'\ \text{non-monadic types}
      }
      {
        \Delta \mid \aumj{\bullet}{\Psi}{P}{\cA\ (\lambda x. t)}{\mon{\Sigma\cup\Sigma',k}{\tau}}{P}{k\cdot\delta}
}
\]

\paragraph{Informal explanation}
Informally, \textsf{ADV-U} says the following. Suppose:
\begin{itemize}
\item[-] $\cA$ is an arbitrary (adversarial) \emph{closed second-order
  program} whose side effects are limited to the locations in
  $\Sigma$, and that uses its argument at most $k$ times (first
  premise),
\item[-] $\lambda x.t$ is an argument for $\cA$ such that $t$
  \emph{preserves} the assertion $P$ on memories, except with
  probability $\delta$ (second premise), and
\item[-] $P$ does not depend on the values in any locations in
  $\Sigma$ (premise $P \in {\sf Safe}(\Sigma)$, which is defined
  formally later).
\end{itemize}
Then, $\cA$ applied to $(\lambda x.t)$ preserves the assertion $P$
except with probability $k \cdot \delta$.

We can easily see why the conclusion holds. First, since $P$ depends
only on values of locations outside $\Sigma$, to violate $P$, $\cA$
must modify locations outside $\Sigma$. Next, the only way $\cA$ can
even hope to modify variables outside $\Sigma$ is by invoking its
argument $(\lambda x.t)$. This is because $\cA$'s own effects are
limited to $\Sigma$ and it is closed, so it cannot get access to other
effects due to additional substitutions. Hence, the only way for $\cA$
to violate $P$ is by invoking $\lambda x.t$. However, $t$ violates $P$
with probability at most $\delta$ and $\cA$ cannot apply $\lambda x.t$
more than $k$ times. Hence, by a straightforward union bound, $\cA$'s
chances of violating $P$ are bounded by $k \cdot \delta$, which is
exactly the conclusion.

The remarkable aspect of the rule is how little it assumes about the
adversarial expression $\cA$ -- just that $\cA$ is closed, that it
uses its argument at most $k$ times and that its side-effects are
limited to $\Sigma$. The derived conclusion -- that $P$ is preserved
except with probability $k \cdot \delta$ -- holds for \emph{any}
closed, simply typed substitution for the variable $\cA$. This is what
makes this rule very powerful and useful. For example, in security
applications, $\cA$ can model an arbitrary, unknown ``adversary'' of
bounded complexity $k$ that is given a known ``oracle'' as
argument. The rule then proves properties of any instance of the
adversary applied to a given oracle ($\lambda x.t$), without having to
verify the adversary.

\paragraph{Formal notes}
The type of $\cA$ (first premise) ensures that its argument incurs an
effect of at least $1$ unit at each use, and that the total effect of
$\cA$ is $k$. Hence, $\cA$ cannot use its argument more than $k$
times. Further, the rule insists that $\cA$ exist in $\Delta$, not
$\Gamma$. This ensures that $\cA$ represents a closed term. Finally,
the quantification over the effect set $\alpha$ ensures that $\cA$
\emph{itself} writes only to the locations in $\Sigma$.

The condition $P \in {\sf Safe}(\Sigma)$ is formally defined as
$\forall m_1\, m_2: \tmem. \; (\forall a \not\in \Sigma.\; m_1[a] =
m_2[a]) \To P\subst{\st}{m_1} \Leftrightarrow P\subst{\st}{m_2}$, and
means that $P$ is independent of the values in locations in $\Sigma$.
The condition that $\sigma$, $\tau$ and $\tau'$ be non-monadic is
a technical simplification: the rule is proven sound using a logical
relation, and terms of non-monadic types trivially inhabit the
relation. The restriction can be lifted by imposing additional
logical conditions on the argument and result value of $t$, as well
as requiring that $P$ must be also safe for the effects in $\tau'$.
Similarly, the restriction to closed adversaries can be lifted by
requiring that every free variable is instantiated to a term that
inhabits the logical relation.

We chose a particular second-order type for adversaries in this paper,
which was the most convenient for our examples. However, the soundness
argument can be used to easily derive adversary rules for different
adversaries, e.g. adversaries that accept multiple oracles or third-order
adversaries, that interact with oracles that receive functions as arguments.

In the examples it is often convenient to use a mild extension of the rule, where the
invariant $P_i$ and the error bound $\delta_i$ depend on some natural number
$i$ and we show that (1) each oracle call with precondition $P_i$ satisfies the
postcondition $P_{i+1}$ with error probability $\delta_i$ and (2) every $P_i$
implies $P_{i+1}$, from which we deduce that the adversarial computation
satisfies the postcondition $P_k$ with error probability $\delta_1 + \dots +
\delta_k$. To avoid cluttering the notation, here and in Section~\ref{sec:bool-rhol}
we present the rules without this indexing.

%% Similar rules to reason about adversarial code have appeared in prior
%% work (see Section~\ref{sec:related}). As mentioned above, all such
%% rules, including our rule, exploit parametricity properties of the
%% language, specifically, parametricity with respect to effects. To the
%% best of our knowledge, \textsf{ADV-U} is the first such rule that
%% works with error probabilities.

\subsection*{Example: Probability of collisions}
We now exercise our proof system to upper bound the probability of
collisions for all adversaries making at most $k$ queries to a PRF.
Recall that our goal is to prove
\[
\cA\colon \forall \alpha. (\{0,1\}^l {\to} \mon{\alpha,1}{\{0,1\}^l} ) {\to}
\mon{\Sigma\cup\alpha, k}{\{0,1\}} \aumjnoctxt{Empty}{\cA~{\it
    PRF}}{\mon{\Sigma,k}{\{0,1\}}}{\Phi_k}{k(k+1)/2^{l+1}}
\]
where $\Phi_i \triangleq |dom(L)| = |im(L)| \land |dom(L)|\leq i$.
 
Applying the proof rule for adversaries, it suffices to
prove that the $i$-th call of the oracle preserves the assertion
$NoColl$ with error probability at most $i/2^l$. Here we use
the fact that at the $i$-th iteration the domain of $L$ has size at
most $i$. So we have to prove:
{
\small
\[
x\colon \{0,1\}^l \aumjnoctxt{|dom(L)| = |im(L)| \land |dom(L)|\leq i}{e}{
    \mon{\alpha,1}{\{0,1\}^l}}{|dom(L)| = |im(L)| \land | dom(L) | \leq
  i+1}{i/2^l}
\]
}where $e$ is the body of $PRF$.  We work our way backwards starting
with $P_{i+1} \triangleq \{\!\{ |dom(L)| = |im(L)| \land | dom(L) |
\leq i+1 \}\!\}_{i/2^l}$ from the end of the program and compute the
precondition of each statement. Note that we keep the grading because
every precondition is the postcondition of the previous statement. The
last instruction is a return, which we can skip since our assertion
does not mention the return value.

Now we encounter the case split. The else
branch is empty, so its precondition is still $P_{i+1}$. On the then branch we
start by strengthening the postcondition to $\{\!\{ x\not\in dom(L) \wedge P_{i+1} \}\!\}_{i/2^l}$.
Then we have the assignment $L_1[x_1] = z_1$, whose precondition is
\[ \{\!\{ x_1 \not\in dom(L) \wedge |dom(L) \cup \{x_1\}| = |im(L) \cup \{z_1\}| \wedge |dom(L) \cup \{x_1\}| \leq i+1 \}\!\}_{i/2^l} \]
Now we can strengthen this to
\[  \{\!\{ z_1 \not\in im(L) \wedge x_1 \not\in dom(L) \wedge |dom(L) \cup \{x_1\}| = |im(L) \cup \{z_1\}| \wedge |dom(L) \cup \{x_1\}| \leq i+1 \}\!\}_{i/2^l} \]
which is equivalent to
\[  \{\!\{ z_1 \not\in im(L) \wedge x_1 \not\in dom(L) \wedge |dom(L)| + 1  = |im(L)| + 1 \wedge |dom(L)| + 1  \leq i+1 \}\!\}_{i/2^l} \]
and by the \textsf{SAMPLE-UBL} rule, we know that the probability of sampling
something outside $im(L)$ is at least $1-i/(2^l)$, so the precondition of this is
\[  \{\!\{ x_1 \not\in dom(L) \wedge |dom(L)| + 1 = |im(L)| + 1 \wedge |dom(L)| + 1 \leq i+1 \}\!\}_{0} \]
%% \aag{Note: we need the SAMPLE rule to be able to depend on states. I have added
%% this, but it looks somewhat ad hoc}
This is the precondition of the then branch. By the \textsf{MCASE-U} rule, the
precondition of the whole case construct is
\[  \left\{\!\left\{ \begin{array}{c}
            (x\not\in dom(L) \wedge (x\not\in dom(L) \wedge |dom(L)| + 1 = |im(L)| + 1 \wedge |dom(L)| + 1 \leq i+1)) \vee \\
            (x\in dom(L) \wedge (|dom(L)| = |im(L)| \land | dom(L) | \leq i+1)) 
\end{array}\right\}\!\right\}_{0}. \]
By strengthening, we finally get $\left\{\!\left\{ (|dom(L)| = |im(L)|
\land | dom(L) | \leq i)) \right\}\!\right\}_{0}$, which is exactly
$P_i$.

\subsection{Higher-order expectation logic}
\label{sec:exl}
Our second logic (HO-EXP) is a \emph{quantitative} (non-Boolean) higher-order
expectation logic that proves upper bounds on expected values of
functions of program results and final memories. The logic extends
expectation calculi~\cite{Morgan96,KaminskiKMO16} to the higher-order
setting.

This logic is \emph{syntactically} very similar to the higher-order
union bound logic of Section~\ref{sec:ubl} in the formulas, the
judgments and most of the proof rules, but it is very different in the
interpretation of \emph{assertions} $P, Q$. Specifically, assertions
in this logic are non-negative real-valued functions of their free
variables ($\Delta, \Gamma, \vl, \st$). The assertion connectives
$\sqcap$ and $\sqcup$ are the pointwise supremum and infimum operators
on such functions, as defined below. (Propositions $\phi, \psi$ still
have Boolean interpretations, as in the union bound logic.)

To upper bound expected values, the monadic judgment
$\aumj{\Gamma}{\Psi}{P}{t}{\mon{\Sigma,k}{\sigma}}{Q}{\delta}$ is
interpreted quantitatively, in terms of expectations.\footnote{As in
  Section~\ref{sec:ubl}, the contexts $\Xi,\Delta$ also exist but are
  elided from most of the presentation for brevity.} Specifically, the
inner judgment $\{P\}t : \mon{\Sigma,k}{\sigma}
\{\!\{Q\}\!\}_{\delta}$ means that for every state $m: \tmem$, if we
run $t$ from state $m$, then the \emph{expected value of $Q$} over all
possible final states is upper-bounded by $P\subst{\st}{m} + \delta$.
That is, $\EE_{(m',t') \sim t(m)}[Q\subst{\st}{m'}\subst{\vl}{t'}]
\leq P\subst{\st}{m}+\delta$. The whole judgment
$\aumj{\Gamma}{\Psi}{P}{t}{\mon{\Sigma,k}{\sigma}}{Q}{\delta}$ means
that this inequality holds for all substitutions for $\Gamma$ that
satisfy $\Psi$. Again, the formal semantics of this judgment is
defined by a lifting.
Conventionally, $P$ and $Q$ are respectively called the
\emph{pre-expectation} and the \emph{post-expectation} of the term
$t$. Informally, the judgment $\{P\}t : \mon{\Sigma,k}{\sigma}
\{\!\{Q\}\!\}_{\delta}$ means that the expected value of the
post-expectation is upper-bounded by the pre-expectation plus an error
$\delta$.

\paragraph{Syntax}
The logic reuses the syntax of the union bound logic
(Section~\ref{sec:ubl}), but we extend assertions with some
connectives that are specific to quantities. These new connectives are
shown in \mbox{\color{blue}\bf blue-bold} font below.
\[\begin{array}{@{}llll}
 \mbox{Assertions} & P, Q & ::= & f(t_1,\dots,t_n) \mid \top \mid \bot \mid \inj{\phi} \mid P \sqcup Q \mid P \sqcap Q \mid {\color{blue}\bf \injs{\phi}} \mid {\color{blue}\bf P+Q} \mid {\color{blue}\bf k\cdot P}
\end{array}\]
Assertions are quantities ranging over $[0,\infty]$. $f$ denotes a
function with codomain $[0,\infty]$. Assertion connectives have the
following interpretations.
\[\begin{array}{cll@{\qquad}cll}
\sem{\top} & \triangleq & 0 &
\sem{\bot} & \triangleq & \infty \\
\sem{P \sqcup Q} & \triangleq & \inf\{\sem{P}, \sem{Q}\} &
\sem{P \sqcap Q} & \triangleq & \sup\{\sem{P}, \sem{Q}\} \\
\sem{\inj{\phi}} & \triangleq &
\left\lbrace
\begin{array}{lll}
  0 & \phi \mbox{ holds}\\
  \infty & \phi \mbox{ does not hold}
\end{array}
\right. &
\sem{\injs{\phi}} & \triangleq &
\left\lbrace
\begin{array}{lll}
  1 & \phi \mbox{ holds}\\
  0 & \phi \mbox{ does not hold}
\end{array}
\right. \\
\sem{P + Q} & \triangleq & \sem{P} + \sem{Q} &
\sem{k \cdot P} & \triangleq & k \cdot \sem{P}
%% P \sqcap Q &\triangleq \max\{P,Q\} \qquad &{}\top &\triangleq \lambda x. 0 \\
%%     P \sqcup Q &\triangleq \min\{P,Q\} \qquad &{}\bot &\triangleq \lambda x. \infty \\
%%     P \Rrightarrow Q &\triangleq  \lambda x. \casebool{P(x) < Q(x)}{Q(X)}{0} &{} &{}
\end{array}\]

Note that $\bot$ is interpreted as $\infty$, not $0$. Similarly,
$\sqcap$ corresponds to supremum, not infimum. This reversal of the
usual order is due to the fact that we want to prove upper bounds. The
connective $\injs{\phi}$ is also called the Iverson
bracket~\cite{iverson62}.

The judgment
$\aumj{\Gamma}{\Psi}{P}{t}{\mon{\Sigma,k}{\sigma}}{Q}{\delta}$ is
interpreted as explained above.  The judgment $\jhol{\Gamma}{\Psi}{P
  \jimp P'}$ means that $P \geq P'$ for all instantiations of $\Gamma$
that satisfy $\Psi$.

\paragraph{Proof rules.}
The expectation logic reuses the monadic and structural proof rules of
the union bound logic as is (Figures~\ref{fig:muhol}
and~\ref{fig:str-muhol}). However, the rules' meanings are
quantitative and their soundness is completely different. We
illustrate the new meanings of some of these rules by explaining why
they are still sound.

In rule \textsf{UNIT-U}, the premise ensures that $\phi$ holds (for
the term $t$). So, $\inj{\phi}$ equals $0$ semantically, and
$\inj{\phi} \sqcap P$ is equivalent to $P$. Combined with the fact
that $\unit{t}$ returns $t$ and does not modify the memory, both with
probability $1$, the rule is trivially sound. The rule \textsf{MLET-U}
corresponds to the standard composition of random functions. In the
rule \textsf{MCASE-U}, the precondition $\inj{b=\ttrue} \sqcap P_1)
\sqcup (\inj{b=\ffalse} \sqcap P_2)$ in the conclusion is semantically
equal to $P_1$ when $b = \ttrue$ and $P_2$ when $b = \ffalse$. Hence,
the conclusion reduces to either the second or the third premise.%
\footnote{The precondition $\inj{b=\ttrue} \sqcap P_1) \sqcup
  (\inj{b=\ffalse} \sqcap P_2)$ is semantically equivalent to $([
    b=\ttrue ] \cdot P_1) + ([b=\ffalse] \cdot P_2)$. The latter is a
  more conventional way of writing the precondition~\cite{Morgan96},
  but we prefer the former because it shows the correspondence to the
  union bound logic.}

We also have a new structural rule (\textsf{LIN-EXP}) that allows
combining two different Hoare triples for the same program, relying on
the linearity of expectations.
\[\small
\inferrule*[right =\sf\scriptsize LIN-EXP]
           { \aumje{\Gamma}{\Psi}{P_1}
             {t}{\mon{\Sigma,k}\tau}{Q_1}{\delta_1}  \\
             \aumje{\Gamma}{\Psi}{P_2}
                   {t}{\mon{\Sigma,k}\tau}{Q_2}{\delta_2}
           }
           {\aumje{\Gamma}{\Psi}{P_1+P_2}
             {t}{\mon{\Sigma,k}\tau}{Q_1+Q_2}{\delta_1 + \delta_2}}
\]
           
%% \begin{figure}[t]
%% \small
%%   \begin{gather*}
%%     \inferrule*
%%         [right =\sf GAUSS-EXP]
%%         {   }
%%         {\aumje{\Gamma}{\Psi}{t}
%%           {{\sf Gauss}(t,u)}{\mon{\emptyset,0}{\sigma}}{\vl}{0}
%%         }
%%         \\
%% %        
%%     \inferrule*
%%         [right =\sf UNIF-EXP]
%%         { U \subseteq \{0, \dots, K\} }
%%         {\aumje{\Gamma}{\Psi}
%%           {\dfrac{|U|}{K+1} \cdot P}{{\sf Unif}(K)}{\mon{\emptyset,0}{\NN[K]}}{\injs{\vl \in U} \cdot P}{0}}
%%   \end{gather*}
%% 	\vspace{-0.4cm} 
%%   \caption{Rules for monadic primitives in the higher-order expectation logic.}
%%   \label{fig:prims-exl}
%% \end{figure}

\paragraph{Rules for monadic primitives}
Finally, we include rules for monadic primitives that we use in
examples. For instance, the rule \textsf{UNIF-EXP} below applies to
the term ${\sf Unif}(K)$, which samples from the uniform distribution
over $\{0,1,\dots,K-1\}$. For $U \subseteq \{0, \dots, K-1\}$, a value
sampled from this distribution is in $U$ with probability exactly
${|U|}/(K)$. Hence, the expected value of $\injs{\vl \in U}$ is
exactly ${|U|}/(K)$, which is the pre-expectation.
\[\small
\inferrule*
    [right =\sf\scriptsize UNIF-EXP]
    { U \subseteq \{0, \dots, K-1\} }
    {\aumje{\Gamma}{\Psi}
      {(|U|/K) \cdot P}{{\sf Unif}(K)}{\mon{\emptyset,0}{\NN[K]}}{\injs{\vl \in U} \cdot P}{0}
    }
\]

\paragraph{Adversary rule.}
The expectation logic admits the adversary rule \textsf{ADV-U} of the
union bound logic 
%(Figure~\ref{fig:adv-muhol})
but with a quantitative
definition of the meta-predicate ${\sf Safe}$. Here, $P \in {\sf
  Safe}(\Sigma)$ is defined as $\forall m_1\, m_2: \tmem. \; (\forall
a \not\in \Sigma.\; m_1[a] = m_2[a]) \To P\subst{\st}{m_1} =
P\subst{\st}{m_2}$. With this change to the definition of $\sf Safe$,
the rule is sound for expectations.

\paragraph{Example.}
We consider pollution attacks on Bloom Filters motivated in
Section~\ref{sec:motivation}. We consider an arbitrary adversary $\cA$
with access to the ${\sf insert}$ oracle of a Bloom filter. The goal
of the adversary is to set as many bits in the Bloom filter to $1$ as
possible using $k$ queries to the oracle. We assume that the Bloom
filter is initially empty and, for simplicity, that it uses only one
hash function, i.e., $\ell = 1$ (our proof easily generalizes to any
$\ell$). We model the hash function as a random oracle that is sampled
lazily. The Bloom filter is implemented as a vector of $m$ bits in
locations $L[0], \ldots, L[m-1]$. The inserted elements are from the
set $[n] = \{0,\ldots,n-1\}$, and $h[0] \ldots h[s-1]$ are auxiliary
locations that hold integers. Additionally, we assume a location $r$
that holds a counter. This is a ghost variable to help us in our verification
effort, it is concretely used to make the invariant depend on the
number of previous calls.
Initially, each $L[i]$ is set to $0$, each
$h[i]$ is set to $-1$ and $r$ is set to $0$. The code of the ${\sf
  insert}$ oracle is shown below:
\[\begin{array}{rcl}
    {\sf insert} (x: [n]) & \defeq &
    \mletA{b}{\rd{h[x]}}\\
  && {\sf if}\ b \not= -1\ {\sf then}\\
  &&\qquad \mletA{y}{\unif{m}}{}\\
  &&\qquad \wrt{h[x]}{y};
  \wrt{L[y]}{1};
%  &&\qquad \mletA{c}{\rd r} \ \wrt{r}{c+1}\\
	{\sf inc}~r\\
  && {\sf else}~ {\sf inc}~r\\
%  &&\qquad \mletA{c}{\rd r} \ \wrt{r}{c+1}
%  && \unit{\star}
\end{array}
\]

We want to show that the \emph{expected} number of bits any adversary
can set after making $k$ calls to the adversary is upper bounded by $ m(1-\left((m-1)/m\right)^{k})$.
For this, we prove that for any $\cA: \forall \alpha.  ([n]\to
\mon{\alpha,1}{\tunit}) \to {\mon{\alpha \cup \Sigma, k}{\tunit}}$, we
have
\[\aumjnoctxt{F}{\cA \ {\sf insert}}{\mon{\Sigma \cup\{ L, h, r\},k}{\tau}}{F}{0},\]
where the expectation $F$ is defined as
\[ F = \left({\textstyle \sum_{i \in [m]} \st[L[i]]} \right)
    \left((m-1)/m\right)^{k-\st[r]} +
  m\left(1-\left((m-1)/m \right)^{k-\st[r]}\right).
\]
The idea behind this choice of $F$ is that in the initial state where
$r$ and all $L[i]$s are $0$, $F$ equals the upper bound we want (shown
above), and after the execution, when $\st[r] = k$, $F$ equals
$\sum_{i \in [m]} \st[L[i]]$, whose expectation is what we want to
upper bound.

By the adversary rule \textsf{ADV-U} rule, we need to show that ${\sf
  insert}$ preserves $F$, i.e.,
$ \aumjnoctxt{F}{{\sf insert}~x}{\mon{\{L, h, r\},1}{\tau}}{F}{0} $.
We first use the rule \textsf{MLET-U}. Since $\rd{h[x]}$ trivially
preserves $F$, we need to show that the if-then-else preserves $F$. We
use \textsf{CONSEQ-U} to replace the pre-condition's $F$ with the
equivalent $(\inj{(b \not= -1) = \ttrue} \sqcap F) \sqcup
(\inj{(b \not= -1) = \ffalse} \sqcap F)$. Using the rule
\textsf{MCASE-U}, we then need to prove that the ``then'' and ``else''
branches preserve $F$.

The else branch is fairly straightforward. We need to show that
\[\aumjnoctxt{F}{\mletA{c}{\rd r} \ \wrt{r}{c+1}}{\mon{\{L, h , r\},1}{\tau}}{F}{0}.\]
Using the rules
\textsf{MLET-U}, \textsf{READ-U} and \textsf{WRITE-U}, we get
\[ \aumjnoctxt{F\subst{\st}{\st[r \mapsto (\st[r] + 1)]}}{\mletA{c}{\rd r} \ \wrt{r}{c+1}}{\mon{\{L , h , r\},1}{\tau}}{F}{0}.\]
Hence, by \textsf{CONSEQ-U}, it suffices to show that $F \geq
F\subst{\st}{\st[r \mapsto (\st[r] + 1)]} = F\subst{\st[r]}{(\st[r] +
  1)}$, which follows immediately because $F$ is a decreasing function
of $\st[r]$ (this uses the fact that each $\st[L[i]]$ is either $0$ or
$1$).

On the then branch, we take into account the following property of the
uniform distribution:
\begin{equation}\label{eqn:coupon-exp}
    \vdash  \left\{ ((m-1)/m)  \left({ {\textstyle \sum_{i \in [m]} \st[L[i]]} }\right) + 1 \right\} 
    \unif{m} 
    \left\{\!\left\{ 1 + {\textstyle\sum_{i\in ([m] \setminus \{\vl\})}} \st[L[i]]\right\}\!\right\}.
\end{equation}
To prove this property, we first note that the post-expectation is equal to
\[
[\vl \in \{ i \mid \st[L[i]]=1 \}] \cdot \left({\textstyle \sum_{i \in [m]} \st[L[i]]}\right) + [\vl \in \{ i \mid \st[L[i]]=0 \}] \cdot \left(1+{\textstyle \sum_{i \in [m]} \st[L[i]]}\right).
\]
By \textsf{LIN-EXP} and \textsf{UNIF-EXP}, the pre-expectation
of this with respect to $\unif{m}$ is
\[
(1/m) \left({\textstyle \sum_{i \in [m]} \st[L[i]] }\right) \left({\textstyle \sum_{i \in [m]} \st[L[i]]}\right) + (1/m) \left(m-{\textstyle\sum_{i \in [m]} \st[L[i]]} \right) \left(1+{\textstyle\sum_{i \in [m]} \st[L[i]]}\right),\]
which equals the pre-expectation of~(\ref{eqn:coupon-exp}).

We return to the proof of the then branch and reason backwards from
the end of the branch. After going backwards over $\wrt{h[x]}{y};
\wrt{L[y]}{1}; \mletA{c}{\rd r} \ \wrt{r}{c+1}$, our pre-expectation
becomes $F\subst{\st}{\left(\st[r\mapsto \st[r]+1][L[y] \mapsto
    1]\right)}$, which expands to
\[\left(1 + {\textstyle \sum_{i \in ([m] \setminus \{ y \})} \st[L[i]]}\right) \cdot ((m-1)/m)^{k-\st[r]-1} + m (1-((m-1)/m)^{k-\st[r]-1}).\]
Using~(\ref{eqn:coupon-exp}) and linearity to compute the pre-expectation
of the sampling command, which is
\[\left( (m-1)/m \cdot \left( {\textstyle \sum_{i \in [m]} \st[L[i]]}\right) + 1 \right) \cdot ((m-1)/m)^{k-\st[r]-1} + m (1-((m-1)/m)^{k-\st[r]-1}), \]
and by some rearranging of the terms, this is equal to
\[\left( {\textstyle \sum_{i \in [m]} \st[L[i]]}\right) \cdot ((m-1)/m)^{k-\st[r]} + m (1-((m-1)/m)^{k-\st[r]}),\]
which coincides with $F$. This concludes the proof.

\section{Higher-order probabilistic relational logic}
\label{sec:bool-rhol}

In this section, we present a logic (HO-RPL) to reason about relations between two
computations. The syntax of the logic is shown below, where the propositions,
assertions and assumptions have the same meaning as in Section~\ref{sec:ubl}.
Here we note that, although we keep the abstract syntax of assertions,
in this section we consider only their Boolean interpretation, where the connectives
are replaced by their usual Boolean counterparts and $\langle \phi \rangle$ is
equivalent to $\phi$. Researching a quantitative interpretation, where assertions
are interpreted as distances, is an interesting direction for future work.

\[\begin{array}{@{}llll}
\mbox{Propositions} & \phi, \psi & ::= & R(t_1, \dots, t_n) \mid \top \mid \bot \mid \phi \wedge \psi \mid \phi \vee \psi \mid \phi \To \psi \mid \neg \phi \mid \forall x : \sigma. \phi \mid
 \exists x:\sigma. \phi \\
	\mbox{Assertions} & P, Q & ::= & f(\tilde{t_1},\dots,\tilde{t_n}) \mid \top \mid \bot \mid \inj{\phi} \mid P \sqcup Q \mid P \sqcap Q \\
 \mbox{Assumptions} & \Psi & ::= & \bullet \mid \Psi, \psi \\ 
 \mbox{Judgments} & \multicolumn{3}{l}{
   \Xi \mid \Delta \mid \jhol{\Gamma}{\Psi}{\phi}
   } \\
 & \multicolumn{3}{l}{
   \Xi \mid \Delta \mid \jhol{\Gamma}{\Psi}{P \jimp Q}
   } \\
 & \multicolumn{3}{l}{
     \jrhol{\Gamma}{\Psi}{t_1}{\sigma_1}{t_2}{\sigma_2}{\phi}
 }  \\
  & \multicolumn{3}{l}{
      \armj{\Gamma}{\Psi}{P}{t}{\mon{\Sigma_1,k_1}{\sigma_1}}{t_2}{\mon{\Sigma_2,k_2}{\sigma_2}}{Q}{\delta}
 } 
\end{array}\]

We have already explained the first two judgments in previous sections.
The third form of judgment 
$\jrhol{\Gamma}{\Psi}{t_1}{\sigma_1}{t_2}{\sigma_2}{\phi}$ constitutes
the non-monadic fragment of the logic and comes from
RHOL~\cite{ABGGS17}, a logic to prove relational properties of pure
higher-order programs directed by the syntax of the programs. In these
judgments, $\holwf{\Gamma,\resl,\resr}{\phi}$ is a HOL formula depending on two
distinguished variables $\resl,\resr$ that represent the term on the left of
the judgment and the term on the right, respectively.  The interpretation is
given by the equivalence
$\jrhol{\Gamma}{\Psi}{t_1}{\sigma_1}{t_2}{\sigma_2}{\phi} \Leftrightarrow \jhol{\Gamma}{\Psi}{\phi\subst{\resl}{t_1}\subst{\resr}{t_2}}$,
which follows from the relative completeness theorem of RHOL.
We present the rules for RHOL in the Appendix.

The fourth kind of judgments is new to our presentation, and is introduced
to reason about monadic computations. These have the syntax
$\armj{\Gamma}{\Psi}{P}{t_1}{\mon{\Sigma,k}{\sigma_1}}{t_2}{\mon{\Sigma,k}{\sigma_2}}
{Q}{\delta}$ where $P$ is a Boolean-valued assertion (called the pre-condition)
well-formed in the context $\Gamma, \stl : M, \str : M$, and $Q$ is another
Boolean-valued assertion (called the post-condition) well-formed in the context
$\Gamma, \stl : M, \str : M, \vll : \sigma_1, \vlr : \sigma_2$. Here,
the variables $\stl,\str$ refer to (resp.  left or right-side)
memories, and $\vll, \vlr$ refer to (resp. left or right-side) result
values. Here $\delta$ is a quantitative bound taken in an ordered
monoid; in the simplest case, the monoid consists of a single element
$0$. For the particular interpretation presented in this section, we take the
monoid of non-negative reals with addition. Following the convention of RHOL, we assume
that the free variables of $t_1$ and $t_2$ are disjoint.

The semantics of judgments is based on the notion of statistical
distance. For a general $Q$, the meaning of the judgment depends on
the lifting defined in Example~\ref{ex:DP-lifting}. Here we give an
intuition for the case where $Q$ is of the form
$\stl=\str \sqcap \vll=\vlr$, sufficient for our examples.  If we can derive
\[\armj{\Gamma}{\Psi}{P}{t_1}{\mon{\Sigma,k}{\tau}}{t_2}{\mon{\Sigma,k}{\tau}}{\stl=\str \sqcap \vll=\vlr}{\delta}\]
then for every instantiation of $\Gamma$
satisfying $\Psi$, and every pair of initial memories $m_1,m_2\in M$,
such that $m_1,m_2\in P$, the statistical distance between the output
distributions $t_1(m_1)$ and $t_2(m_2)$ is at most $\delta$, i.e.,
for every event $S$ the absolute difference between the
probabilities of $S$ in $t_1(m_1)$ and $t_2(m_2)$ is at most $\delta$.
%
% In the particular case where $t_2$ is a renaming of $t_1$ and
% $Q \equiv \vll = \vlr$, the judgment above states that the
% distributions computed by $t_1$ and $t_2$ are equal. Other choices of
% monoids are supported by a variant of the notion of lifting, called
% approximate lifting, and can be used to capture diferential
% privacy~\cite{BartheGOZ12,DBLP:journals/entcs/Sato16}.
In particular, when $t_2$ is a renaming of $t_1$, $\delta=0$,
and $P$ and $Q$ define partial equivalences on memories, the judgment
enforces a form of generalized non-interference.

\paragraph*{Monadic and structural rules} Figure~\ref{fig:gen-mrhol}
presents selected monadic and structural rules. Following a pattern
that is standard for relational logics, we have 2-sided rules, such as
\rname{UNIT-R}, \rname{MLET-R}, \rname{READ-R}, \rname{WRITE-R} and
\rname{MCASE-R}, where the two expressions have the same top-level
structure, and 1-sided rules, such as \rname{L-UNIT-R} and
\rname{L-MLET-R}, which exclusively consider the top-level construct
of one expression. These generalize their unary counterparts. In
particular, the rule \rname{MCASE-R} has an extra assumption ensuring
that the two computations go to the same branch, so we only need to
prove a relation between the then branches and a relation between the
else branches. A 1-sided rule without this assumption also exists,
allowing to consider the 4 possible pairs of branches, but we do not
show it here.

%We also have an adversary rule \rname{ADV}. 
%The rule
%considers a single adversary and two oracles of the same type but
%possibly different implementations.
%%
%In the simplified case where $\sigma$ is a base type,
%$(x_1,x_2) \in \cR_{\phi,0}(\sigma)$ can be replaced by $x_1=x_2$
%(similarly for $\tau,\tau'$).
%%
%The rule assumes we have two oracles satisfying the property that, for
%inputs related by $\cR$ and initial memories related by $\phi$, there
%exists an $\gamma$-approximate lifting for which the outputs are
%related by $\cR$ and the final memories are related by $\phi$. If this
%is the case, then executing the adversary with these oracles and
%initial memories related by $\phi$ should yield a
%$k\cdot \gamma$-approximate lifting for which the outputs are related
%by $\cR$ and the final memories are related by $\phi$.  In addition,
%the logic features a one-sided rule for adversaries, which we omit for
%lack of space.

\paragraph*{Rules for sampling} Our logic also features rules for
reasoning about sampling. In contrast to the other rules, these
are only valid to the particular interpretation based on statistical
distance that we present here.
We show one rule below:
\[\small
\inferrule*[right =\sf SAMPLE-R]
 { B_1 \subseteq B_2\; \text{finite}}
    {\armj{\Gamma}{\Psi}{P}
    {\unif{B_1}}{\mon{\emptyset,0}{B_1}}{\unif{B_2}}{\mon{\emptyset,0}{B_2}}{\langle \vll=\vlr \rangle \sqcap P}{|B_1|/|B_2|}}
 \]
The rule is used to compare uniform
samplings from two finite sets. There exists an alternative rule where
$\delta=0$, at the cost of weakening the postcondition; this rule is
shown in the appendix.

\begin{figure*}[!htb]
\small
\begin{gather*}
  \inferrule*[right =\sf\scriptsize UNIT-R]
  {\rj{\Gamma}{\Psi}{t_1}{\tau_1}{t_2}{\tau_2}{\phi} \\
    \holwf{\Gamma, \stl: M, \str : M}{P}
}{\armj{\Gamma}{\Psi}{P}{\unit{t_1}}{\mon{\emptyset,0}{\tau_1}}{\unit{t_2}}{\mon{\emptyset,0}{\tau_2}}{\langle \phi \rangle \sqcap P)}{0}}
  \\[0.3em]
\inferrule*[right =\sf\scriptsize MLET-R]
{\armj{\Gamma}{\Psi}{P}{t_1}{\mon{\Sigma_1,k_1}{\tau_1}}{t_2}{\mon{\Sigma_2,k_2}{\tau_2}}{Q}{\delta} \\
	\armj{\Gamma, x_1 : \tau_1, x_2 : \tau_2}{\Psi}{Q\subst{\vll}{x_1}\subst{\vlr}{x_2}}
        {u_1}{\mon{\Sigma_1',k_1'}{\sigma_1}}{u_2}{\mon{\Sigma_2',k_2'}{\sigma_2}}{R}{\delta'} \\  x_1,x_2 \not\in R }
{\armj{\Gamma}{\Psi}{P}
{\mlet{x_1}{t_1}{u_1}}{\mon{\Sigma_1\cup\Sigma_1',k_1+k_1'}{\sigma_1}}
{\mlet{x_2}{t_2}{u_2}}{\mon{\Sigma_2\cup\Sigma_2',k_2+k_2'}{\sigma_2}}{R}{\delta+\delta'}}
\\[0.3em]
\inferrule*[right =\sf\scriptsize READ-R]
{\uj{\Xi;\Gamma}{\Psi}{a_1}{\loc}{\psi}\\
\uj{\Xi;\Gamma}{\Psi}{a_2}{\loc}{\psi}}
{\armj{\Gamma}{\Psi}{P\subst{\vll}{\stl[a_1]}\subst{\vlr}{\str[a_2]}}
{\rd{a_1}}{\mon{\{a_1\},0}{\tval}}{\rd{a_2}}{\mon{\{a_2\},0}{\tval}}{P}{0}}
\\[0.3em]
\inferrule*[right =\sf\scriptsize WRITE-R]
{\Xi;\Gamma \vdash a_1 : \loc \\
 \Xi;\Gamma \vdash t_1 : \tval \\
 \Xi;\Gamma \vdash a_2 : \loc \\
 \Xi;\Gamma \vdash t_2 : \tval  }
 {\armj{\Gamma}{\Psi}{P\subst{\stl}{\stl[a_1\mapsto
       t_1]}\subst{\stl}{\stl[a_2\mapsto
       t_2]}}{\wrt{a_1}{t_1}}{\mon{\{a_1\},0}{\tunit}}{\wrt{a_2}{t_2}}{\mon{\{a_2\},0}{\tunit}}{P}{0}}
 \\[0.3em]
\inferrule*[right =\sf\scriptsize MCASE-R]
{\jrhol{\Gamma}{\Psi}{b_1}{\BB}{b_1}{\BB}{b_1 = b_2} \\
    \armj{\Gamma}{\Psi \wedge b_1 = \ttrue}{P_1}{t_1}{\mon{\Sigma_1,k_1}{\tau_1}}{t_2}{\mon{\Sigma_2,k_2}{\tau_2}}{Q}{\delta}\\
\armj{\Gamma}{\Psi \wedge b_1 = \ffalse}{P_2}{u_1}{\mon{\Sigma_1,k_1}{\tau_1}}{u_2}{\mon{\Sigma_2,k_2}{\tau_2}}{Q}{\delta} \\
P \triangleq (\langle b_1 = \ttrue \rangle \sqcap P_1 \rangle ) \sqcup (\langle b_1 = \ffalse \rangle \sqcap P_2 )}
{\Gamma \mid \Psi \vdash
    \{ P \}
    \casebool{b}{t_1}{u_1} \colon \mon{\Sigma_1,k_1}{\tau_1} \sim 
    \casebool{b}{t_2}{u_2} \colon \mon{\Sigma_2,k_2}{\tau_2} 
\{\!\{ Q \}\!\}_{\delta}}
\\[0.3em]
\inferrule*[right =\sf\scriptsize L-UNIT-R]
 {\uj{\Gamma}{\Psi}{t_1}{\tau_1}{\phi} \\
  \holwf{\Gamma, \stl: M, \str : M}{P}}
 {\armj{\Gamma}{\Psi}{P}
     {\unit{t_1}}{\mon{\emptyset,0}{\tau_1}}{\sskip}{\mon{\emptyset,0}{\tunit}}
 {\langle \phi \rangle \sqcap P}{0}}
\qquad
\\[0.3em]
\inferrule*[right =\sf\scriptsize L-MLET-R]
{\armj{\Gamma}{\Psi}{P}{t_1}{\mon{\Sigma_1,k_1}{\tau_1}}{\sskip}{\mon{\emptyset,0}{\tunit}}{Q}{\delta} \\
	\armj{\Gamma, x_1 : \tau_1, x_2 : \tunit}{\Psi}{Q\subst{\vll}{x_1}\subst{\vlr}{x_2}}
{u_1}{\mon{\Sigma_1',k_1'}{\sigma_1}}{u_2}{\mon{\Sigma_2,k_2}{\sigma_2}}{R}{\delta} \\  x_1,x_2 \not\in R }
{\armj{\Gamma}{\Psi}{P}
{\mlet{x_1}{t_1}{u_1}}{\mon{\Sigma_1\cup\Sigma_1',k_1+k_1'}{\sigma_1}}{u_2}{\mon{\Sigma_2,k_2}{\sigma_2}}{R}
{\delta+\delta'}}
%\\[0.3em]
%\inferrule*[right= \sf ADV]
%{\cdot \vdash \cA :\forall \alpha. (\sigma \to \mon{\alpha,1}{\tau}) \to \mon{\Sigma \cup \alpha, k}{\tau'} \\
%    \armj{ x_1 : \sigma, x_2 : \sigma}{(x_1,x_2) \in \cR_{\phi,0}(\sigma) }
%    {\phi}{t_1}{\mon{\Sigma',1}{\tau}}{t_2}{\mon{\Sigma',1}{\tau_2}}
%    {\phi \wedge H((\vll,\vlr) \in \cR_{\phi,0}(\tau))}{\gamma} \\
%    \holwf{\stl:M,\str:M}{\phi} \\
% \phi \in {\sf RSafe}(\Sigma) \\
% x_1\not\in FV(t_2), x_2\not\in FV(t_1) \\
% \sigma,\tau,\tau'\ \text{non-monadic types}
%}
%{\armj{\cdot}{\cdot}{\phi}
%{\cA[\Sigma'](\lambda x_1. t_1)}{\mon{\Sigma\cup\Sigma',k}{\tau}}
%{\cA[\Sigma'](\lambda x_2. t_2)}{\mon{\Sigma\cup\Sigma',k}{\tau}}
%{\phi\wedge H((\vll,\vlr) \in
%  \cR_{\phi,0}(\tau'))}{k\cdot \gamma}}
%\\[0.3ex]
 \end{gather*}
 \vspace{-0.4cm}
\caption{Relational logic: monadic rules}\label{fig:gen-mrhol}
\end{figure*}
%
%\begin{figure*}[!htb]
%    \small
%\begin{gather*}
% % \inferrule*[right =\sf SAMPLE]
%%  { F : \sigma\to\sigma \;\text{bijection}\\
%%  \Gamma \vdash \forall x. \phi(x,f(x))}
%%   {\armj{\Gamma}{\Psi}{P}
%%   {\unif{\sigma}}{\cT_{\emptyset,0}(\sigma)}{\unif{\sigma}}{\cT_{\emptyset,0}(\sigma)}{P\wedge \phi}{(0,0)}}
%% \\
% % \inferrule*[right =\sf SAMPLE-ASYN]
% % { B_1 \subseteq B_2\; \text{finite}}
% %    {\armj{\Gamma}{\Psi}{P}
% %      {\unif{B_1}}{\mon{\emptyset,0}{B_1}}{\unif{B_2}}{\mon{\emptyset,0}{B_2}}{
% %        P\wedge (\vlr\in B_1 \Rightarrow \vll=\vlr)}{(0,0)}}
% %      \\
%\inferrule*[right =\sf SAMPLE-R]
% { B_1 \subseteq B_2\; \text{finite}}
%    {\armj{\Gamma}{\Psi}{P}
%    {\unif{B_1}}{\mon{\emptyset,0}{B_1}}{\unif{B_2}}{\mon{\emptyset,0}{B_2}}{\langle \vll=\vlr \rangle \sqcap P}{|B_1|/|B_2|}}
%\end{gather*}
%\vspace{-0.4cm}
%\caption{Relational logic: sampling from uniform distributions}\label{fig:prl-mrhol}
%\end{figure*}

\paragraph*{Adversary rule}
The adversary rule for the relational setting is similar in spirit to
the adversary rule for the unary setting. However, some mild
adjustments are needed. First, we need to modify the notion of safety
for a region $\Sigma$.  In the unary case we only required that
writing to $\Sigma$ preserves the invariant. In the relational case,
we also need to require that $\phi$ is also ``safe for reading in
$\Sigma$'', meaning that an adversary reading from two different
memories related by $\phi$ at the same location in $\Sigma$ sees the
same value. This prevents the two executions from diverging due to a
read operation:
\begin{definition}
Let $\phi$ be a predicate and $\Sigma \subseteq \loc$. We say that
$\phi \in {\sf RSafe}(\Sigma)$ iff
\begin{align*}
\forall m_1, m_2, \forall l\in \Sigma. \forall v\in\tval. \phi(m_1,m_2) &\Rightarrow 
    \phi(m_1[l\mapsto v],m_2[l\mapsto v]) \land m_1[l] = m_2[l]
\end{align*}

% \begin{align*}
% \forall m_1, m_2, \forall l\in \Sigma. \forall v\in\tval. \phi(m_1,m_2) &\Rightarrow 
%     \phi(m_1[l\mapsto v],m_2[l\mapsto v] )
% \\
%  \forall m_1, m_2. \forall l \in \Sigma. \phi(m_1, m_2)
%     &\Rightarrow m_1[l] = m_2[l]
% \end{align*}

\end{definition}
The adversary rule also allows us to show that the outputs must be \emph{extensionally
equal}, which corresponds to the predicate ${\sf Eq}_\tau$ defined below. The reason
to use this as opposed to equality in the model is that the logical relation we use
in the soundness proof corresponds to extensional equality for non-monadic types:
\begin{align*}
    {\sf Eq}_{b}(x_1,x_2) &\triangleq x_1 = x_2  \\
    {\sf Eq}_{\tau_1\to\tau_2}(x_1,x_2) &\triangleq \forall y_1,y_2 \in \tau_1. 
        {\sf Eq}_{\tau_1}(y_1,y_2) \To {\sf Eq}_{\tau_1}(x_1~y_1, x_2~y_2)\\
    {\sf Eq}_{\tau_1\times\tau_2}(x_1,x_2) &\triangleq
        {\sf Eq}_{\tau_1}(\pi_1(x_1),\pi_1(x_2)) \wedge 
        {\sf Eq}_{\tau_2}(\pi_2(x_1),\pi_2(x_2))
\end{align*}
The adversary rule can now be stated below:
  \[\small
\inferrule*[right= \sf\scriptsize ADV-R]
{ (\cA :\forall \alpha. 
    (\sigma \to \mon{\alpha,1}{\tau}) \to \mon{\Sigma \cup \alpha, k}{\tau'})\in\Delta \\
    \holwf{\stl:M,\str:M}{\phi} \\
 \phi \in {\sf RSafe}(\Sigma) \\
 x_1\not\in FV(t_2), x_2\not\in FV(t_1) \\
 \sigma,\tau,\tau'\ \text{non-monadic types}\\\\
 \armj{ x_1 : \sigma, x_2 : \sigma}{{\sf Eq}_{\sigma}(x_1,x_2) }
    {\phi}{t_1}{\mon{\Sigma',1}{\tau}}{t_2}{\mon{\Sigma',1}{\tau}}
    {\langle {\sf Eq}_{\tau}(\vll,\vlr) \rangle \sqcap \phi}{\delta}
}
{\armj{\Delta\mid\cdot}{\cdot}{\phi}
{\cA(\lambda x_1. t_1)}{\mon{\Sigma\cup\Sigma',k}{\tau'}}
{\cA (\lambda x_2. t_2)}{\mon{\Sigma\cup\Sigma',k}{\tau'}}
{\langle{\sf Eq}_{\tau'}(\vll,\vlr)\rangle \sqcap \phi}{k\delta}}
\]
Informally, the premises of the rule state:
\begin{itemize}
\item $\phi$ is a safe for the memory region $\Sigma$;
\item if their inputs are extensionally equal and their initial
  memories are related by $\phi$, then the oracles produce equal
  outputs and final memories related by $\phi$, with error $\delta$;
      
    \item $\cA$ is an arbitrary adversary that only writes to and reads from $\Sigma$
        and that can call its argument up to $k$ times
\end{itemize}
From them, we conclude that executing the adversary with these oracles
and initial memories related by $\phi$ should yield equal values and
output memories related by $\phi$, with error $k\delta$.

\subsection*{Example: PRF/PRP Switching Lemma}
We use our logic to show that the probability that an adversary can
distinguish between a PRF and a PRF on bitstrings of fixed length $l$ is
upper bounded by $\frac{k(k+1)}{2^{l+1}}$, where $k$ is the maximal
number of queries allowed to the adversary.
As before, we consider a mild extension
of the logic where the error bound can depend on the oracle counter.
For readability, we will generally omit from our judgments the effect,
adversary and variable contexts, and drop the cost grading from the
monadic types, and omit all reasoning about the size of the domain of
$L$.  Our goal is to show:
  \[
 \armjnoc
      {\stl=\str }{\cA~{\it PRF}}{\mon{\Sigma\cup \{L\},k }{\{0,1\}}}{\cA~{\it PRP}}{
        \mon{\Sigma \cup \{L\},k}{\{0,1\}}}
      {\vll=\vlr}{k(k+1) / 2^{l+1}}
    \]
  By applying the rule \rname{ADV-R} on the strengthened judgment,
    we are left to prove:
 $$
 \begin{array}{l@{~}l@{~}}
 x_1=x_2 \vdash& \{\stl = \str \}
                e_{\it PRF} \colon \mon{\{L \}}{\{0,1\}}\sim 
                 e_{\it PRP} \colon \mon{\{  L\}}{\{0,1\}} 
     \{\!\{ \stl = \str \sqcap \vll=\vlr \}\!\}_{i/2^{l}}
  \end{array}
$$
where $e_{\it PRF}$ and $e_{\it PRP}$ denote the bodies of the PRF and
PRP oracles. We then apply the \rname{MCASE-R} rule.  In the empty
\emph{else} branch, we need to prove:
\[
  x_1=x_2 \vdash \{\stl=\str \} (\rd L)[x_1]\colon \mon{L}{\{0,1\}^l}
  \sim (\rd L)[x_2] \colon \mon{L}{\{0,1\}^l} \{\!\{\stl = \str
  \sqcap \vll=\vlr \}\!\}_{i/2^{l}}
  \]
  which is a simple application of the \rname{READ-R} rule. In the
  \emph{then} branch, we first apply the \rname{WRITE-R} rule, and
  then we are left to prove:
  \[
  \vdash
      \{\stl = \str \}  \unif{X_1}\colon\mon{}{\{0,1\}^l}\sim 
      \unif{X_2} \colon \mon{L,1}{\{0,1\}^l}
      \{\!\{\stl[x_1\mapsto\vll] = \str[x_2\mapsto\vlr] \sqcap \vll =
        \vlr \}\!\}_{i/2^{l}} 
\]
where $X_1\triangleq \{0,1\}^l$ and $X_2\triangleq \{0,1\}^l \setminus
\mathsf{im}(\vl[L])$. By the rule of consequence, this follows from
\[
  \vdash
      \{\stl = \str \}  \unif{X_1}\colon\mon{}{\{0,1\}^l}\sim 
      \unif{X_2} \colon \mon{L,1}{\{0,1\}^l}
      \{\!\{\stl= \str \sqcap \vll =
        \vlr \}\!\}_{i/2^{l}} 
\]
which we prove using the \rname{SAMPLE-R} rule.

\ifdef{\G}{\renewcommand{\G}{\Gamma}}{\newcommand{\G}{\Gamma}}
\newcommand{\g}{\gamma}

\section{Semantics}\label{sec:semantics}
Now we present the formal semantics for our system. We begin with some
background, and follow with the semantics of the language and the logics,
and their soundness theorems.

\subsection{The Category of Quasi-Borel spaces and Probability Monad}

We will assume knowledge of some concepts of category theory,
such as bi-Cartesian closed categories (bi-CCC) and strong monads,
see e.g.~\cite{maclane71} for details. In any biCCC $\CC$ in this paper, we fix a
terminal object $(1,!_X\in\CC(X,1))$, and for each pair $X,Y\in\CC$ of
objects, we fix a binary product
$(X\times Y,\pi_1,\pi_2,\langle -,-\rangle)$, a binary coproduct
$(X+Y,\iota_1,\iota_2,[-,-])$ and an exponential object
$(X \To Y,\ev,\lambda(-))$. We also equip $\CC$ with
the symmetric monoidal structure $(1,(\times),l,r,a,s)$
induced by the fixed terminal object and binary products.

We will use the category $\QBS$ of quasi-Borel
spaces~\cite{HeunenKSY17,Scibior:2017:DVH:3177123.3158148} for
modeling higher-order probabilistic programs introduced in Section
\ref{sec:language}. The category $\QBS$ is a well-pointed bi-CCC; in
fact it has small products and coproducts.  For modeling
probabilistic choice, we employ the strong monad
$(\mprob, \eta^\mprob,\mu^\mprob,\theta^\mprob)$ for probability
measures over QBSs \cite{HeunenKSY17}.  For a set $A$ and a QBS $X$,
by $(A\cdot X,\{\iota_a^{A,X}:X\to A\cdot X\}_{a\in A},[-]_{a\in A})$ we
mean the coproduct of $A$-many copies of $X$.

We write $|-| \colon \QBS \to \Set$ for the forgetful functor extracting the
carrier set of QBS. It preserves finite (actually small) products. To ease
calculation, we assume $|1|=1$ and $|X\times Y|=|X|\times |Y|$ (rather than
isomorphic). We also assume that the exponential of $\QBS$ is defined so that
$|X\To Y|=\QBS(X,Y)$.
Finally, we write $\qbsposr$ for the QBS of non-negative extended reals.

\subsection{Probabilistic State Monad}

Starting from this base, which was already presented
in~\cite{HeunenKSY17} we use the state monad transformer to construct
a strong monad given by a functor.

First, we introduce the QBS for memory states. Fix a QBS $\qbsval$
corresponding to type $\tval$. The QBS for memory states is a product
$(\qbsmem,\{\pi_a:\qbsmem\to \qbsval\}_{a\in\loc})$ of $\loc$-many
copies of $\qbsval$.
We next introduce a memory update function.  Given a QBS morphism
$f:X\to \qbsval$ computing a value from an environment, we define the
memory update $u_a(f):X\times \qbsmem\to \qbsmem$ (at
location $a\in\loc$) to be the unique morphism satisfying
$\pi_a\circ u_a(f)=f\circ\pi_1$ and
$\pi_{a'}\circ u_a(f)=\pi_{a'}\circ\pi_2$ for any $a'\neq a$.
We then define the probabilistic state monad by
$ \mpstn \triangleq \qbsmem \To \mprob( - \times \qbsmem).  $ The
unit
$  \eta^\mpstn_X  : X \to \mpst X$, multiplication
$  \mu^\mpstn_X : \mpst(\mpst X)\to \mpst X$
and strength
$\theta^\mpstn_{X,Y} : X\times \mpst(Y) \to \mpst(X\times Y)$
of this monad are defined as:
\[
  \eta^\mpstn \triangleq \lambda(\eta^\mprobn_{X\times \qbsmem}) \qquad
  \mu^\mpstn \triangleq \qbsmem\To(\mu^\mprobn\circ\mprob(ev))  \qquad
  \theta^\mpstn_{X,Y} \triangleq \lambda(\mprob\alpha^{-1}\circ\theta^\mprobn \circ (X\times\ev)\circ\alpha).
\]
%\begin{align*}
%    \theta^\mpstn &: X\times(\qbsmem\To\cP(Y\times \qbsmem)) \to \qbsmem\To\cP(X \times Y \times \qbsmem) \\
%    \theta^\mpstn &\triangleq \lambda. \theta^\cP_{X,Y\times \qbsmem} \circ (id_x \times \ev)
%\end{align*}

\subsection{Semantics of the language}

As demonstrated by Moggi, the computational metalanguage (the simply
typed lambda calculus with monadic types) is naturally interpreted in
any CCC with a strong monad.  The semantics of the language in
Section~\ref{sec:language} follows the same pattern. To
accommodate probabilities we take the category $\QBS$ with the
probabilistic state monad $\mpstn$.

%Notice that the grading of the monad plays no role in the semantics of
%expressions.

%We now present the semantics in $\QBS$. Semantics in $\Set$ is analogous
%except for the choice of the monad, and the interpretation of the base
%distributions.
%A type $\tau$ will get mapped to
%an object $\sem{\tau}$, while a judgment $\G \vdash t : \tau$ will get
%mapped to a morphism $\sem{\G \vdash t : \tau} \in
%\QBS(\sem{\G},\sem{\tau})$, where $\sem{\G}$ is defined as the
%cartesian product of the interpretations of the types in $\G$.

The semantics of types is defined as objects in $\QBS$, assuming we
have an object $\sem{b_i}$ for every base type $b_i \in B$. Note that
the indices of the monad and the quantification over regions are
erased at the semantic level (below, for a natural number $K$,
$\bar{K}$ denotes the set $\{0,\cdots,K\}$):
\begin{gather*}
  \sem{\tbool} \triangleq \qbsbool\quad
	\sem{\nat[K]} \triangleq \qbsnat{\bar{K}}\quad
  \sem{\tunit} \triangleq \qbsunit\quad
  \sem{\tval} \triangleq \qbsval\quad
  \sem{\tmem} \triangleq \qbsmem\quad
  \\
  \sem{\sigma\to\tau} \triangleq \sem{\sigma}\To\sem{\tau} \quad
  \sem{\sigma\times\tau} \triangleq \sem{\sigma}\times\sem{\tau}\quad
  \sem{\mon{\Sigma,k}{\sigma}} \triangleq \mpstn(\sem{\sigma}) \quad
  \sem{\forall\alpha. \tau} \triangleq \sem{\tau}
\end{gather*}

This categorical semantics erases the effect annotations $\Sigma,k$
of the monadic type $\mon{\Sigma,k}{\tau}$ and the universal
quantification $\forall\alpha.\tau$ over regions, which only play a role
in proving soundness of the adversary rules. Adversary variables
are placeholders for closed terms, and do not play any special role in the
semantics. We therefore give a semantics of the language without
contexts $\Xi$ and $\Delta$. %\sk{Is the treatment of adversaries correct?}

We interpret the subtyping relation $\Xi\vdash\tau\preceq\tau'$ as a
coercion morphism $\coe\tau{\tau'}:\sem\tau\to\sem{\tau'}$.  Most of
its definition is routine, except for the case of natural number
type: for $K\le K'$, $\coe{\nat[K]}{\nat[K']}$ is defined to be the
evident morphism $\qbsnat{\bar K}\to\qbsnat{\bar{K'}}$.

Semantics of a context $\G$ is given by the Cartesian product of the
interpretation of types in $\G$. For convenience, we fix a product
$(\sem\G,\{\pi_x^\G:\sem\G\to\sem{\G(x)}\}_{x\in\dom(\G)})$ for each
context $\G$.  For a context $\G,x_1:\tau_1,\cdots,x_n:\tau_n$, by
$\mixm\G{x_1:\tau_1,\cdots,x_n:\tau_n}:
\sem\G\times(\sem{\tau_1}\times\cdots\times\sem{\tau_n}) \to
\sem{\G,x_1:\tau_1,\cdots,x_n:\tau_n}$ we mean the evident isomorphism
in $\QBS$. Also, for a
well-typed term $\G\vdash t:\tau$ and $x\not\in\dom\G$, we define the
substitution morphism
$\sub{\G\vdash t:\tau}{x}:\sem\G\to\sem{\G,x:\tau}$ to be the
composite
$\mix\G x\tau\circ \langle id_{\sem\G},\sem{\G\vdash t:\tau}\rangle$.
% the unique one satisfying
% $\pi_x\circ\sub{\G\vdash t:\tau}{x}=\sem{\G\vdash t:\tau}$
% and $\pi_y\circ\sub{\G\vdash t:\tau}{x}=\pi_y$ for any $y\in\dom\G$.

\begin{figure}[t]
\small
  \begin{align*}
    % unit
    \sem{\G\vdash \unit{t} : \mon{\Sigma,k}{\sigma}}
    &\triangleq
      \eta^\mpstn \circ \sem{\G\vdash t : \sigma} \\
    % let
    \sem{\G\vdash \mlet{x}{t}{u} : \mon{\Sigma,k}{\tau}}
    & \triangleq
      \sem{\G,x:\sigma\vdash u : \mon{\Sigma,k}{\tau}}^{\#\mpstn} \circ \mix\G x\sigma\circ \theta^\mpstn \circ \langle id_{\sem\G}, \sem{\G\vdash t : \mon{\Sigma,k}{\sigma}} \rangle \\
    % read
    \sem{\G\vdash \rd{a} : \mon{\Sigma,k}{\tval}}
    &\triangleq
      \lambda(\eta^\mprob \circ \langle \pi_a, id \rangle \circ \pi_2)\\
    % write
    \sem{\G\vdash \wrt{a}{t} : \mon{\Sigma,k}{\tunit}}
    &\triangleq 
      \lambda(\eta^\mprob \circ \langle !,\id\rangle\circ u_a(\sem{\G\vdash t: \tval}))\\
    % operation
      \sem{\G\vdash \sample{\nu(t_1,\dots,t_k)} : \mon{\Sigma,k}{\sigma_{\nu}}}
    &\triangleq
      \lambda(\vartheta^\mprob \circ \langle 
      \sem{\nu}\circ 
      \langle \sem{\G\vdash t_1\colon \tau_{\nu,1}},\dots,\sem{\G\vdash t_k\colon \tau_{\nu, |\nu|}}\rangle\circ\pi_1, \pi_2 \rangle)\\
    % mfold
    \sem{\G\vdash\mfold{n}{t_1}{t_2}\colon \mon{\Sigma\cup\Sigma',k + K\cdot k'}{\sigma}}
    &\triangleq \\
    &{} \hspace{-10em}{\it fold}_{\qbsnat K,\sem{\mon{\Sigma,k}{\sigma}}} \circ \langle \sem{\G\vdash n \colon \NN[K]}, \sem{\G\vdash t_1\colon\mon{\Sigma,k}{\sigma}}, {\it Kl}_{\sem\G, \sem{\mon{\Sigma',k'}{\sigma}}} \circ \sem{\G\vdash t_2\colon \sigma\to\mon{\Sigma',k'}{\sigma}} \rangle
  \end{align*}
  \vspace*{-1.5em}
  \caption{Semantics of the language}
  \label{fig:semnon}
\end{figure}
Well-typed terms $\G\vdash t :\sigma$ are
interpreted as a morphism in $\QBS(\sem{\G},\sem{\sigma})$. 
The interpretation of monadic types can be found in Figure
\ref{fig:semnon}; the interpretation of the non-monadic fragment
is standard and deferred to
Appendix \ref{sec:interpb}. In the Figure,
$(-)^{\#\mpstn}$ denotes the Kleisli lifting of $\mpstn$;
${\it Kl}_{X,Y} \colon (X \To \mpstn Y) \to (\mpstn X \To \mpstn Y)$
denotes the internal Kleisli lifting;
$\vartheta^\mprobn_{X,Y} : \mprob(X) \times Y \to \mprob(X\times Y)$
is the co-strength, a transformation analogous to the strength but
with swapped arguments; and
${\it fold}_{\qbsnat K,X} \colon \qbsnat K \times X \times
(X \To X) \to X$ denotes the iterator over the natural
numbers up to $K$. We assume that every distribution $\nu$
with arity
$\tau_{\nu,1}\times\dots\times\tau_{\nu,|\nu|} \to \sigma_{\nu}$ has
an interpretation $\sem{\nu}$ of the proper type
$\sem{\tau_{\nu,1}}\times\dots\times\sem{\tau_{\nu,|\nu|}} \to\mprob
\sem{\sigma_{\nu}}$.  This semantics is sound in the following sense:
\begin{theorem}
  Let $\Xi\mid\Delta\mid\G \vdash t \colon \sigma$ be a well-typed
  term and $\emptyset\vdash t_i:\Delta(\cA_i)$ be closed terms
  given for each $\cA_i\in\dom(\Delta)$.  Then
  $\sem{\G \vdash t[t_i/\cA_i]_{\alpha\in\dom\Delta} \colon
    \sigma} \in \QBS(\sem{\G},\sem{\sigma})$.
  %\sk{Is the treatment of adversaries correct?}
\end{theorem}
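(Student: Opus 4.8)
The plan is to prove the statement by induction on the typing derivation of $\Xi \mid \Delta \mid \G \vdash t : \sigma$, establishing simultaneously that the substituted term is typeable in the adversary-free fragment, $\Xi \mid \emptyset \mid \G \vdash t[t_i/\cA_i] : \sigma$, and that its interpretation is a well-defined morphism in $\QBS(\sem{\G}, \sem{\sigma})$. Since the semantics erases both the grading annotations on the monadic type $\mon{\Sigma,k}{\tau}$ and the universal quantification $\forall\alpha.\tau$, most cases reduce to Moggi's interpretation of the computational metalanguage in a CCC equipped with a strong monad, here instantiated to $\QBS$ with the probabilistic state monad $\mpstn$ defined above.

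First I would record a substitution lemma: if $\Xi \mid \Delta \mid \G \vdash t : \sigma$ and $\emptyset \vdash t_i : \Delta(\cA_i)$ for every $\cA_i \in \dom(\Delta)$, then $\Xi \mid \emptyset \mid \G \vdash t[t_i/\cA_i] : \sigma$. As the replacement terms $t_i$ are closed, they introduce no new adversary dependencies, so the substitution empties the adversary context. The lemma follows by a routine induction on the derivation: for the standard constructs (abstraction, application, pairing, projections, and the monadic primitives \textsf{Unit}, \textsf{Bind}, \textsf{Read}, \textsf{Write}, \textsf{Skip}, \textsf{Sample} and \textsf{Fold}) the substitution commutes with the term former and the inductive hypotheses close each case. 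The matching semantic clauses are exactly those of Figure~\ref{fig:semnon} together with the standard clauses for the non-monadic fragment, so well-definedness of the interpretation transfers along the induction.

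The heart of the proof, and the step I expect to be the main obstacle, is the treatment of the two adversary rules. In rule \textsf{Adv} the premise types an application $\cA\,t$ where $\cA$ has polymorphic type $\forall\alpha.(\sigma\to\mon{\alpha,k}{\tau})\to\mon{\alpha\cup\Sigma,k'}{\tau'}$; writing $t_i$ for the closed term replacing $\cA$ and $t'$ for $t$ under the same substitution, the term $\cA\,t$ becomes the ordinary application $t_i\,t'$. The key observation is that $t_i\,t'$ is now derivable using only standard rules, namely \textsf{ForAll-E} instantiating $\alpha := \Sigma'$ followed by ordinary function application, producing exactly the conclusion type $\mon{\Sigma\cup\Sigma',k'}{\tau'}$. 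Thus, post-substitution, \textsf{Adv} is nothing but quantifier elimination composed with application, and since $\forall\alpha.\tau$ and all gradings are erased, its interpretation is simply the semantic application of the morphism denoting $t_i$ to that denoting $t'$. Rule \textsf{Adv-Inst} is handled analogously: its conclusion already carries a term-level substitution $t[u/\cA]$ with $u$ typed in an empty variable context, so composing this inner substitution with the outer one again yields a term free of adversary variables.

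For the remaining structural rules that touch only gradings, namely \textsf{ForAll-I} and \textsf{ForAll-E}, the semantics is unaffected because the indices are erased, so the inductive hypothesis applies verbatim; and for \textsf{Subtype} the interpretation is postcomposed with the coercion morphism $\coe{\tau'}{\tau} \colon \sem{\tau'} \to \sem{\tau}$, whose well-definedness was fixed when the semantics was set up. Assembling the cases, the fully substituted term is typeable in the adversary-free fragment and its denotation lies in $\QBS(\sem{\G},\sem{\sigma})$, which is the claim. The only genuine difficulty is the bookkeeping around \textsf{Adv} and \textsf{Adv-Inst}: one must check that the grading instantiation lines up so that the post-substitution application is derivable by the standard rules; once that is verified, every semantic obligation is immediate from the categorical structure of $\QBS$ and Moggi's interpretation.
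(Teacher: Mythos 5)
Your proof is correct and takes essentially the approach the paper intends: the paper states this theorem without an explicit proof, remarking only (in the appendix on the non-monadic interpretation) that the rules \textsf{Adv} and \textsf{Adv-Inst} are not given semantic clauses because adversaries are instantiated by closed terms, so the interpretation is defined purely on the adversary-free fragment. Your substitution lemma, together with the observation that after substitution \textsf{Adv} reduces to \textsf{ForAll-E} followed by ordinary application (with gradings and quantification erased at the semantic level, everything collapses to the Moggi-style interpretation in $\QBS$ with $\mpstn$), is precisely the elaboration of that remark.
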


\subsection{Heyting-Valued Predicates over QBSs}

\renewcommand{\O}{\Omega}
\renewcommand{\S}{{\bf 2}}
\newcommand{\bO}{{\bf\Omega}}
\newcommand{\UPred}[2]{{\bf UP}^{#1}_{#2}}

Formulas and assertions are interpreted in the same way as {\em
  predicates over QBSs}. Recall that a complete Heyting algebra is a
complete lattice $\bO=(\O,\sqsubseteq)$ (whose meet and join are
denoted by $\sqcap$ and $\sqcup$ respectively) together with a {\em
  pseudo-complement} operator $\Rrightarrow$.
\begin{definition}
  Let $\bO=(\O,\sqsubseteq)$ be a complete Heyting algebra.  An
  $\bO$-valued predicate on a QBS $X$ is a function 
  %(note: not a $QBS$ morphism)
  of type
  $|X|\to\O$. Define $\UPred\bO X\triangleq\Set(|X|,\O)$
  to mean the set of $\bO$-valued predicates on $X$.
\end{definition}
By the pointwise order, $\UPred\bO X$ is again a complete Heyting
algebra.  We define $\S\triangleq\{\bot\sqsubseteq\top\}$ to mean the
Sierpinski space complete Heyting algebra. For $x\in |X|$ and
$P\in\UPred\S X$, we say that $x$ satisfies $P$, denoted by
$x\models P$, if and only if $P(x)=\top$.

Every $\S$-valued predicate can be converted into a $\bO$-valued
predicate.  Define $I: \UPred\S X\to\UPred\bO X$ by $I(P)(x)=\top_\bO$
if $P(x)=\top_\S$ and $I(P)(x)=\bot_\bO$ if $P(x)=\bot_\S$. This is a
complete Heyting algebra homomorphism, that is, a function preserving
all joins, all meets and pseudo-complements.

We introduce a generalized inverse image operation for $\bO$-valued
predicates. For a QBS-morphism $f:Y\to X$, define
$f^*:\UPred\bO X\to\UPred\bO Y$ by $f^*P=P\circ|f|$. This is also a
complete Heyting algebra homomorphism. We also introduce a notation: for a
QBS morphism $f:X\to Y$ and $P\in\UPred\bO X$ and $Q\in\UPred\bO Y$,
we write $f:P\dto Q$ to mean the inequality $P\sqsubseteq f^*Q$ in
$\UPred\bO X$.  When $\bO=\S$, $f:P\dto Q$ is equivalent to stating that for
any $\gamma$, $\gamma\models P$ implies $|f|(\gamma)\models Q$.

We also define the partial application of an $\bO$-valued predicate with
an environment.  Let $\Gamma,x_1:\tau_1,\cdots,x_n:\tau_n$ be a
context.  For a predicate
$P\in\UPred\bO{\sem{\G,x_1:\tau_1,\cdots,x_n:\tau_n}}$ and
$\g\in|\sem\G|$, by
$P_\gamma\in\UPred\bO{\sem{\tau_1}\times\cdots\times\sem{\tau_n}}$ we
mean the predicate
$ P_\gamma(p)=P\circ
|\mixm\Gamma{x_1:\tau_1,\cdots,x_n:\tau_n}|(\gamma,p).$
For $P\in\UPred\bO X$ and $Q\in\UPred\bO Y$, we define $P\dtimes Q=
\pi_1^*P\sqcap \pi_2^*Q$.

Note that we have chosen predicates to be morphisms in $\Set$, rather
than morphisms in $\QBS$. This allows us to avoid reasoning about
measurability when defining predicates and writing specifications, while
still having a model that works as intended when the predicates are
measurable.

% Note that here the predicates are arbitrary functions from the
% carrier rather than QBS morphisms.  This is a technical choice which
% comes with a trade-off. On the one hand, for a user of the logic it
% is easier to write specifications, since they do not have to be
% concerned with measurability issues.

%Moreover, this means that the
%fibration $\domfib{}\colon\Pred{\QBS}\to\QBS$ can be constructed as a pullback of the domain
%fibration $\Pred{\Set}\to\Set$, and the forgetful functor $\QBS\to\Set$,
%equipping $\Pred{\QBS}$ with a similar logical structure as $\Pred{\Set}$.  In
%particular, we can the show that $\domfib{}:\Pred{\QBS}\to\QBS$ is a LR
%fibration.
% On the other hand, the semantics of generalized predicates are
% harder to express since we cannot directly compute probabilities or measure
% such arbitrary predicates.

\subsection{Strong Graded Liftings of the Probability Measure Monad}
\label{sec:st-gr-lift}

% Using the cartesian closed structure on $\QBS$ and the complete
% Heyting algebra structure on predicates, we introduce the following
% functions on predicates (which are not complete-Heyting algebra
% homomorphisms):
% \begin{align*}
%   (\dtimes) &: \UPred\bO X\times\UPred\bO Y\to\UPred\bO{X\times Y}
%   & P\dtimes Q&=\pi_1^*P\sqcap\pi_2^*Q \\
%   (\dTo) &: \UPred\bO X\times\UPred\bO Y\to\UPred\bO{X\To Y}
%   & P\dTo Q&=\forall_{X\To Y,X}(\pi_2^*P\Rrightarrow \ev^*Q)
% \end{align*}
% They are an abstract formulation of logical predicates, which
% was first introduced by Hermida \cite{hermidaphd}.

We introduce a concept called {\em strong graded lifting} of strong
monads.  The following definition is a specialization of the one in
\cite{10.1145/2535838.2535846} to $\bO$-valued predicates.
\begin{definition}[Heyting-valued strong graded lifting of strong
  monad]
  Let $(E,\le,0,+)$ be a partially ordered monoid.  An {\em
    $\bO$-valued strong $E$-graded lifting} of $\mprobn$ is a family
    of functions $\dot{\cP}_X \colon E\to(\UPred\bO X\To\UPred\bO{\mprob X})$,
  implicitly indexed by $X\in\QBS$, satisfying:
  \begin{align*}
    e\le e' & \implies \glprob e P\sqsubseteq \glprob {e'}P &
    \eta^\mprob_X &:P\dto\glprob 0 P \\
    \mu^\mprob_X & :\glprob e {\glprob{e'}P}\dto\glprob{e + e'}P &
    \theta^\mprob_X & :P\dtimes \glprob e Q\dto\glprob e{P\dtimes Q}.
  \end{align*}
\end{definition}

The following is an informal explanation of liftings, ignoring for the
moment the gradings. The second and third conditions specify how the
unit and multiplication of the lifting interact with the unit and
multiplication of the base monad: for a predicate $P$ over $X$ and
$x \in P$, then $\eta(x) \in \glprob 0 P$, and if
$x\in \glprob e {\glprob{e'} P}$, then $\mu(x) \in
\glprob{e + e'}Q$. The fourth condition specifies a similar interaction
with the strength. These properties are used in proving the soundness of
the rules of our logics.  At the level of liftings, gradings can be seen
as some additional specification, or as adding quantitative
information to the specification. For instance, in HO-UBL use the
grading on a lifting to specify the probability with which a
computation may fail to satisfy the specification. This is the
intuition behind the first condition in the definition: it allows
weakenings of the grading of a lifted predicate, i.e., if $e\le e'$
then $\glprob eQ \To \glprob {e'}Q$. We now present some examples of
liftings that we will use in our soundness proofs:

\begin{example}[Lifting for union bounds]\label{ex:lifting-UB}
  Inspired from the lifting for the union bound introduced
  in~\cite[Section 9.1]{SatoABGGH19}, we give a $\S$-valued strong
  $([0,\infty],\leq,+,0)$-graded lifting $\glubn$ of $\mprobn$:
  \begin{align*}
    \glub[X]\delta P(\nu)=\top
    & \iff
      \forall f \in \QBS(X,\{0,1\}\cdot 1),P\sqsubseteq |f|~.~
      \Pr_{x\sim\nu}[f(x)=1] \geq 1-\delta.
  \end{align*}
  This can be constructed by the graded
  $\top\top$-lifting~\cite{10.1145/2535838.2535846}.
  Morally, we want $\glub[X]\delta P(\nu)$ to hold if the probability of sampling
  a value from $\mu$ that satisfies $P$ is at least $1-\delta$. However, we cannot
  compute this probability directly because $P$ may not be measurable. Instead,
  we need to quantify over all the measurable $f$ above $P$.
  We can then show that if $P$ is indeed measurable (i.e. $P=|P_0|$ for some QBS-morphism $P_0$),
  $\Pr_{x \sim \nu}[P(x)=0] \leq \delta$ holds for every
  $\nu \in \glub\delta P$. That is, for measurable predicates, the
  lifting behaves as intended.
%	\aag{This discussion about non-measurable assertions could appear earlier in the paper.
%	We should also expand on how we solve the issue.}
\end{example}

\begin{example}[Lifting for expectations]\label{ex:lifting-EXP}
    We introduce a $([0,\infty], \geq)$-valued strong $([0,\infty],\leq,+,0)$-graded lifting
$\dot\mprob^{\mathrm{Exp}}$ of $\mprob$. Note that the predicates take values
in the Heyting algebra $[0,\infty]$ \emph{with reversed order} (i.e. $x \sqsubseteq y$ iff $y \leq x$),
which will be used to reason about upper bounds. We define:
\begin{align*}
\glexp[X]\delta P(\nu) &= 
\sup\{
S_{\delta + \delta'}(f^\sharp \nu)
~|~
f \colon X \to \mprob[0,\infty],
\sup\{ S_{\delta'}(f(i))~|~ P(i) {<} S_{\delta'}(f(i))  \} {<} S_{\delta + \delta'}(f^\sharp \nu)
\}\\
&\qquad\text{where }
S_\delta (\mu) = \max(0,\EE_{r \sim \mu}[r] - \delta).
\end{align*}
This can be constructed by the graded $\top\top$-lifting~\cite{10.1145/2535838.2535846}
since $\delta \leq \delta' \implies S_\delta \sqsubseteq S_{\delta'}$.
Intuitively, the lifting $\dot\mprob^{\mathrm{Exp}}$ gives an upper bound of 
expected value of $P$ under a distribution $\nu$ with margin of error $\delta$.
In the general case, where $P$ is not measurable, we get instead
an upper bound on the expected value of any $f \colon X \dot\to [0,\infty]$ measurable in $\QBS$ 
such that $|f| \leq P$ (i.e. $P \sqsubseteq |f|$). 
%that $ S_{0}(f(i)) \leq P(i)$ for all $i \in X$.
That, is we obtain $\dot\mprob^{\mathrm{Exp}}_X(\delta)(P)(\nu) \sqsubseteq \EE_{x \in \nu}[f(x)]-\delta$.
%because
%\[
%\EE_{x \in \nu}[f(x)]-\delta
%=
%\EE_{x \in \nu}[\EE_{r \sim \eta^\mprob_{[0,\infty]} \circ f(x)}[r]]-\delta
%=
%\EE_{x \in (\eta^\mprob_{[0,\infty]} \circ f)^\sharp \nu}[f(x)]-\delta
%\leq
%\dot\mprob^{\mathrm{Exp}}_X(\delta)(P)(\nu).
%\]
\end{example}

%\aag{I have a question about the proof of this. In the case (3), how do you
%prove the last inequality? Shouldn't it actually be $\geq$?}
%\tetsuya{Mmm...The proof does not work, let me fix (replacing with some TT-lifitng)}

\subsection{Combining liftings and state transformer monads}

The material from the previous section allows us to model predicates over
the monad $\mprob$, but we need to extend it to
model predicates over the probabilistic state monad $\mpstn$ that models
computations in our language.
The same approach of finding a lifting of $\mpstn$  does not work
directly because it would not allow us to include the specification about
states. Such a lifting would map a $\bO$-valued predicate over $X$ to a $\S$-valued predicate over
$\mpst X=\qbsmem \To \mprob(X\times \qbsmem)$, but this does not match the shape of triples
in our logics.
We actually need to lift a pair of $\bO$-valued predicates over $\qbsmem$ (the
precondition) and over $X\times \qbsmem$ (the postcondition) into a $\S$-valued predicate over
$\mpst X$.

Therefore, we need to find a different construction.  Assume there
exists a $\bO$-valued strong $(E,\le,0,+)$-graded lifting $\glprobn$ of
$\mprob$. For each QBS $X$, we define a function
$
  \lpst [X] {-} {-,-}
  :
  E\to
  (\UPred\bO \qbsmem \times\UPred\bO {X\times \qbsmem}\To\UPred\S{\mpst X})
$
by
$
  f\models\lpst [X] e {P,Q}
  \iff
  f : P\dto\glprob \delta Q.
$
Recall that $|\mpst X|=\QBS(\qbsmem,\mprob(X\times\qbsmem))$.  We call
$\lpstn$ a stateful lifting. This can be seen as a transformer that
takes a $\bO$-valued strong $E$-graded lifting $\glprobn$ of $\mprob$
and returns a stateful lifting of the probabilistic state transformer
monad $\stmtr{\mprob}$.  In plain words, $\lpstn$ maps an $\bO$-valued
precondition $P\in\UPred\bO \qbsmem$ and an $\bO$-valued postcondition
$Q\in\UPred\bO{X\times \qbsmem}$ to the computations in $\mpst(X)$
that send initial memories in $P$ to distributions over
$X\times \qbsmem$ satisfying the lifted predicate $\glprob eQ$. In a
way, this can be seen as the set of computations
$f\colon \mon{\Sigma,e}{\sigma}$ satisfying the generalized Hoare
triple $\{ P \} f:\mon{\Sigma,k}\sigma \{\!\{Q\}\!\}$.

This operator is not an $E-$graded lifting, because it does not
have the appropriate type.
However, properties of $\bO$-valued strong $E$-graded
liftings can be extended to $\lpstn$ as stated below:
\begin{lemma}\label{lem:gr-lift}
  Let $\glprobn$ be an $\bO$-valued strong $(E,\le,1,\cdot)$-graded lifting
  of $\mprobn$. Let $f \in \QBS(X\times\qbsmem,\mprob(Y\times\qbsmem))$, and $P\in\UPred\S X$, $Q\in\UPred\bO\qbsmem$,
  $R\in\UPred\bO{X\times\qbsmem}$ and $S\in\UPred\bO{Y\times\qbsmem}$
  be predicates. The following holds:
  \begin{align*}
    &\eta^\mpstn_X:P\dto \lpst[X] 0 {Q,IP\dtimes Q}
    \\
    &f:R\dto \glprob e S \implies
    (\lambda(f))^{\#\mpstn}:\lpst[X]{e'}{Q,R}\dto\lpst[Y]{e' + e}{Q,S}
    \\
    &\theta^{\mpstn}_{X,Y}: P\dtimes\lpst[Y]e{Q,R}\dto \lpst[X\times Y]e{Q,\incl P\sqcap R}
  \end{align*}
\end{lemma}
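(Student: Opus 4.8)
The plan is to unfold the definition of the stateful lifting, $f \models \lpst[X]{e}{Q,R}$ iff $f : Q \dto \glprob{e}{R}$ (equivalently $Q \sqsubseteq f^*(\glprob{e}{R})$), so that each of the three statements reduces to an inequality between $\bO$-valued predicates on $\qbsmem$ discharged by exactly one of the monad conditions on $\glprobn$: the unit condition for the first, the multiplication condition for the second, and the strength condition for the third. I would use freely that reindexing $(-)^*$ is a monotone, contravariantly functorial complete Heyting algebra homomorphism, together with the concrete descriptions of the operations of $\mpstn$ arising from the state-monad-transformer structure: $\eta^\mpstn_X(x)(m) = \eta^\mprob(x,m)$; the Kleisli extension $(\lambda(f))^{\#\mpstn}(g) = f^\sharp \circ g$, where $f^\sharp = \mu^\mprob \circ \mprob(f)$ is the Kleisli extension in $\mprob$; and $\theta^\mpstn_{X,Y}(x,g)(m) = \mprob(\alpha^{-1})(\theta^\mprob(x,g(m)))$, with $\alpha$ the product associator. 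I also use that each $\glprob{e}{-}$ is a lifting of the endofunctor $\mprob$ (hence monotone in its predicate argument and natural: $h : A \dto B$ implies $\mprob(h) : \glprob{e}{A} \dto \glprob{e}{B}$), which is part of the definition of graded lifting in the cited framework.

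For the first statement, fix $x \models P$, so $IP(x) = \top_\bO$, and fix $m$. The unit condition instantiated at $IP \dtimes Q$ reads, pointwise at $(x,m)$, $(IP \dtimes Q)(x,m) \sqsubseteq \glprob{0}{IP \dtimes Q}(\eta^\mprob(x,m))$; since $IP(x) = \top_\bO$ the left side equals $Q(m)$ and $\eta^\mprob(x,m) = \eta^\mpstn_X(x)(m)$, whence $Q \sqsubseteq (\eta^\mpstn_X(x))^*(\glprob{0}{IP \dtimes Q})$, as required. For the second statement the crux is the auxiliary fact $f^\sharp : \glprob{e'}{R} \dto \glprob{e'+e}{S}$ whenever $f : R \dto \glprob{e}{S}$: monotonicity of $\glprob{e'}{-}$ turns $R \sqsubseteq f^*(\glprob{e}{S})$ into $\glprob{e'}{R} \sqsubseteq \glprob{e'}{f^*(\glprob{e}{S})}$, naturality over $f$ gives $\glprob{e'}{f^*(\glprob{e}{S})} \sqsubseteq (\mprob(f))^*(\glprob{e'}{\glprob{e}{S}})$, and applying $(\mprob(f))^*$ to the multiplication inequality $\glprob{e'}{\glprob{e}{S}} \sqsubseteq (\mu^\mprob)^*(\glprob{e'+e}{S})$, with $(\mu^\mprob \circ \mprob(f))^* = (f^\sharp)^*$, yields $\glprob{e'}{R} \sqsubseteq (f^\sharp)^*(\glprob{e'+e}{S})$. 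Pulling this back along any $g \models \lpst[X]{e'}{Q,R}$ (so $Q \sqsubseteq g^*(\glprob{e'}{R})$) and using $(f^\sharp \circ g)^* = g^* \circ (f^\sharp)^*$ gives $Q \sqsubseteq ((\lambda(f))^{\#\mpstn}(g))^*(\glprob{e'+e}{S})$, which is the claim.

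For the third statement, fix $(x,g) \models P \dtimes \lpst[Y]{e}{Q,R}$, so $x \models P$ and $g : Q \dto \glprob{e}{R}$, and fix $m$. The strength condition at $X$ and $Y \times \qbsmem$ with predicates $IP$ and $R$ gives $\theta^\mprob : IP \dtimes \glprob{e}{R} \dto \glprob{e}{IP \dtimes R}$; evaluating at $(x, g(m))$, using $IP(x) = \top_\bO$ and $Q(m) \sqsubseteq \glprob{e}{R}(g(m))$, produces $Q(m) \sqsubseteq \glprob{e}{IP \dtimes R}(\theta^\mprob(x, g(m)))$. A short reindexing computation shows $(\alpha^{-1})^*(\incl P \sqcap R) = IP \dtimes R$, so naturality of $\glprobn$ over $\alpha^{-1} : X \times (Y \times \qbsmem) \to (X \times Y) \times \qbsmem$ gives $\glprob{e}{IP \dtimes R} \sqsubseteq (\mprob(\alpha^{-1}))^*(\glprob{e}{\incl P \sqcap R})$; evaluating at $\theta^\mprob(x, g(m))$ and recalling $\theta^\mpstn_{X,Y}(x,g)(m) = \mprob(\alpha^{-1})(\theta^\mprob(x,g(m)))$ yields $Q(m) \sqsubseteq \glprob{e}{\incl P \sqcap R}(\theta^\mpstn_{X,Y}(x,g)(m))$. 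As $m$ is arbitrary, $\theta^\mpstn_{X,Y}(x,g) \models \lpst[X\times Y]{e}{Q, \incl P \sqcap R}$.

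The pointwise unfoldings are routine; the parts I expect to require the most care are pinning down the concrete forms of $(\lambda(f))^{\#\mpstn}$ and $\theta^\mpstn$ from the definitions of $\mu^\mpstn$ and $\theta^\mpstn$, and the associativity and reindexing bookkeeping in the third statement, where one must verify $(\alpha^{-1})^*(\incl P \sqcap R) = IP \dtimes R$ and invoke the functor-lifting (monotonicity and naturality) of $\glprobn$ rather than only the displayed monad conditions.
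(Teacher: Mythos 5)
Your proof is correct, and each of the three clauses is handled the way the framework intends: unfold the definition of $\lpstn$, compute the concrete state-transformer forms of the monad operations (your identities $\eta^\mpstn_X(x)(m)=\eta^\mprob_{X\times\qbsmem}(x,m)$, $(\lambda(f))^{\#\mpstn}(g)=(\mu^\mprob\circ\mprob(f))\circ g$ and $\theta^\mpstn_{X,Y}(x,g)(m)=\mprob(\alpha^{-1})(\theta^\mprob(x,g(m)))$ all agree with the paper's definitions of $\eta^\mpstn$, $\mu^\mpstn$ and $\theta^\mpstn$), and discharge the resulting pointwise inequalities with the unit, multiplication and strength axioms respectively. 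For the record, the paper states this lemma without any proof, in the body or the appendix, so there is no official argument to compare yours against; what you wrote is the proof the paper leaves implicit, and it is exactly the shape that its uses in the soundness propositions require.

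The one substantive point is the one you yourself flag. Clauses two and three cannot be derived from the four conditions displayed in the paper's definition of an $\bO$-valued strong graded lifting alone: clause two needs $\mprob(f):\glprob{e'}{f^*(\glprob{e}{S})}\dto\glprob{e'}{\glprob{e}{S}}$ before the multiplication axiom can fire, and clause three needs $\mprob(\alpha^{-1}):\glprob{e}{IP\dtimes R}\dto\glprob{e}{\incl P\sqcap R}$, which is unavoidable because $X\times(Y\times\qbsmem)$ and $(X\times Y)\times\qbsmem$ are only isomorphic, not equal. Both are instances of functoriality (naturality) of $\glprobn$: $h:A\dto B$ implies $\mprob(h):\glprob{e}{A}\dto\glprob{e}{B}$ (which also subsumes the monotonicity in the predicate argument that you use). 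Your attribution of this property to the cited framework \cite{10.1145/2535838.2535846} is accurate --- there a graded lifting is a graded monad on the total category of the predicate fibration, so the functor part is lifted as well --- and it holds for every lifting the paper actually constructs via graded $\top\top$-lifting or codensity lifting. So this is an omission in the paper's specialized definition rather than a gap in your argument; listing naturality as an explicit hypothesis, as you effectively do, is the right way to make the lemma self-contained.
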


These consequences can be explained informally by using the language of Hoare
logic:
\begin{itemize}
\item The first consequence states that if $x \in \phi$, then
  $\{Q\}~\eta(x)~\{\!\{\phi \sqcap Q\}\!\}_0$ is a valid generalized
  Hoare triple for any $Q$.
\item The second consequence gives us a way to sequence computations as
  in Hoare logic. It states that if $f$ satisfies
  $\{ Q(x,-) \}~f(x)~\{\!\{ R \}\!\}_e$ for every argument $x$, and
  $t$ satisfies $\{P\}~t~\{\!\{Q\}\!\}_{e'}$, then
  $\{P\}~\mlet{x}{t}{f}~\{\!\{R\}\!\}_{e+e'}$ is a valid generalized
  Hoare triple. Here $Q(x,-)$ is the set of memories $m$ such that
  $(x,m)\in Q$.
\item The third consequence states that if $x \in \phi$ and we have $t$ such
  that $\{P\}~t~\{\!\{Q\}\!\}_e$ then
  $\{P\}~\theta(x,t)~\{\!\{\phi \sqcap Q\}\!\}_e$ is a valid
  generalized Hoare triple.
\end{itemize}

% This is also related to the notion of \emph{Hoare types} by
% Nanevski. The major difference is threefold: it is extended with
% probabilistic computations, the type has grading and the pre- and
% post-conditions are $\bO$-valued predicates.

% \emph{parametrized monad}
% introduced by~\citet{Atkey09}, which are functors of type
% ${\bf D}^{\it op} \times {\bf C} \times {\bf D} \to {\bf C}$
% that generalize monads and satisfy similar coherence conditions.

\subsection{Soundness of the unary logics}

We interpret formulas, assertions and entailment relations in the
logic by complete Heyting algebras over QBSs.  We first develop the
semantics of HOL judgements of the form
$\bullet\mid\bullet\mid\G\mid\Psi\vdash\phi$, which is simply denoted
by $\G\mid\Psi\vdash\phi$. We then interpret
an open judgement $J=\Xi\mid\Delta\mid\G\mid\Psi \vdash \phi$
as the conjunction of all closed instantiations
$\bullet~|~\bullet~|~\G'\vdash \Psi'\vdash\Phi'$ of $J$. Here,
each $\alpha\in\Xi$ is instanciated with some subset of $\loc$, and
$\mathcal A\in\Delta$ is instanciated with a closed term of type
$\Delta(\mathcal A)$. The semantics of open judgements
of UHOL and HO-UBL are similarly defined.
This interpretation is well-behaved with respect to substitution. In
particular, the substitution $\phi[t/x]$ of $x$ by a term $t$ of
appropriate type can be interpreted by the inverse image
$
\sem{\G \vdash \phi[t/x] }
= (\sub{\G\vdash t:\tau} x)^*
\sem{\G,x\colon\tau \vdash \phi }.
$
The soundness results of~\citet{ABGGS17} for the
base logics HOL and UHOL can be recovered in this
setting, but we defer it to the appendix.

We interpret HO-UBL using the lifting $\dstmtr\glubn$ of the
probabilistic state monad constructed from the lifting for the union
bound logic $\glubn$ as in Example~\ref{ex:lifting-UB}.
The soundness result is stated as:
\begin{proposition}\label{prop:ho-ubl-sound}
  Let $\aumj{\G}{\Psi}{P}{t}{\mon{\Sigma,k}{\tau}}{Q}{\delta}$ be a
  derivable HO-UBL judgment without the adversary rule.
  Then, for any
  $\g\in|\sem{\G}|$, $\g\models\sem{\holwf{\G}{\textstyle\bigwedge \Psi}}$
  implies
  \begin{align*}
    &
      \sem{\G\vdash t : \mon{\Sigma,k}{\tau}}(\g)
      \models
      \dstmtr{\glubn}(\delta)(\sem{\G,\st:\tmem\vdash P}_\g,\sem{\G,\vl:\tau,\st:\tmem\vdash Q}_\g).% \\
    % &\iff
    %   \sem{\G\vdash t : \mon{\Sigma,k}{\tau}}(\g)
    %   :
    %   \sem{\G,\st:\tmem\vdash P}_\g
    %   \dto
    %   \glub\delta{\sem{\G,\vl:\tau,\st:\tmem\vdash Q}_\g}\\
    % &\iff
    %   \sem{\G\vdash t : \mon{\Sigma,k}{\tau}}(\g)
    %   :
    %   \sem{\G,\st:\tmem\vdash P}_\g
    %   \dto
    %   \forall{f\in\QBS(\sem\tau\times\qbsmem,\{0,1\}\cdot 1)}
    %   \sem{\G,\vl:\tau,\st:\tmem\vdash Q}_\g\sqsubseteq |f|\implies
    %   Pr[f(x)=1]\ge 1-\delta.
  \end{align*}
\end{proposition}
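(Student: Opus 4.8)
The plan is to argue by induction on the derivation of the HO-UBL judgment $\aumj{\G}{\Psi}{P}{t}{\mon{\Sigma,k}{\tau}}{Q}{\delta}$ (restricted to the rules of Figures~\ref{fig:muhol} and~\ref{fig:str-muhol} together with the monadic-primitive rules, i.e.\ without the adversary rule), interpreting the triple through the stateful lifting $\dstmtr{\glubn}$ built from the union-bound lifting $\glubn$ of Example~\ref{ex:lifting-UB}. First I would unfold the goal: by the defining equivalence of the stateful lifting, for $\g$ with $\g\models\sem{\holwf{\G}{\textstyle\bigwedge\Psi}}$ the desired conclusion
\[
\sem{\G\vdash t:\mon{\Sigma,k}{\tau}}(\g)\models\dstmtr{\glubn}(\delta)(\sem{P}_\g,\sem{Q}_\g)
\]
is literally the statement $\sem{t}(\g):\sem{P}_\g\dto\glub{\delta}{\sem{Q}_\g}$. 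Since the effect annotations $\Sigma,k$ and the region quantifiers are erased semantically, only $\delta,P,Q$ participate, and I use the substitution identity $\sem{\phi[t/x]}=(\sub{\G\vdash t:\tau}{x})^*\sem{\phi}$ to reconcile the variable substitutions in the rules with the partial-application conventions $\sem{P}_\g,\sem{Q}_\g$.

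For the base cases I reason directly. For \textsf{UNIT-U}, since $\sem{\unit{t}}=\eta^\mpstn\circ\sem{t}$, I apply the first consequence of Lemma~\ref{lem:gr-lift} with the value predicate $\sem{\phi}$ and the memory precondition $\sem{P}_\g$; the hypothesis $\sem{t}(\g)\models\sem{\phi}$ comes from soundness of UHOL (the equivalence $\juhol{\Gamma}{\Psi}{t}{\tau}{\phi}\Leftrightarrow\jhol{\Gamma}{\Psi}{\phi\subst{\vl}{t}}$), and the resulting postcondition matches $\sem{\inj{\phi}\sqcap P}_\g$ at grade $0$. The rules \textsf{READ-U} and \textsf{WRITE-U} follow from the semantics in Figure~\ref{fig:semnon}: each is a deterministic transformer of the form $\lambda(\eta^\mprob\circ g)$, with $g$ built from $\pi_a$ respectively the update $u_a$, so it lies in the lifting at grade $0$ by the $\eta^\mprob$ clause, while the backward substitution in the precondition mirrors $u_a$. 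The only base case unfolding the concrete lifting is \textsf{SAMPLE-UBL}: for $\unif{\sigma}$ I compute the probability mass of $\phi$ directly from $|\sigma|$ and $|\{x\in\sigma\mid x\in\phi\}|$, which yields exactly the grading demanded by the premise.

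For the inductive step, \textsf{MLET-U} is the central case. The interpretation of $\mlet{x}{t}{u}$ in Figure~\ref{fig:semnon} is the composite of the strength $\theta^\mpstn$ with the Kleisli lifting $(-)^{\#\mpstn}$, so I chain the third (strength) and second (Kleisli-lifting) consequences of Lemma~\ref{lem:gr-lift}: the induction hypothesis on $t$ is threaded through the strength clause, which adds no error, and the induction hypothesis on $u$---with precondition $\sem{Q}_\g$ after the substitution $\vl\mapsto x$ over the extended context $\G,x:\tau$---feeds the Kleisli-lifting clause, so the two gradings combine additively to the required $\delta+\delta'$. The rule \textsf{MCASE-U} reduces to its two branch premises via the coproduct interpretation of the conditional together with the join structure of $\UPred{\S}{\qbsmem}$, the disjunctive precondition selecting $\sem{P_1}_\g$ or $\sem{P_2}_\g$ according to the value of $b$. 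The rule \textsf{MFOLD-U} follows by an inner induction on the bound $K$, unfolding $\mfold{n}{t_1}{t_2}$ into iterated binds and accumulating $K\cdot\delta'$ from $K$ instances of the \textsf{MLET-U} argument. The structural rules are routine: \textsf{CONSEQ-U} uses monotonicity of $\glubn$ in its grade together with $(\sem{t}(\g))^*$ being a complete Heyting homomorphism (to weaken pre- and post-conditions), \textsf{OR-PRE-U} follows from the universal property of joins in $\UPred{\S}{\qbsmem}$, and \textsf{AND-POST-U} uses the sub-additivity $\glub{\delta}{Q}\sqcap\glub{\delta'}{Q'}\sqsubseteq\glub{\delta+\delta'}{Q\sqcap Q'}$ of the union-bound lifting---precisely the union bound on error probabilities that names the logic.

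I expect the main obstacle to be the \textsf{MLET-U} case: making the composition of the strength and the Kleisli lifting line up with the two premises requires carefully tracking the partial-application conventions for $\sem{P}_\g$ and $\sem{Q}_\g$ across the extended context $\G,x:\tau$, checking that the substitution $Q\subst{\vl}{x}$ in the second premise is faithfully reflected as an inverse image, and verifying that the strength contributes no error so that the total grade stays at $\delta+\delta'$. Once this case is in place, the remaining cases follow the same template, and the argument is uniform: apart from the two lifting-specific facts used for \textsf{SAMPLE-UBL} and \textsf{AND-POST-U}, it appeals only to the four abstract properties of the graded lifting and their stateful forms collected in Lemma~\ref{lem:gr-lift}.
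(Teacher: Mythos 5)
Your proposal follows essentially the same route as the paper's proof: induction on the derivation, unfolding the stateful lifting to a statement of the form $\sem{t}(\g):\sem{P}_\g\dto\glub{\delta}{\sem{Q}_\g}$, discharging \textsf{UNIT-U} via UHOL soundness plus the unit clause of Lemma~\ref{lem:gr-lift}, \textsf{MLET-U} via the strength and Kleisli-lifting clauses applied to the composite $u^{\#}\circ\theta\circ\langle id,\sem{t}\rangle$, \textsf{READ-U}/\textsf{WRITE-U} via their deterministic semantics and the $\eta^\mprobn$ clause, \textsf{MCASE-U} via the join structure, and \textsf{SAMPLE-UBL} by unfolding the concrete union-bound lifting. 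The only difference is that you additionally sketch the cases the paper's proof leaves implicit (\textsf{MFOLD-U} and the structural rules), and those sketches are consistent with its method.
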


Analogously, we interpret HO-EXP using the lifting $\dstmtr\glexpn$ of the
probabilistic state monad constructed from the lifting for expectations
bound logic $\glexpn$ as in Example~\ref{ex:lifting-EXP}. Most of the
proof of the previous result can be reused, and only the rules for sampling
and linearity need separate proofs.
The soundness result is stated as:
\begin{proposition}\label{prop:ho-expn-sound}
  Let $\aumj{\G}{\Psi}{P}{t}{\mon{\Sigma,k}{\tau}}{Q}{\delta}$ be a
  derivable HO-UBL judgment without the adversary rule.
  Then, for any
  $\g\in|\sem{\G}|$, $\g\models\sem{\holwf{\G}{\textstyle\bigwedge \Psi}}$
  implies
  \begin{align*}
    &
      \sem{\G\vdash t : \mon{\Sigma,k}{\tau}}(\g)
      \models
      \dstmtr{\glexpn}(\delta)(\sem{\G,\st:\tmem\vdash P}_\g,\sem{\G,\vl:\tau,\st:\tmem\vdash Q}_\g).% \\
    % &\iff
    %   \sem{\G\vdash t : \mon{\Sigma,k}{\tau}}(\g)
    %   :
    %   \sem{\G,\st:\tmem\vdash P}_\g
    %   \dto
    %   \glub\delta{\sem{\G,\vl:\tau,\st:\tmem\vdash Q}_\g}\\
    % &\iff
    %   \sem{\G\vdash t : \mon{\Sigma,k}{\tau}}(\g)
    %   :
    %   \sem{\G,\st:\tmem\vdash P}_\g
    %   \dto
    %   \forall{f\in\QBS(\sem\tau\times\qbsmem,\{0,1\}\cdot 1)}
    %   \sem{\G,\vl:\tau,\st:\tmem\vdash Q}_\g\sqsubseteq |f|\implies
    %   Pr[f(x)=1]\ge 1-\delta.
  \end{align*}
\end{proposition}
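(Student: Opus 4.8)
The plan is to argue by induction on the derivation of the judgment, reusing wholesale the inductive structure of the proof of Proposition~\ref{prop:ho-ubl-sound}. The key conceptual point is that HO-EXP reuses \emph{verbatim} the monadic and structural proof rules of HO-UBL (Figures~\ref{fig:muhol} and~\ref{fig:str-muhol}), changing only the complete Heyting algebra from $\S$ to $([0,\infty],\ge)$ and the lifting from $\glubn$ to $\glexpn$. Crucially, Lemma~\ref{lem:gr-lift} is proved for an \emph{arbitrary} $\bO$-valued strong graded lifting of $\mprobn$, and $\glexpn$ is exactly such a lifting (Example~\ref{ex:lifting-EXP}); hence its three consequences---the $\eta^\mpstn$ law, the Kleisli-lifting law, and the strength law for the stateful lifting $\dstmtr{\glexpn}$---hold unchanged with $\bO=([0,\infty],\ge)$ and gradings in $([0,\infty],\le,0,+)$. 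Since these are the \emph{only} semantic facts about the lifting that the HO-UBL soundness argument consumes, every rule shared between the two logics is sound for the same reason in both. As in Proposition~\ref{prop:ho-ubl-sound}, I exclude the adversary rule \textsf{ADV-U}, whose soundness is established separately via a logical relation.

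Concretely, I would dispatch each shared rule by replaying its HO-UBL proof. \textsf{UNIT-U} follows from the first clause of Lemma~\ref{lem:gr-lift}; \textsf{MLET-U} from the Kleisli-lifting clause together with the strength clause; and \textsf{READ-U}, \textsf{WRITE-U} from the concrete denotations of $\rd a$, $\wrt a t$ in Figure~\ref{fig:semnon} combined with the $\eta^\mpstn$ law. The structural rules \textsf{CONSEQ-U}, \textsf{OR-PRE-U}, \textsf{AND-POST-U} and the branching rules \textsf{MCASE-U}, \textsf{MFOLD-U} use only monotonicity of the lifting in its grading and the Heyting-algebra laws of $\sqcap,\sqcup,\jimp$. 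The single point requiring attention here is that these connectives are now reinterpreted quantitatively ($\sqcap$ as $\sup$, $\sqcup$ as $\inf$, $\jimp$ as $\ge$, with $\bot=\infty$), so each side condition---for instance the fact that the \textsf{MCASE-U} precondition collapses to $P_1$ when $b=\ttrue$ and to $P_2$ when $b=\ffalse$---must be re-verified under this reading; each such check is immediate from the semantic clauses for the connectives.

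The genuinely new work, as flagged in the text, is confined to the two expectation-specific rules. For \textsf{LIN-EXP} I would invoke linearity of expectation directly against the definition of $\glexpn$: from upper bounds on $\EE[P_1]$ and $\EE[P_2]$ with errors $e$ and $e'$ one obtains, by additivity of $\EE$, an upper bound on $\EE[P_1+P_2]$ with error $e+e'$, which is precisely $\glexp[X]{e+e'}{P_1+P_2}$; transporting this pointwise through $\dstmtr{\glexpn}$ yields the monadic rule. For \textsf{UNIF-EXP} I would compute the expectation of the Iverson bracket $\injs{\vl\in U}$ under $\sem{\unif K}$, which equals $|U|/K$ by definition of the uniform distribution on $\NN[K]$, so the pre-expectation $(|U|/K)\cdot P$ bounds $\EE[\injs{\vl\in U}\cdot P]$ with zero error; since sampling leaves the memory untouched, the pre- and post-conditions coincide on the state component.

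The main obstacle I anticipate is precisely these two cases, because the definition of $\glexpn$ in Example~\ref{ex:lifting-EXP} does not bound $\EE[P]$ directly: for a possibly non-measurable predicate $P$ it only bounds $\EE[f]$ over measurable $f$ with $|f|\le P$, through a supremum and the truncation $S_\delta(\mu)=\max(0,\EE_{r\sim\mu}[r]-\delta)$. I will therefore have to carry measurable witnesses through the argument---combining witnesses $f_1\le P_1$, $f_2\le P_2$ into $f_1+f_2\le P_1+P_2$ for \textsf{LIN-EXP}, and exhibiting the evident measurable witness on the finite type $\NN[K]$ for \textsf{UNIF-EXP}---and check that the $S_\delta$ truncations compose additively. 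Verifying that these manipulations are faithful to the $\top\top$-lifting presentation of $\glexpn$, rather than to an idealized ``$\EE[P]$'', is where the care lies, though I expect only bookkeeping and no conceptual difficulty beyond it.
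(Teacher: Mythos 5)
Your proposal takes essentially the same route as the paper: the paper's proof likewise reuses the induction of Proposition~\ref{prop:ho-ubl-sound} wholesale, observing that every shared rule consumes only the abstract graded-lifting properties of Lemma~\ref{lem:gr-lift} (which hold for any $\bO$-valued strong graded lifting, hence for $\glexpn$ with $\bO=([0,\infty],\ge)$), and it isolates exactly the sampling and linearity rules as the only cases requiring separate, lifting-specific proofs. Your extra attention to carrying measurable witnesses through the $\top\top$-lifting presentation of $\glexpn$ (Example~\ref{ex:lifting-EXP}) is the right bookkeeping for those two cases and is consistent with how the paper treats the analogous concrete-lifting case (\textsf{SAMPLE-UBL}) in the HO-UBL proof.
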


\subsection{Semantics for the relational logics}

\renewcommand{\BRel}[3]{{\bf BR}^{#1}_{#2,#3}}

Let $\bO=(\O,\sqsubseteq)$ be a complete Heyting algebra.  To
interpret relational logics, we first define the concept of
$\bO$-valued binary relation between two QBSs $X,Y$. They are simply
$\bO$-valued predicates over product QBS $X\times Y$. We thus define
$\BRel\bO XY\triangleq\UPred\bO{X\times Y}$. For $\S$-valued binary
relation $P\in\BRel\S XY$ and $(x,y)\in |X\times Y|=|X|\times |Y|$, we say
that $(x,y)$ satisfies $P$ (denoted by $(x,y)\models P$) if
$P(x,y)=\top$.

We routinely extend the development in the previous section to
$\bO$-valued binary relations. For QBS-morphisms $f:X\to Y$ and
$f':X'\to Y'$, we define the pullback operation
$(f,f')^*:\BRel\bO{X'}{Y'}\to\BRel\bO{X}{Y}$ to be $(f\times
f')^*$. We write $(f,f'):P\dto Q$ to mean $P\sqsubseteq (f,f')^*Q$.

% We next extend the universal quantification operator to $\bO$-valued
% binary relations. We define
% $\forall_{(X,X'),(Y,Y')}:\BRel\bO{X\times Y}{X'\times Y'}\to\BRel\bO
% X{X'}$ as the the right adjoint of $(\pi_1,\pi_1)^*$.

% We obtain $\bO$-valued binary relation versions of type-theoretic
% constructiones $(\dtimes)$ and $(\dTo)$. 
We introduce the concept of {\em Heyting-algebra valued strong graded
  relational lifting}.
\begin{definition}
  Let $(E,\le,0,+)$ be a partially ordered monoid.  An {\em
    $\bO$-valued strong $E$-graded relational lifting} of $\mprobn$ is a family
  of functions $\glprob[X,Y]--:E\to(\BRel\bO XY\To\BRel\bO{\mprob X}{\mprob Y})$,
  implicitly indexed by $X,Y\in\QBS$, satisfying:
  \begin{align*}
    e\le e' & \implies \glprob e P\sqsubseteq \glprob {e'}P &
    (\eta^\mprob_X,\eta^\mprob_Y) &:P\dto\glprob 0 P \\
    (\mu^\mprob_X,\mu^\mprob_Y) & :\glprob e {\glprob{e'}P}\dto\glprob{e + e'}P &
                                                                                     (\theta^\mprob_X,\theta^\mprob_Y) & :P\dtimes \glprob e Q\dto\glprob e{P\dtimes Q}.
  \end{align*}
\end{definition}
\begin{example}[Relational lifting for differential privacy] \label{ex:DP-lifting}
Inspired from \cite{DBLP:journals/entcs/Sato16}, we construct a $\S$-valued strong graded relational
lifting for {\em differential privacy} by a graded analogue of the codensity
lifting~\cite{DBLP:journals/lmcs/KatsumataSU18}.
The grading monoid is the product partially ordered monoid $(\posr,\le,0,+)^2$.
\begin{align*}
\dot\mprob_{X,Y}^{\mathrm{dp}}(\epsilon,\delta)(P)(\nu_1,\nu_2)
= \top
&
\iff
\forall
(f,g) \colon P \dot\to S(\epsilon',\delta')~.~
(f^\sharp \nu_1,g^\sharp \nu_2) \models S(\epsilon+\epsilon',\delta+\delta') \\
&\text{ where }
(\nu_1,\nu_2) \models S(\epsilon,\delta)
\iff 
\Pr_{b \sim \nu_1}[b = 0] \leq e^\epsilon \Pr_{b \sim \nu_2}[b = 0] + \delta.
\end{align*}
%\begin{remark}
Any morphism $\chi_S \colon X \to \{0,1\}$ in $\QBS$ standing for the ``measurable'' subset $S$ of $X$, we have
$\dot\mprob^{\mathrm{dp}}_{X,X}(\epsilon,\delta)(\mathrm{Eq}_X)(\mu_1,\mu_2) = \top \implies \Pr_{x \sim \mu_1}[x \in S] \leq e^\epsilon \Pr_{x \sim \mu_2}[x \in S] + \delta$ since $S(0,0) \models (\chi_S,\chi_S)$.
A strong $(\posr,\le,0,+)$-graded lifting describing $\epsilon$-differential privacy can be given by
$\dot\mprob_{X,Y}^{\mathrm{dp}}(\epsilon,0)$.
%\end{remark}
\end{example}
We next introduce the stateful relational lifting, based on a
$\bO$-valued $(E,\le,+,0)$-graded relational lifting $\glprobn$ of
$\mprobn$.  It is a function
$\lpst[X,Y]{-}{-,-}:E\to(\BRel\bO
XY\times\BRel\bO{X\times\qbsmem}{Y\times\qbsmem} \To\BRel\S{\mpst
  X}{\mpst Y})$ defined for each $X,Y\in\QBS$ by:
\begin{displaymath}
  (f,f')\models\lpst [X,Y]e  {P,Q}
  \iff
  (f,f'):P\dto \glprob e Q.
\end{displaymath}
Lemma~\ref{lem:gr-lift} can then be generalized accordingly.
We omit the details.

\subsection{Soundness of the relational logics}

% Even though we do not need RHOL to derive judgments in HO-UBL, we
% still present the semantics here for completeness. We will use them in
% later chapters. 
%\begin{toappendix}
%A judgment in RHOL can be seen as a triple formed by
%two typing judgments and a logical judgment with two extra
%distinguished variables.
%\begin{theorem}[Soundness of RHOL]
%  Let $\jrhol{\G}{\Psi}{t_1}{\sigma_1}{t_2}{\sigma_2}{\phi}$ be a
%  derivable UHOL judgment.  Then, for any $\g\in |\sem\G|$,
%  $\g\models\sem{\holwf{\G}{\textstyle\bigwedge \Psi}}$ implies
%  \begin{displaymath}
%    (\sem{\jlc{\G}{t_1}{\sigma_1}}(\g), \sem{\jlc{\G}{t_2}{\sigma_2}}(\g))
%    \models
%    \sem{\holwf{\G, \res_1:\sigma_1,\res_2:\sigma_2}{\phi}}_\gamma.
%  \end{displaymath}
%\end{theorem}
%\end{toappendix}

The semantics of the relational logics are a generalization of the
semantics of the unary logics. We defer soundness of RHOL to the
appendix.
We interpret the monadic rules in the category of relations over QBS,
with the $\S$-valued strong $\posr^2$-graded relational lifting $\gldpn$,
which induces $\dstmtr{\gldpn}$ as in the previous section.
The soundness result is stated below.
Its proof is by induction on the derivation and is largely independent
of the choice of a specific lifting:
\begin{proposition}\label{prop:ho-rpl-sound}
  Let
  $\armj{\G}{\Psi}{P}{t_1}{\mon{\Sigma,k}{\tau_1}}{t_1}{\mon{\Sigma,k}{\tau_2}}{Q}{\delta}$
  be a derivable HO-PRL judgment without the \rname{ADV-R} rule.  Then
  for any $\g\in|\sem{\G}|$,
  $\gamma\models\sem{\holwf{\G}{\bigwedge \Psi}}$ implies 
  \begin{displaymath}
    (\sem{\G\vdash t_1: \mon{\Sigma,k}{\tau_1}}(\g),\sem{\G\vdash t_2:
      \mon{\Sigma,k}{\tau_2}}(\g))
    \models
    \lpst{0,\delta}{\sem P_\g,\sem Q_\g},
  \end{displaymath}
  where $\sem P_\g \triangleq \sem{\G,\st_1:\tmem,\st_2:\tmem\vdash
    P}_\g$ and  $Q_\g\triangleq\sem{\G,\st_1:\tmem,\vl_1:\sigma,\st_2:\tmem,\vl_2:\sigma\vdash
    Q}_\g$.
\end{proposition}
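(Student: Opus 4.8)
The plan is to prove Proposition~\ref{prop:ho-rpl-sound} by structural induction on the derivation of the monadic judgment $\armj{\G}{\Psi}{P}{t_1}{\mon{\Sigma,k}{\tau_1}}{t_1}{\mon{\Sigma,k}{\tau_2}}{Q}{\delta}$, excluding the \rname{ADV-R} rule by hypothesis. The strategy mirrors the unary soundness proofs (Propositions~\ref{prop:ho-ubl-sound} and~\ref{prop:ho-expn-sound}), so most of the structure transfers: the key semantic object is the stateful relational lifting $\lpstn$ built from the $\S$-valued $\posr^2$-graded relational lifting $\gldpn$ of Example~\ref{ex:DP-lifting}, and the main technical tool is the relational generalization of Lemma~\ref{lem:gr-lift}. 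I would begin by fixing an environment $\g\in|\sem\G|$ with $\g\models\sem{\holwf{\G}{\bigwedge\Psi}}$ and then case on the last rule applied.

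Let me describe how the key steps go. For each rule, I unfold the semantics of the two programs using Figure~\ref{fig:semnon} and reduce the goal to membership in $\lpst{0,\delta}{\sem P_\g,\sem Q_\g}$. For the structural rules (consequence, or-pre, and-post), soundness follows from the monotonicity properties of $\gldpn$ in its grading and the Heyting-algebra structure on $\BRel\S{-}{-}$, using that entailments $\jhol{\G}{\Psi}{P\jimp Q}$ are interpreted as pointwise order $\sqsubseteq$. For the two-sided monadic rules \rname{UNIT-R}, \rname{MLET-R}, \rname{READ-R}, \rname{WRITE-R}, \rname{MCASE-R}, the argument is a direct application of the three consequences of the (relational) Lemma~\ref{lem:gr-lift}: \rname{UNIT-R} uses the first consequence (the unit maps $P$ into $\lpst 0{Q,\incl P\dtimes Q}$), \rname{MLET-R} uses the Kleisli-lifting consequence (which composes the triples for $t_i$ and $u_i$ and \emph{adds the gradings}, yielding $\delta+\delta'$), and the strength consequence handles the interaction with the binding environment. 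The grading arithmetic $\delta+\delta'$ in \rname{MLET-R} matches exactly the additive grading monoid $(\posr,\le,0,+)$, so no mismatch arises. The memory-manipulation rules reduce to the definitions of $\pi_a$ and $u_a(-)$ via the substitution lemma $\sem{\G\vdash\phi[t/x]}=(\sub{\G\vdash t:\tau}x)^*\sem{\phi}$, which reindexes the pre/postconditions along the memory read/update morphisms. The one-sided rules \rname{L-UNIT-R} and \rname{L-MLET-R} are proved analogously, treating the right-hand computation as $\sskip$, i.e.\ a trivial state transformer, and again invoking the relevant consequence of Lemma~\ref{lem:gr-lift}.

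The genuinely lifting-specific cases are the sampling rules, exactly as flagged for the unary logics. For \rname{SAMPLE-R} I would discharge the goal directly against the definition of $\gldpn$ from Example~\ref{ex:DP-lifting}: comparing $\unif{B_1}$ against $\unif{B_2}$ for $B_1\subseteq B_2$ finite, the statistical distance is $|B_1|/|B_2|$, and I would exhibit the witnessing $(f,g)$ and verify the inequality $\Pr_{b\sim\nu_1}[b=0]\le e^0\Pr_{b\sim\nu_2}[b=0]+\delta$ with $\epsilon=0$ and $\delta=|B_1|/|B_2|$, so that the sampled values are coupled to be equal except with the stated probability. This is the only place where I actually compute with the concrete lifting rather than its abstract algebraic interface.

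The \textbf{main obstacle} I anticipate is the \rname{MCASE-R} case, because of its extra synchronization premise $\jrhol{\G}{\Psi}{b_1}{\BB}{b_2}{\BB}{b_1=b_2}$ forcing both executions into the same branch. Here I must convert the relational RHOL judgment on the guards into the semantic statement that $\sem{b_1}(\g)=\sem{b_2}(\g)$ (using the completeness equivalence $\jrhol{\G}{\Psi}{b_1}{\BB}{b_2}{\BB}{\phi}\Leftrightarrow\jhol{\G}{\Psi}{\phi\subst{\resl}{b_1}\subst{\resr}{b_2}}$ stated for RHOL), and then show that the interpretation of the relational precondition $(\langle b_1=\ttrue\rangle\sqcap P_1)\sqcup(\langle b_1=\ffalse\rangle\sqcap P_2)$ correctly selects the matching pair of branches, so that the single postcondition/grading $(Q,\delta)$ shared by both premises carries through the coproduct/conditional morphism. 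The delicate point is that the denotational semantics erases the branch choice into a coproduct eliminator, so I must argue that on the fiber where $\sem{b_1}(\g)=\ttrue$ the semantics agrees with the then-branch triple and similarly for the else-branch, and that these fit together into the lifted relation; this glueing is routine once set up but is where the bookkeeping is heaviest. As the proposition itself notes, the overall argument is ``largely independent of the choice of a specific lifting'', so apart from \rname{SAMPLE-R} the proof is genuinely generic over any $\bO$-valued strong $E$-graded relational lifting satisfying the four defining conditions.
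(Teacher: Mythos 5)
Your proposal takes essentially the same route as the paper's proof: induction on the derivation, with the monadic rules (two-sided and one-sided) discharged by the relational analogue of Lemma~\ref{lem:gr-lift} (unit, Kleisli lifting with additive gradings, strength), soundness of RHOL/UHOL for the embedded non-monadic premises such as the guard synchronization in \rname{MCASE-R}, and direct computation with the concrete lifting $\gldpn$ only for the sampling rules. One minor numerical point: the statistical distance between $\unif{B_1}$ and $\unif{B_2}$ for $B_1\subseteq B_2$ is $1-|B_1|/|B_2|$ rather than $|B_1|/|B_2|$, but this slip mirrors the grading as printed in the paper's \rname{SAMPLE-R} rule and does not affect the structure of your argument.
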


\subsection{Soundness of the adversary rules}

To prove soundness of the adversary rules we will use the technique of
logical relations.
Logical predicates and relations~\cite{plotkin1973lambda} are a technique used in programming
language theory to prove properties such as strong normalization or contextual
equivalence.  The idea of logical relations (or predicates) is that they allow us
to prove that all inhabitants of a certain type $\tau$ satisfy a particular property $\cL(\tau)$
that is defined inductively on the structure of types, rather than terms.

For instance, in the unary case, we define a logical predicate
$\cL_{\phi,\delta}(\cdot)$ indexed by an assertion $\phi$ over memories
and a real $\delta \geq 0$. For every type $\sigma$,
$\cL_{\phi,\delta}(\sigma)$ corresponds to a set of closed terms. We defer the details
of this definition to the appendix, here it suffices to know that
$\cL_{\phi,\delta}(\mon{\Sigma,k}{\tau})$ is the set of computations
that preserve the invariant $\phi$ with error probability $k\cdot\delta$
and that return a result in $\cL_{\phi,\delta}(\tau)$
(i.e., they satisfy the triple $\{\phi\}~\_~\{\!\{\phi \wedge \vl \in \cL_{\phi,\delta}(\tau)\}\!\}_{k\cdot\delta}$),
and as usual, if $t \in \cL_{\phi,\delta}(\sigma\to\tau)$ and $u \in \cL_{\phi,\delta}(\sigma)$
then $t~u \in \cL_{\phi,\delta}(\tau)$.
Then we prove a  Basic Lemma: any closed term $t$ of type $\tau$ inhabits the
predicate $\cL_{\phi,\delta}(\tau)$ if $\phi\in\safe{\eff{\tau}}$.  This has a
rather natural interpretation: if $\phi$ does not depend on any location in
$\eff{\tau}$, then it must be preserved after running $t$.

The adversary rule \rname{ADV-U} can then be proven sound from the Basic Lemma.
By inspecting its premises, we know that $\cA$ inhabits the logical relation
$\cL_{P,\delta}(\forall \alpha. (\sigma \to \mon{\alpha,1}{\tau}) \to \mon{\Sigma \cup \alpha, k}{\tau'})$,
because $P \in {\sf Safe}(\Sigma)$. We also have that $\lambda x. t$ inhabits the logical relation 
$\cL_{P,\delta}(\sigma \to \mon{\Sigma',1}{\tau})$,
because we have a derivation of this fact (note that the Basic Lemma cannot be applied, because $P$ may
not be safe for $\Sigma'$). Then, we can conclude that running $\cA$ with
$\lambda x. t$ as argument inhabits the logical relation 
$\cL_{P,\delta}(\mon{\Sigma \cup \alpha, k}{\tau'})$, and therefore must preserve $P$.
The techniques then generalize to the relational case, where we define a
logical relation for every type, and then we prove a Basic Lemma for it. Soundness
of \rname{ADV-R} is a consequence of this Lemma.

\begin{proposition}
    The \rname{ADV-U} and \rname{ADV-R} rules are sound.
\end{proposition}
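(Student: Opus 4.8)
The plan is to carry out the logical-relations argument outlined above in two stages: first establish the fundamental (``Basic'') lemma, and then obtain soundness of each adversary rule as a short corollary. For the unary case I would first fix the formal definition of the logical predicate $\cL_{\phi,\delta}(\tau)$ as a type-indexed family of sets of closed terms: at base types every closed value is included; at $\sigma \to \tau$ it is the usual function space, sending arguments in $\cL_{\phi,\delta}(\sigma)$ to results in $\cL_{\phi,\delta}(\tau)$; at $\forall \alpha. \tau$ it requires membership in $\cL_{\phi,\delta}(\tau\subst{\alpha}{\Sigma_0})$ for every concrete region $\Sigma_0$; and at $\mon{\Sigma,k}{\tau}$ it is the set of computations satisfying the triple $\{\phi\}~\_~\{\!\{\phi \wedge \vl \in \cL_{\phi,\delta}(\tau)\}\!\}_{k\cdot\delta}$, interpreted through the stateful lifting $\dstmtr{\glubn}$ of Proposition~\ref{prop:ho-ubl-sound}.

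The core of the work is the Basic Lemma: every closed term of type $\tau$ lies in $\cL_{\phi,\delta}(\tau)$ whenever $\phi \in \safe{\eff{\tau}}$. I would prove its open generalization by induction on the typing derivation, closing every free variable by a term in the relation at its type. The non-monadic constructors are routine. The delicate cases are the monadic ones. For \rname{Write} at a location $a$, the effect $\{a\}$ is contained in $\eff{\tau}$, so the hypothesis $\phi \in \safe{\eff{\tau}}$ is exactly what guarantees that the write preserves $\phi$; \rname{Read} leaves the state unchanged, so preservation is immediate. For \rname{Bind} I would sequence the two triples using the second clause of Lemma~\ref{lem:gr-lift}, verifying that the gradings combine additively as $k_1 + k_2$ and that the error budgets add accordingly. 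This grading bookkeeping is precisely what forces the bound to be $k\cdot\delta$ at a computation of cost $k$: each unit of cost contributes at most one $\delta$ through the additive composition law $\mu^\mprob : \glprob e {\glprob{e'}P}\dto\glprob{e + e'}P$ of the underlying graded lifting. The \rname{Adv} typing case reduces, after the closing substitution, to an application of a closed term handled by the function clause.

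Soundness of \rname{ADV-U} then follows as in the text. After instantiating the adversary variable by a closed term $u$, its type has effect $\eff{\forall \alpha.(\sigma \to \mon{\alpha,1}{\tau}) \to \mon{\Sigma \cup \alpha, k}{\tau'}} = \Sigma$, using that $\tau'$ is non-monadic so $\eff{\tau'}=\emptyset$. Since $P \in \safe{\Sigma}$, the Basic Lemma places $u$ in $\cL_{P,\delta}$ of its type. The oracle $\lambda x.t$ belongs to $\cL_{P,\delta}(\sigma \to \mon{\Sigma',1}{\tau})$ for a different reason: here the Basic Lemma does not apply, since $P$ need not be safe for $\Sigma'$, but the second premise of the rule is exactly the triple witnessing membership, and $\sigma$ non-monadic means every argument is trivially related. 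Instantiating $\alpha := \Sigma'$ and applying the function clause yields $u\,(\lambda x.t) \in \cL_{P,\delta}(\mon{\Sigma \cup \Sigma', k}{\tau'})$, which unfolds to preservation of $P$ with error $k\cdot\delta$, the desired conclusion.

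Finally, \rname{ADV-R} is handled by replaying the same development relationally. One defines a binary logical relation with the same type-indexed shape but valued in pairs of closed terms, with the monadic clause phrased via the stateful relational lifting $\dstmtr{\gldpn}$ built from the differential-privacy lifting of Example~\ref{ex:DP-lifting}, and proves the corresponding relational Basic Lemma. The one substantive change is that the safety side-condition strengthens to $\phi \in \rsafe{\Sigma}$, whose extra conjunct $m_1[l]=m_2[l]$ is exactly what the \rname{Read} case needs to keep the two executions synchronized, so that reads from $\Sigma$ do not force the coupling to diverge; the \rname{Write} case again uses that $\Sigma$-writes preserve $\phi$. The main obstacle throughout is the Basic Lemma, and within it the monadic composition step, where one must check that sequencing respects both the additive grading and the additive error budget so that the cost-$k$ type yields precisely the $k\cdot\delta$ bound --- this is exactly where the algebraic laws of the (stateful) graded lifting do the essential work.
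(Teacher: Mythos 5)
Your overall architecture coincides with the paper's: a logical predicate $\cL_{\phi,\delta}$ indexed by the invariant and the error, with the monadic clause given by the triple $\{\phi\}\,\_\,\{\!\{\phi\wedge\vl\in\cL_{\phi,\delta}(\tau)\}\!\}_{k\cdot\delta}$, a Basic Lemma proved by induction on typing, the corollary that places $\cA$ in the relation via the Basic Lemma (using $P\in\safe{\Sigma}$) while the oracle enters directly from the rule's premise, and a relational replay with $\rsafe{\cdot}$. However, there is a genuine gap in the two ingredients you treat as standard: the arrow-type clause and the form of the generalized lemma. You define $\cL_{\phi,\delta}(\sigma\to\tau)$ by the ``usual'' clause (related arguments map to related results) and generalize by ``closing every free variable by a term in the relation.'' Under that setup the \emph{application} case of the induction fails, precisely the case you dismiss as routine. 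For $t\,u$ with $t:\sigma\to\tau$ and $u:\sigma$, the lemma's hypothesis only gives $\phi\in\safe{\eff{\tau}}$ (since $\eff{\sigma\to\tau}=\eff{\tau}$); when $\eff{\sigma}\not\subseteq\eff{\tau}$ the inductive hypothesis for $u$ is simply unavailable, and in fact $u$ need not be in the relation at all: take $u=\wrt{a}{v}$ with $\phi$ depending on $a$, applied to a function of type $\mon{\{a\},0}{\tunit}\to\mon{\emptyset,0}{\nat}$. Then $u$ violates $\phi$-preservation, your function clause gives no information about $t\,u$, and the induction stalls.

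The paper's construction is designed exactly around this point. Its predicate at $\sigma\to\tau$ has \emph{two} cases: when $\eff{\sigma}\not\subseteq\eff{\tau}$, membership demands $t\,u\in\cL_{\phi,\delta}(\tau)$ for \emph{all} closed $u$, with no relatedness assumption on the argument; correspondingly, the generalized Basic Lemma splits the context into $\Gamma$ (instantiated by related terms) and $\Gamma'$ (variables whose type's effect is not contained in the result's effect, instantiated arbitrarily), and carries the region context $\Xi$ so that safety is only required on $\eff{\sigma}\setminus\Xi$, which is what makes the $\forall\alpha$ introduction/elimination cases go through. The semantic fact underpinning this is that \textsf{Bind} adds $\eff{\tau}$ of the bound type, so a well-typed function whose result effect omits its argument's effect can never run that argument; the strengthened clause is how the induction internalizes this, and it is the very subtlety the paper flags when it says standard logical relations ``would not take into account the latent effect of the types.'' Without this device (or an equivalent one) your Basic Lemma proof does not close, in either the unary or the relational case; relatedly, your relational relation also needs the further change (equality at base types) that keeps the two runs' control flow synchronized, which ``same type-indexed shape'' glosses over. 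The corollaries themselves, as you state them, do match the paper once the corrected Basic Lemma is in place.
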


%\begin{proof}
%    By the premise on $\cA$, safety of $\phi$  and the Basic Lemma, we can prove that
%    \[
%            \cA \in \cL_{\phi,\delta}(\forall \alpha. (\sigma \to \mpstn_{\alpha,1}(\tau)) \to \mpstn_{\Sigma \cup \alpha, k}(\tau'))
%    \]
%    On the other hand, by the assumption on the oracle we have that:
%    \[\lambda x. t 
%    \in \cL_{\phi,\delta}(\sigma\to\mpstn_{\Sigma',1}(\tau))\]
%    From this, we can derive:
%    \[
%            \cA[\Sigma'](\lambda x. t) \in \cL_{\phi,\delta}(\mpstn_{\Sigma \cup \Sigma', k}(\tau'))
%    \]
%    and the conclusion follows directly.
%\end{proof}

%%% Local Variables:
%%% mode: latex
%%% TeX-master: "main"
%%% End:

\section{Related work}\label{sec:related}

%\aag{TODO: Discuss Tassaroti \& Harper, and Maillard et al}

\subsubsection*{Reasoning about adversaries}
\citet{Garg10,Jia15} develop first- and higher-order program logics to
reason about safety properties of first-order concurrent and stateful
programs interacting with adversaries. Both provide rules to
reason about adversaries, morally similar to ours.
Our context of adversary variables representing closed programs
traces lineage to a similar idea based on comonads
in~\citet{Jia15}. \citet{Devriese16} develop semantic principles to
reason about adversaries, cast in terms of parametricity properties of
side-effects, an idea they call ``effect parametricity''. They use
these principles to verify code that uses object capabilities. No
syntactic proof rules are developed. These works cover only the
boolean, deterministic, unary setting.

Closer to our work, \citet{BartheGZ09} define a probabilistic
relational Hoare logic (pRHL) for reasoning about the security of
cryptographic constructions. Their logic applies to a probabilistic
imperative language with adversarial calls and features a proof rule
for adversaries. Our rule for the relational, non-quantitative setting
closely matches their rule.  \citet{BarbosaBGKS21} formalize a
resource-aware module system used in EasyCrypt to reason about
adversaries. There are commonalities between their approach and ours:
they view an adversary as a functor, whereas we view an adversary as
an expression of second-order type. However, the technicalities are
very different, since they build their system on top of an imperative
language. A further difference is that they account for the
computational cost of adversaries, which we left aside in this
work. Other similar approaches for reasoning about adversaries include
Computational Indistinguishability Logic~\citep{BartheDKL10}, and
state-separating proofs~\citep{BrzuskaDFKK18}. However, these
approaches are developed in an abstract mathematical setting, not in
the context of a programming language.

\citet{BartheFGGHS16} define an adversary rule for reasoning about
differential privacy in a quantitative variant of pRHL; their rule
uses bounds on the number of oracle queries to derive privacy bounds
of adversarial computations from privacy bounds of
oracles. \citet{BartheGGHS16} define a Union Bound logic to reason
about accuracy of adversarial computations for a similar
language. However, their proof rule is restricted to adversaries
without oracles.
We are not aware of any prior work on adversarial computations in the
quantitative setting.

\subsubsection*{Program logics for probabilistic computations}
We relate our program logics to existing approaches
for reasoning about probabilistic computations. For brevity, we only discuss
    approaches not discussed before. 
\citet{Kozen85} introduces expectation-based reasoning for a core
probabilistic programming language. \citet{Morgan96} define a weakest
pre-expectation calculus. \citet{Aguirre20} develop a variant of the
calculus for relational properties. \citet{KaminskiKMO16} show how
similar ideas can be used for reasoning about expected cost. All these
works share the setting of a probabilistic imperative
language. \citet{AguirreK20} show that expectation-based reasoning
remains sound in a higher-order setting, but their semantics is based
on set theory, not Quasi-Borel spaces, so they cannot model continuous
distributions. They also do not provide proof systems.

There exist adaptations of (approximate) probabilistic relational
Hoare logic in the higher-order setting, starting
from~\cite{BFGSSZ14}.  However, these adaptations have a
set-theoretical or topos of trees semantics and only support
distributions over discrete base types. \citet{SatoABGGH19} introduce
an expressive logic for a language similar to ours but without state
and adversary. Their model is also based on QBS.
\citet{TassarottiH19} develop a logic to prove relational properties
of higher-order programs that combine probabilities and non-determinism.
They do not support all the kinds of reasoning we do, and the relations
they can prove are between a program and an specification, rather
than between two programs.
\citet{MaillardHRM20} define a framework, embedded in a
relational dependent type theory, for defining and reasoning about
program logics for general monadic effects.  
% Their approach works for
%reasoning about discrete probabilities but it does not handle
%higher-order probabilistic programs.
% Their assertions are Boolean-valued and their framework does not
% reason about adversaries. 
While their work is based on Dijkstra monads, ours is more closely
related to Hoare monads~\cite{NanevskiMB08, NanevskiBG13}. Our work
extends Hoare monads to support Heyting-valued predicates,
probabilistic programs, grading and adversarial reasoning.

\subsubsection*{Program equivalence}
There is a very large body of methods for proving program
equivalence, and in particular contextual equivalence, in higher-order
languages with state, probabilities, and effects; see
e.g.~\cite{stark98,BentonHN14,JungSSSTBD15,BizjakB15,CrubilleL15,MatacheS19}.
Many of these methods have been applied to reason about security and
privacy, using the natural view of adversaries as contexts. These
methods are not comparable with ours: our relational logic can prove a
richer set of specifications (for instance, the postcondition needs
not be an equivalence relation). However, they cannot establish some
basic equivalences, e.g. swapping of two sampling instructions, due to
the specific way the logic constructs couplings. We also conjecture
that our logics are easier to extend to richer settings, such as
multi-stage and multi-adversary security notions (see
e.g.~\cite{RistenpartSS11}).  Finally, these methods cannot be used to
reason about unary properties.

\section{Concluding remarks}

We conclude the paper with a discussion of additional examples that
can be handled by the three logics we have presented (and by small
extensions to the logics), and a discussion of how we can extend our
framework with unbounded recursion.

\paragraph*{Other examples}
%% There is a large body of works using program logics for reasoning
%% about adversarial computations, both in the imperative and in the
%% higher-order setting. In this section, we briefly review some key
%% examples and discuss their formalization in our setting. 
%
HO-UBL can verify the \emph{accuracy} of differentially private
mechanisms such as the Sparse Vector algorithm~\cite{DR14}, since
accuracy can be formulated as the probability that the noisy answer is
close to the actual answer. We have already worked out this example
but, for reasons of space, we defer it to the appendix.

The bounded leakage model is a model of leakage-resilient cryptography
in which the adversary is given access to a leakage oracle which takes
as input a function with a small codomain and returns the output of
this function applied to the secret state. A (partially formalized)
proof of security of a pseudo-random generator in the bounded leakage
model is given in~\cite{bbmo14}. HO-PRL can be used to verify this
proof, using either a first- or a higher-order representation of
leakage.

Other examples can be verified with extensions to our logics that can
also be proved sound in our framework. For instance, we can support
a slightly different relational logic in which the Hoare quadruple
is indexed by a pair $(\epsilon, \delta)$, and interpreted using the lifting from
Example~\ref{ex:DP-lifting}. With this logic, we
can study differential privacy of mechanisms such
as the exponential mechanism on non-numerical queries~\cite{DR14},
which uses a scoring function that assigns positive values to all
possible input/output pairs. Prior work \citep{BartheKOB12},
has verified this mechanism using a first-order representation of
scoring. However, we can verify a higher-order representation
of this mechanism, where the scoring function is passed as an argument
to the mechanism.

We can also use HO-EXP to verify examples based on the weakest
pre-expectation calculus~\cite{Morgan96}. One caveat is that many of
these examples use arbitrary while loops, which our language does not
currently support. This extension would require extending the model as
discussed at the end of this section.
Other examples, e.g., stability of machine learning algorithms, would
require developing a logic for relational pre-expectations. Yet
others, e.g. cryptography, would require enriching our logics with
additional proof principles that embed notions of cryptographic
reductions.  In the long run, it would be interesting to support these
formalisms with an implementation to mechanize
examples.

%% The Exponential Mechanism~\cite{DR14} is a differentially private
%% mechanism for non-numerical queries. The mechanism uses a scoring
%% function that assigns positive values to all possible input/output
%% pairs. The Exponential Mechanism has been formalized and used in
%% several works, e.g.~\cite{BartheKOB12}. Many of these works use a
%% first-order representation of the scoring mechanism. A small extension
%% of HO-PRL to the approximate case (indexing the relational Hoare
%% quadruple with a statistical distance $\epsilon$ and a failure
%% probability $\delta$) can be used to verify the exponential mechanism
%% using not just such a first-order representation, but also a
%% higher-order representation of scoring, where the scoring function is
%% passed as an argument to the mechanism. \dg{Is the last sentence
%%   correct?}

%% In addition to the aforementioned examples, which are higher-order by
%% nature, our logics can be used to verify many classic examples from
%% privacy, cryptography, and randomized algorithms. For instance, our
%% union bound logic and our relational logic can be used to verify
%% privacy and accuracy of differentially private computations along the
%% lines of~\cite{BartheKOB12,BartheGGHS16}.

\paragraph*{Unbounded recursion}
Our language provides bounded recursion via the monadic fold. An
interesting follow-up would be to extend our language with unbounded
monadic recursion. For this, we would also need to change the semantic
model. One possibility is to use the recently proposed category
$\omega\QBS$~\cite{VakarKS19} to interpret types.

\begin{acks}
  S.K. was supported by ERATO HASUO Metamathematics for
  Systems Design Project (No.~\grantnum{ERATO}{JPMJER1603}),
  \grantsponsor{JST}{Japan Science and
    Technology Agency}{}.
	T.S. was supported by JSPS KAKENHI Grant Number 20K19775, Japan.
  M.G. was supported by NSF awards  CCF-2040222 and CCF-1718220.
\end{acks}
%%	The bulk of this work was carried
%%	out while the first author was affiliated to the IMDEA
%%	Software Institute and Universidad Polit\'ecnica de Madrid

\bibliographystyle{ACM-Reference-Format}
\bibliography{refs}

%%% -*-BibTeX-*-
%%% Do NOT edit. File created by BibTeX with style
%%% ACM-Reference-Format-Journals [18-Jan-2012].

\begin{thebibliography}{48}

%%% ====================================================================
%%% NOTE TO THE USER: you can override these defaults by providing
%%% customized versions of any of these macros before the \bibliography
%%% command.  Each of them MUST provide its own final punctuation,
%%% except for \shownote{}, \showDOI{}, and \showURL{}.  The latter two
%%% do not use final punctuation, in order to avoid confusing it with
%%% the Web address.
%%%
%%% To suppress output of a particular field, define its macro to expand
%%% to an empty string, or better, \unskip, like this:
%%%
%%% \newcommand{\showDOI}[1]{\unskip}   % LaTeX syntax
%%%
%%% \def \showDOI #1{\unskip}           % plain TeX syntax
%%%
%%% ====================================================================

\ifx \showCODEN    \undefined \def \showCODEN     #1{\unskip}     \fi
\ifx \showDOI      \undefined \def \showDOI       #1{#1}\fi
\ifx \showISBNx    \undefined \def \showISBNx     #1{\unskip}     \fi
\ifx \showISBNxiii \undefined \def \showISBNxiii  #1{\unskip}     \fi
\ifx \showISSN     \undefined \def \showISSN      #1{\unskip}     \fi
\ifx \showLCCN     \undefined \def \showLCCN      #1{\unskip}     \fi
\ifx \shownote     \undefined \def \shownote      #1{#1}          \fi
\ifx \showarticletitle \undefined \def \showarticletitle #1{#1}   \fi
\ifx \showURL      \undefined \def \showURL       {\relax}        \fi
% The following commands are used for tagged output and should be
% invisible to TeX
\providecommand\bibfield[2]{#2}
\providecommand\bibinfo[2]{#2}
\providecommand\natexlab[1]{#1}
\providecommand\showeprint[2][]{arXiv:#2}

\bibitem[\protect\citeauthoryear{Aguirre, Barthe, Gaboardi, Garg, and
  Strub}{Aguirre et~al\mbox{.}}{2017}]%
        {ABGGS17}
\bibfield{author}{\bibinfo{person}{Alejandro Aguirre}, \bibinfo{person}{Gilles
  Barthe}, \bibinfo{person}{Marco Gaboardi}, \bibinfo{person}{Deepak Garg},
  {and} \bibinfo{person}{Pierre{-}Yves Strub}.}
  \bibinfo{year}{2017}\natexlab{}.
\newblock \showarticletitle{A relational logic for higher-order programs}.
\newblock \bibinfo{journal}{\emph{{PACMPL}}} \bibinfo{volume}{1},
  \bibinfo{number}{{ICFP}} (\bibinfo{year}{2017}),
  \bibinfo{pages}{21:1--21:29}.
\newblock
\urldef\tempurl%
\url{https://doi.org/10.1145/3110265}
\showDOI{\tempurl}


\bibitem[\protect\citeauthoryear{Aguirre, Barthe, Hsu, Kaminski, Katoen, and
  Matheja}{Aguirre et~al\mbox{.}}{2021}]%
        {Aguirre20}
\bibfield{author}{\bibinfo{person}{Alejandro Aguirre}, \bibinfo{person}{Gilles
  Barthe}, \bibinfo{person}{Justin Hsu}, \bibinfo{person}{Benjamin~Lucien
  Kaminski}, \bibinfo{person}{Joost-Pieter Katoen}, {and}
  \bibinfo{person}{Christoph Matheja}.} \bibinfo{year}{2021}\natexlab{}.
\newblock \showarticletitle{A Pre-Expectation Calculus for Probabilistic
  Sensitivity}.
\newblock \bibinfo{journal}{\emph{Proc. ACM Program. Lang.}}
  \bibinfo{volume}{5}, \bibinfo{number}{POPL}, Article \bibinfo{articleno}{52}
  (\bibinfo{date}{Jan.} \bibinfo{year}{2021}), \bibinfo{numpages}{28}~pages.
\newblock
\urldef\tempurl%
\url{https://doi.org/10.1145/3434333}
\showDOI{\tempurl}


\bibitem[\protect\citeauthoryear{Aguirre and Katsumata}{Aguirre and
  Katsumata}{2020}]%
        {AguirreK20}
\bibfield{author}{\bibinfo{person}{Alejandro Aguirre} {and}
  \bibinfo{person}{Shin{-}ya Katsumata}.} \bibinfo{year}{2020}\natexlab{}.
\newblock \bibinfo{title}{Weakest preconditions in fibrations}.
  (\bibinfo{year}{2020}).
\newblock
\newblock
\shownote{Accepted at MFPS'20.}


\bibitem[\protect\citeauthoryear{Aumann et~al\mbox{.}}{Aumann
  et~al\mbox{.}}{1961}]%
        {aumann1961borel}
\bibfield{author}{\bibinfo{person}{Robert~J Aumann} {et~al\mbox{.}}}
  \bibinfo{year}{1961}\natexlab{}.
\newblock \showarticletitle{Borel structures for function spaces}.
\newblock \bibinfo{journal}{\emph{Illinois Journal of Mathematics}}
  \bibinfo{volume}{5}, \bibinfo{number}{4} (\bibinfo{year}{1961}),
  \bibinfo{pages}{614--630}.
\newblock


\bibitem[\protect\citeauthoryear{Barbosa, Barthe, Grégoire, Koutsos, and
  Strub}{Barbosa et~al\mbox{.}}{2021}]%
        {BarbosaBGKS21}
\bibfield{author}{\bibinfo{person}{Manuel Barbosa}, \bibinfo{person}{Gilles
  Barthe}, \bibinfo{person}{Benjamin Grégoire}, \bibinfo{person}{Adrien
  Koutsos}, {and} \bibinfo{person}{Pierre-Yves Strub}.}
  \bibinfo{year}{2021}\natexlab{}.
\newblock \bibinfo{title}{Mechanized Proofs of Adversarial Complexity and
  Application to Universal Composability}.
\newblock \bibinfo{howpublished}{Cryptology ePrint Archive, Report 2021/156}.
\newblock
\newblock
\shownote{\url{https://eprint.iacr.org/2021/156}.}


\bibitem[\protect\citeauthoryear{Barthe, Daubignard, Kapron, and
  Lakhnech}{Barthe et~al\mbox{.}}{2010}]%
        {BartheDKL10}
\bibfield{author}{\bibinfo{person}{Gilles Barthe}, \bibinfo{person}{Marion
  Daubignard}, \bibinfo{person}{Bruce~M. Kapron}, {and}
  \bibinfo{person}{Yassine Lakhnech}.} \bibinfo{year}{2010}\natexlab{}.
\newblock \showarticletitle{Computational indistinguishability logic}. In
  \bibinfo{booktitle}{\emph{Proceedings of the 17th {ACM} Conference on
  Computer and Communications Security, {CCS} 2010, Chicago, Illinois, USA,
  October 4-8, 2010}}, \bibfield{editor}{\bibinfo{person}{Ehab Al{-}Shaer},
  \bibinfo{person}{Angelos~D. Keromytis}, {and} \bibinfo{person}{Vitaly
  Shmatikov}} (Eds.). \bibinfo{publisher}{{ACM}}, \bibinfo{pages}{375--386}.
\newblock
\urldef\tempurl%
\url{https://doi.org/10.1145/1866307.1866350}
\showDOI{\tempurl}


\bibitem[\protect\citeauthoryear{Barthe, Espitau, Gr{\'{e}}goire, Hsu, and
  Strub}{Barthe et~al\mbox{.}}{2018}]%
        {BartheEGHS18}
\bibfield{author}{\bibinfo{person}{Gilles Barthe}, \bibinfo{person}{Thomas
  Espitau}, \bibinfo{person}{Benjamin Gr{\'{e}}goire}, \bibinfo{person}{Justin
  Hsu}, {and} \bibinfo{person}{Pierre{-}Yves Strub}.}
  \bibinfo{year}{2018}\natexlab{}.
\newblock \showarticletitle{Proving expected sensitivity of probabilistic
  programs}.
\newblock \bibinfo{journal}{\emph{Proc. {ACM} Program. Lang.}}
  \bibinfo{volume}{2}, \bibinfo{number}{{POPL}} (\bibinfo{year}{2018}),
  \bibinfo{pages}{57:1--57:29}.
\newblock
\urldef\tempurl%
\url{https://doi.org/10.1145/3158145}
\showDOI{\tempurl}


\bibitem[\protect\citeauthoryear{Barthe, Fong, Gaboardi, Gr{\'{e}}goire, Hsu,
  and Strub}{Barthe et~al\mbox{.}}{2016a}]%
        {BartheFGGHS16}
\bibfield{author}{\bibinfo{person}{Gilles Barthe},
  \bibinfo{person}{No{\'{e}}mie Fong}, \bibinfo{person}{Marco Gaboardi},
  \bibinfo{person}{Benjamin Gr{\'{e}}goire}, \bibinfo{person}{Justin Hsu},
  {and} \bibinfo{person}{Pierre{-}Yves Strub}.}
  \bibinfo{year}{2016}\natexlab{a}.
\newblock \showarticletitle{Advanced Probabilistic Couplings for Differential
  Privacy}. In \bibinfo{booktitle}{\emph{Proceedings of the 2016 {ACM} {SIGSAC}
  Conference on Computer and Communications Security, Vienna, Austria, October
  24-28, 2016}}, \bibfield{editor}{\bibinfo{person}{Edgar~R. Weippl},
  \bibinfo{person}{Stefan Katzenbeisser}, \bibinfo{person}{Christopher
  Kruegel}, \bibinfo{person}{Andrew~C. Myers}, {and} \bibinfo{person}{Shai
  Halevi}} (Eds.). \bibinfo{publisher}{{ACM}}, \bibinfo{pages}{55--67}.
\newblock


\bibitem[\protect\citeauthoryear{Barthe, Fournet, Gr{\'{e}}goire, Strub, Swamy,
  and B{\'{e}}guelin}{Barthe et~al\mbox{.}}{2014a}]%
        {BFGSSZ14}
\bibfield{author}{\bibinfo{person}{Gilles Barthe},
  \bibinfo{person}{C{\'{e}}dric Fournet}, \bibinfo{person}{Benjamin
  Gr{\'{e}}goire}, \bibinfo{person}{Pierre{-}Yves Strub},
  \bibinfo{person}{Nikhil Swamy}, {and} \bibinfo{person}{Santiago~Zanella
  B{\'{e}}guelin}.} \bibinfo{year}{2014}\natexlab{a}.
\newblock \showarticletitle{Probabilistic relational verification for
  cryptographic implementations}. In \bibinfo{booktitle}{\emph{{POPL} 2014}},
  \bibfield{editor}{\bibinfo{person}{Suresh Jagannathan} {and}
  \bibinfo{person}{Peter Sewell}} (Eds.).
\newblock


\bibitem[\protect\citeauthoryear{Barthe, Gaboardi, Gr{\'{e}}goire, Hsu, and
  Strub}{Barthe et~al\mbox{.}}{2016b}]%
        {BartheGGHS16}
\bibfield{author}{\bibinfo{person}{Gilles Barthe}, \bibinfo{person}{Marco
  Gaboardi}, \bibinfo{person}{Benjamin Gr{\'{e}}goire}, \bibinfo{person}{Justin
  Hsu}, {and} \bibinfo{person}{Pierre{-}Yves Strub}.}
  \bibinfo{year}{2016}\natexlab{b}.
\newblock \showarticletitle{A Program Logic for Union Bounds}. In
  \bibinfo{booktitle}{\emph{43rd International Colloquium on Automata,
  Languages, and Programming, {ICALP} 2016, July 11-15, 2016, Rome, Italy}}
  \emph{(\bibinfo{series}{LIPIcs}, Vol.~\bibinfo{volume}{55})},
  \bibfield{editor}{\bibinfo{person}{Ioannis Chatzigiannakis},
  \bibinfo{person}{Michael Mitzenmacher}, \bibinfo{person}{Yuval Rabani}, {and}
  \bibinfo{person}{Davide Sangiorgi}} (Eds.). \bibinfo{publisher}{Schloss
  Dagstuhl - Leibniz-Zentrum fuer Informatik}, \bibinfo{pages}{107:1--107:15}.
\newblock
\urldef\tempurl%
\url{https://doi.org/10.4230/LIPIcs.ICALP.2016.107}
\showDOI{\tempurl}


\bibitem[\protect\citeauthoryear{Barthe, Gr{\'{e}}goire, and
  B{\'{e}}guelin}{Barthe et~al\mbox{.}}{2009}]%
        {BartheGZ09}
\bibfield{author}{\bibinfo{person}{Gilles Barthe}, \bibinfo{person}{Benjamin
  Gr{\'{e}}goire}, {and} \bibinfo{person}{Santiago~Zanella B{\'{e}}guelin}.}
  \bibinfo{year}{2009}\natexlab{}.
\newblock \showarticletitle{Formal certification of code-based cryptographic
  proofs}. In \bibinfo{booktitle}{\emph{{POPL} 2009, Savannah, GA, USA, January
  21-23, 2009}}.
\newblock
\urldef\tempurl%
\url{https://doi.org/10.1145/1480881.1480894}
\showDOI{\tempurl}


\bibitem[\protect\citeauthoryear{Barthe, K{\"o}pf, Mauborgne, and Ochoa}{Barthe
  et~al\mbox{.}}{2014b}]%
        {bbmo14}
\bibfield{author}{\bibinfo{person}{Gilles Barthe}, \bibinfo{person}{Boris
  K{\"o}pf}, \bibinfo{person}{Laurent Mauborgne}, {and}
  \bibinfo{person}{Mart{\'i}n Ochoa}.} \bibinfo{year}{2014}\natexlab{b}.
\newblock \showarticletitle{{Leakage Resilience against Concurrent Cache
  Attacks}}. In \bibinfo{booktitle}{\emph{Proc. 3rd Conference on Principles of
  Security and Trust (POST '14)}}. \bibinfo{publisher}{Springer}.
\newblock


\bibitem[\protect\citeauthoryear{Barthe, K{\"{o}}pf, Olmedo, and
  B{\'{e}}guelin}{Barthe et~al\mbox{.}}{2012}]%
        {BartheKOB12}
\bibfield{author}{\bibinfo{person}{Gilles Barthe}, \bibinfo{person}{Boris
  K{\"{o}}pf}, \bibinfo{person}{Federico Olmedo}, {and}
  \bibinfo{person}{Santiago~Zanella B{\'{e}}guelin}.}
  \bibinfo{year}{2012}\natexlab{}.
\newblock \showarticletitle{Probabilistic relational reasoning for differential
  privacy}. In \bibinfo{booktitle}{\emph{{POPL} 2012, Philadelphia,
  Pennsylvania, USA, January 22-28, 2012}}.
\newblock
\urldef\tempurl%
\url{https://doi.org/10.1145/2103656.2103670}
\showDOI{\tempurl}


\bibitem[\protect\citeauthoryear{Benton, Hofmann, and Nigam}{Benton
  et~al\mbox{.}}{2014}]%
        {BentonHN14}
\bibfield{author}{\bibinfo{person}{Nick Benton}, \bibinfo{person}{Martin
  Hofmann}, {and} \bibinfo{person}{Vivek Nigam}.}
  \bibinfo{year}{2014}\natexlab{}.
\newblock \showarticletitle{Abstract effects and proof-relevant logical
  relations}. In \bibinfo{booktitle}{\emph{The 41st Annual {ACM}
  {SIGPLAN-SIGACT} Symposium on Principles of Programming Languages, {POPL}
  '14, San Diego, CA, USA, January 20-21, 2014}},
  \bibfield{editor}{\bibinfo{person}{Suresh Jagannathan} {and}
  \bibinfo{person}{Peter Sewell}} (Eds.). \bibinfo{publisher}{{ACM}},
  \bibinfo{pages}{619--632}.
\newblock
\urldef\tempurl%
\url{https://doi.org/10.1145/2535838.2535869}
\showDOI{\tempurl}


\bibitem[\protect\citeauthoryear{Bizjak and Birkedal}{Bizjak and
  Birkedal}{2015}]%
        {BizjakB15}
\bibfield{author}{\bibinfo{person}{Ales Bizjak} {and} \bibinfo{person}{Lars
  Birkedal}.} \bibinfo{year}{2015}\natexlab{}.
\newblock \showarticletitle{Step-Indexed Logical Relations for Probability}. In
  \bibinfo{booktitle}{\emph{FoSSaCS 2015, London, UK, April 11-18, 2015.
  Proceedings}}.
\newblock


\bibitem[\protect\citeauthoryear{Bloom}{Bloom}{1970}]%
        {Bloom70}
\bibfield{author}{\bibinfo{person}{Burton~H. Bloom}.}
  \bibinfo{year}{1970}\natexlab{}.
\newblock \showarticletitle{Space/Time Trade-offs in Hash Coding with Allowable
  Errors}.
\newblock \bibinfo{journal}{\emph{Commun. {ACM}}} \bibinfo{volume}{13},
  \bibinfo{number}{7} (\bibinfo{year}{1970}), \bibinfo{pages}{422--426}.
\newblock
\urldef\tempurl%
\url{https://doi.org/10.1145/362686.362692}
\showDOI{\tempurl}


\bibitem[\protect\citeauthoryear{Brzuska, Delignat{-}Lavaud, Fournet, Kohbrok,
  and Kohlweiss}{Brzuska et~al\mbox{.}}{2018}]%
        {BrzuskaDFKK18}
\bibfield{author}{\bibinfo{person}{Chris Brzuska}, \bibinfo{person}{Antoine
  Delignat{-}Lavaud}, \bibinfo{person}{C{\'{e}}dric Fournet},
  \bibinfo{person}{Konrad Kohbrok}, {and} \bibinfo{person}{Markulf Kohlweiss}.}
  \bibinfo{year}{2018}\natexlab{}.
\newblock \showarticletitle{State Separation for Code-Based Game-Playing
  Proofs}. In \bibinfo{booktitle}{\emph{Advances in Cryptology - {ASIACRYPT}
  2018 - 24th International Conference on the Theory and Application of
  Cryptology and Information Security, Brisbane, QLD, Australia, December 2-6,
  2018, Proceedings, Part {III}}} \emph{(\bibinfo{series}{Lecture Notes in
  Computer Science}, Vol.~\bibinfo{volume}{11274})},
  \bibfield{editor}{\bibinfo{person}{Thomas Peyrin} {and}
  \bibinfo{person}{Steven~D. Galbraith}} (Eds.). \bibinfo{publisher}{Springer},
  \bibinfo{pages}{222--249}.
\newblock
\urldef\tempurl%
\url{https://doi.org/10.1007/978-3-030-03332-3\_9}
\showDOI{\tempurl}


\bibitem[\protect\citeauthoryear{Clayton, Patton, and Shrimpton}{Clayton
  et~al\mbox{.}}{2019}]%
        {ClaytonPS19}
\bibfield{author}{\bibinfo{person}{David Clayton}, \bibinfo{person}{Christopher
  Patton}, {and} \bibinfo{person}{Thomas Shrimpton}.}
  \bibinfo{year}{2019}\natexlab{}.
\newblock \showarticletitle{Probabilistic Data Structures in Adversarial
  Environments}. In \bibinfo{booktitle}{\emph{Proceedings of the 2019 {ACM}
  {SIGSAC} Conference on Computer and Communications Security, {CCS} 2019,
  London, UK, November 11-15, 2019}},
  \bibfield{editor}{\bibinfo{person}{Lorenzo Cavallaro},
  \bibinfo{person}{Johannes Kinder}, \bibinfo{person}{XiaoFeng Wang}, {and}
  \bibinfo{person}{Jonathan Katz}} (Eds.). \bibinfo{publisher}{{ACM}},
  \bibinfo{pages}{1317--1334}.
\newblock
\urldef\tempurl%
\url{https://doi.org/10.1145/3319535.3354235}
\showDOI{\tempurl}


\bibitem[\protect\citeauthoryear{Crubill{\'{e}} and Lago}{Crubill{\'{e}} and
  Lago}{2015}]%
        {CrubilleL15}
\bibfield{author}{\bibinfo{person}{Rapha{\"{e}}lle Crubill{\'{e}}} {and}
  \bibinfo{person}{Ugo~Dal Lago}.} \bibinfo{year}{2015}\natexlab{}.
\newblock \showarticletitle{Metric Reasoning about {\(\lambda\)}-Terms: The
  Affine Case}. In \bibinfo{booktitle}{\emph{30th Annual {ACM/IEEE} Symposium
  on Logic in Computer Science, {LICS} 2015, Kyoto, Japan, July 6-10, 2015}}.
  \bibinfo{publisher}{{IEEE} Computer Society}, \bibinfo{pages}{633--644}.
\newblock
\urldef\tempurl%
\url{https://doi.org/10.1109/LICS.2015.64}
\showDOI{\tempurl}


\bibitem[\protect\citeauthoryear{Danos and Ehrhard}{Danos and Ehrhard}{2011}]%
        {DanosE11}
\bibfield{author}{\bibinfo{person}{Vincent Danos} {and} \bibinfo{person}{Thomas
  Ehrhard}.} \bibinfo{year}{2011}\natexlab{}.
\newblock \showarticletitle{Probabilistic coherence spaces as a model of
  higher-order probabilistic computation}.
\newblock \bibinfo{journal}{\emph{Inf. Comput.}} \bibinfo{volume}{209},
  \bibinfo{number}{6} (\bibinfo{year}{2011}), \bibinfo{pages}{966--991}.
\newblock
\urldef\tempurl%
\url{https://doi.org/10.1016/j.ic.2011.02.001}
\showDOI{\tempurl}


\bibitem[\protect\citeauthoryear{Devriese, Birkedal, and Piessens}{Devriese
  et~al\mbox{.}}{2016}]%
        {Devriese16}
\bibfield{author}{\bibinfo{person}{Dominique Devriese}, \bibinfo{person}{Lars
  Birkedal}, {and} \bibinfo{person}{Frank Piessens}.}
  \bibinfo{year}{2016}\natexlab{}.
\newblock \showarticletitle{Reasoning about Object Capabilities with Logical
  Relations and Effect Parametricity}. In \bibinfo{booktitle}{\emph{{IEEE}
  European Symposium on Security and Privacy (EuroS{\&}P)}}.
  \bibinfo{pages}{147--162}.
\newblock


\bibitem[\protect\citeauthoryear{Dwork and Roth}{Dwork and Roth}{2014}]%
        {DR14}
\bibfield{author}{\bibinfo{person}{Cynthia Dwork} {and} \bibinfo{person}{Aaron
  Roth}.} \bibinfo{year}{2014}\natexlab{}.
\newblock \showarticletitle{The Algorithmic Foundations of Differential
  Privacy}.
\newblock \bibinfo{journal}{\emph{Foundations and Trends in Theoretical
  Computer Science}} \bibinfo{volume}{9}, \bibinfo{number}{3--4}
  (\bibinfo{year}{2014}), \bibinfo{pages}{211--407}.
\newblock
\urldef\tempurl%
\url{http://dx.doi.org/10.1561/0400000042}
\showURL{%
\tempurl}


\bibitem[\protect\citeauthoryear{Garg, Franklin, Kaynar, and Datta}{Garg
  et~al\mbox{.}}{2010}]%
        {Garg10}
\bibfield{author}{\bibinfo{person}{Deepak Garg}, \bibinfo{person}{Jason
  Franklin}, \bibinfo{person}{Dilsun~Kirli Kaynar}, {and}
  \bibinfo{person}{Anupam Datta}.} \bibinfo{year}{2010}\natexlab{}.
\newblock \showarticletitle{Compositional System Security with
  Interface-Confined Adversaries}.
\newblock \bibinfo{journal}{\emph{Electr. Notes Theor. Comput. Sci.}}
  \bibinfo{volume}{265} (\bibinfo{year}{2010}), \bibinfo{pages}{49--71}.
\newblock


\bibitem[\protect\citeauthoryear{Gerbet, Kumar, and Lauradoux}{Gerbet
  et~al\mbox{.}}{2015}]%
        {GerbetKL15}
\bibfield{author}{\bibinfo{person}{Thomas Gerbet}, \bibinfo{person}{Amrit
  Kumar}, {and} \bibinfo{person}{C{\'{e}}dric Lauradoux}.}
  \bibinfo{year}{2015}\natexlab{}.
\newblock \showarticletitle{The Power of Evil Choices in Bloom Filters}. In
  \bibinfo{booktitle}{\emph{45th Annual {IEEE/IFIP} International Conference on
  Dependable Systems and Networks, {DSN} 2015, Rio de Janeiro, Brazil, June
  22-25, 2015}}. \bibinfo{publisher}{{IEEE} Computer Society},
  \bibinfo{pages}{101--112}.
\newblock
\urldef\tempurl%
\url{https://doi.org/10.1109/DSN.2015.21}
\showDOI{\tempurl}


\bibitem[\protect\citeauthoryear{Heunen, Kammar, Staton, and Yang}{Heunen
  et~al\mbox{.}}{2017}]%
        {HeunenKSY17}
\bibfield{author}{\bibinfo{person}{Chris Heunen}, \bibinfo{person}{Ohad
  Kammar}, \bibinfo{person}{Sam Staton}, {and} \bibinfo{person}{Hongseok
  Yang}.} \bibinfo{year}{2017}\natexlab{}.
\newblock \showarticletitle{A convenient category for higher-order probability
  theory}. In \bibinfo{booktitle}{\emph{32nd Annual {ACM/IEEE} Symposium on
  Logic in Computer Science, {LICS} 2017, Reykjavik, Iceland, June 20-23,
  2017}}. \bibinfo{publisher}{{IEEE} Computer Society}, \bibinfo{pages}{1--12}.
\newblock
\urldef\tempurl%
\url{https://doi.org/10.1109/LICS.2017.8005137}
\showDOI{\tempurl}


\bibitem[\protect\citeauthoryear{Impagliazzo and Rudich}{Impagliazzo and
  Rudich}{1989}]%
        {Impagliazzo:1989}
\bibfield{author}{\bibinfo{person}{R. Impagliazzo} {and} \bibinfo{person}{S.
  Rudich}.} \bibinfo{year}{1989}\natexlab{}.
\newblock \showarticletitle{Limits on the provable consequences of one-way
  permutations}. In \bibinfo{booktitle}{\emph{21st Annual ACM Symposium on
  Theory of Computing, 1989}}. \bibinfo{publisher}{ACM}, \bibinfo{address}{New
  York}, \bibinfo{pages}{44--61}.
\newblock


\bibitem[\protect\citeauthoryear{Iverson}{Iverson}{1962}]%
        {iverson62}
\bibfield{author}{\bibinfo{person}{Kenneth~E. Iverson}.}
  \bibinfo{year}{1962}\natexlab{}.
\newblock \bibinfo{booktitle}{\emph{A Programming Language}}.
\newblock \bibinfo{publisher}{John Wiley \& Sons, Inc.},
  \bibinfo{address}{USA}.
\newblock
\showISBNx{0471430145}


\bibitem[\protect\citeauthoryear{Jia, Sen, Garg, and Datta}{Jia
  et~al\mbox{.}}{2015}]%
        {Jia15}
\bibfield{author}{\bibinfo{person}{Limin Jia}, \bibinfo{person}{Shayak Sen},
  \bibinfo{person}{Deepak Garg}, {and} \bibinfo{person}{Anupam Datta}.}
  \bibinfo{year}{2015}\natexlab{}.
\newblock \showarticletitle{A Logic of Programs with Interface-Confined Code}.
  In \bibinfo{booktitle}{\emph{{IEEE} 28th Computer Security Foundations
  Symposium (CSF)}}. \bibinfo{pages}{512--525}.
\newblock


\bibitem[\protect\citeauthoryear{Jung, Swasey, Sieczkowski, Svendsen, Turon,
  Birkedal, and Dreyer}{Jung et~al\mbox{.}}{2015}]%
        {JungSSSTBD15}
\bibfield{author}{\bibinfo{person}{Ralf Jung}, \bibinfo{person}{David Swasey},
  \bibinfo{person}{Filip Sieczkowski}, \bibinfo{person}{Kasper Svendsen},
  \bibinfo{person}{Aaron Turon}, \bibinfo{person}{Lars Birkedal}, {and}
  \bibinfo{person}{Derek Dreyer}.} \bibinfo{year}{2015}\natexlab{}.
\newblock \showarticletitle{Iris: Monoids and Invariants as an Orthogonal Basis
  for Concurrent Reasoning}. In \bibinfo{booktitle}{\emph{Proceedings of the
  42nd Annual {ACM} {SIGPLAN-SIGACT} Symposium on Principles of Programming
  Languages, {POPL} 2015, Mumbai, India, January 15-17, 2015}},
  \bibfield{editor}{\bibinfo{person}{Sriram~K. Rajamani} {and}
  \bibinfo{person}{David Walker}} (Eds.). \bibinfo{publisher}{{ACM}},
  \bibinfo{pages}{637--650}.
\newblock
\urldef\tempurl%
\url{https://doi.org/10.1145/2676726.2676980}
\showDOI{\tempurl}


\bibitem[\protect\citeauthoryear{Kaminski, Katoen, Matheja, and
  Olmedo}{Kaminski et~al\mbox{.}}{2016}]%
        {KaminskiKMO16}
\bibfield{author}{\bibinfo{person}{Benjamin~Lucien Kaminski},
  \bibinfo{person}{Joost{-}Pieter Katoen}, \bibinfo{person}{Christoph Matheja},
  {and} \bibinfo{person}{Federico Olmedo}.} \bibinfo{year}{2016}\natexlab{}.
\newblock \showarticletitle{Weakest Precondition Reasoning for Expected
  Run-Times of Probabilistic Programs}, Vol.~\bibinfo{volume}{9632}.
  \bibinfo{pages}{364--389}.
\newblock
\urldef\tempurl%
\url{https://doi.org/10.1007/978-3-662-49498-1\_15}
\showDOI{\tempurl}


\bibitem[\protect\citeauthoryear{Katsumata}{Katsumata}{2014}]%
        {10.1145/2535838.2535846}
\bibfield{author}{\bibinfo{person}{Shin{-}ya Katsumata}.}
  \bibinfo{year}{2014}\natexlab{}.
\newblock \showarticletitle{Parametric Effect Monads and Semantics of Effect
  Systems}. In \bibinfo{booktitle}{\emph{Proceedings of the 41st ACM
  SIGPLAN-SIGACT Symposium on Principles of Programming Languages}} (San Diego,
  California, USA) \emph{(\bibinfo{series}{POPL ’14})}.
  \bibinfo{publisher}{Association for Computing Machinery},
  \bibinfo{address}{New York, NY, USA}, \bibinfo{pages}{633–645}.
\newblock
\showISBNx{9781450325448}
\urldef\tempurl%
\url{https://doi.org/10.1145/2535838.2535846}
\showDOI{\tempurl}


\bibitem[\protect\citeauthoryear{Katsumata, Sato, and Uustalu}{Katsumata
  et~al\mbox{.}}{2018}]%
        {DBLP:journals/lmcs/KatsumataSU18}
\bibfield{author}{\bibinfo{person}{Shin{-}ya Katsumata},
  \bibinfo{person}{Tetsuya Sato}, {and} \bibinfo{person}{Tarmo Uustalu}.}
  \bibinfo{year}{2018}\natexlab{}.
\newblock \showarticletitle{Codensity Lifting of Monads and its Dual}.
\newblock \bibinfo{journal}{\emph{Log. Methods Comput. Sci.}}
  \bibinfo{volume}{14}, \bibinfo{number}{4} (\bibinfo{year}{2018}).
\newblock
\urldef\tempurl%
\url{https://doi.org/10.23638/LMCS-14(4:6)2018}
\showDOI{\tempurl}


\bibitem[\protect\citeauthoryear{Kozen}{Kozen}{1985}]%
        {Kozen85}
\bibfield{author}{\bibinfo{person}{Dexter Kozen}.}
  \bibinfo{year}{1985}\natexlab{}.
\newblock \showarticletitle{A Probabilistic {PDL}}.
\newblock  \bibinfo{volume}{30}, \bibinfo{number}{2} (\bibinfo{year}{1985}),
  \bibinfo{pages}{162--178}.
\newblock


\bibitem[\protect\citeauthoryear{MacLane}{MacLane}{1971}]%
        {maclane71}
\bibfield{author}{\bibinfo{person}{Saunders MacLane}.}
  \bibinfo{year}{1971}\natexlab{}.
\newblock \bibinfo{booktitle}{\emph{Categories for the Working Mathematician}}.
\newblock \bibinfo{publisher}{Springer-Verlag}, \bibinfo{address}{New York}.
\newblock
\newblock
\shownote{Graduate Texts in Mathematics, Vol. 5.}


\bibitem[\protect\citeauthoryear{Maillard, Hritcu, Rivas, and Muylder}{Maillard
  et~al\mbox{.}}{2020}]%
        {MaillardHRM20}
\bibfield{author}{\bibinfo{person}{Kenji Maillard}, \bibinfo{person}{Catalin
  Hritcu}, \bibinfo{person}{Exequiel Rivas}, {and} \bibinfo{person}{Antoine~Van
  Muylder}.} \bibinfo{year}{2020}\natexlab{}.
\newblock \showarticletitle{The next 700 relational program logics}.
\newblock \bibinfo{journal}{\emph{Proc. {ACM} Program. Lang.}}
  \bibinfo{volume}{4}, \bibinfo{number}{{POPL}} (\bibinfo{year}{2020}),
  \bibinfo{pages}{4:1--4:33}.
\newblock


\bibitem[\protect\citeauthoryear{Matache and Staton}{Matache and
  Staton}{2019}]%
        {MatacheS19}
\bibfield{author}{\bibinfo{person}{Cristina Matache} {and} \bibinfo{person}{Sam
  Staton}.} \bibinfo{year}{2019}\natexlab{}.
\newblock \showarticletitle{A Sound and Complete Logic for Algebraic Effects}.
  In \bibinfo{booktitle}{\emph{Foundations of Software Science and Computation
  Structures - 22nd International Conference, {FOSSACS} 2019, Held as Part of
  the European Joint Conferences on Theory and Practice of Software, {ETAPS}
  2019, Prague, Czech Republic, April 6-11, 2019, Proceedings}}
  \emph{(\bibinfo{series}{Lecture Notes in Computer Science},
  Vol.~\bibinfo{volume}{11425})}, \bibfield{editor}{\bibinfo{person}{Mikolaj
  Bojanczyk} {and} \bibinfo{person}{Alex Simpson}} (Eds.).
  \bibinfo{publisher}{Springer}, \bibinfo{pages}{382--399}.
\newblock
\urldef\tempurl%
\url{https://doi.org/10.1007/978-3-030-17127-8\_22}
\showDOI{\tempurl}


\bibitem[\protect\citeauthoryear{Morgan, {McIver}, and Seidel}{Morgan
  et~al\mbox{.}}{1996}]%
        {Morgan96}
\bibfield{author}{\bibinfo{person}{Carroll Morgan}, \bibinfo{person}{Annabelle
  {McIver}}, {and} \bibinfo{person}{Karen Seidel}.}
  \bibinfo{year}{1996}\natexlab{}.
\newblock \showarticletitle{Probabilistic Predicate Transformers}.
\newblock  \bibinfo{volume}{18}, \bibinfo{number}{3} (\bibinfo{year}{1996}),
  \bibinfo{pages}{325--353}.
\newblock


\bibitem[\protect\citeauthoryear{Nanevski, Banerjee, and Garg}{Nanevski
  et~al\mbox{.}}{2013}]%
        {NanevskiBG13}
\bibfield{author}{\bibinfo{person}{Aleksandar Nanevski},
  \bibinfo{person}{Anindya Banerjee}, {and} \bibinfo{person}{Deepak Garg}.}
  \bibinfo{year}{2013}\natexlab{}.
\newblock \showarticletitle{Dependent Type Theory for Verification of
  Information Flow and Access Control Policies}.
\newblock \bibinfo{journal}{\emph{{ACM} Trans. Program. Lang. Syst.}}
  \bibinfo{volume}{35}, \bibinfo{number}{2} (\bibinfo{year}{2013}),
  \bibinfo{pages}{6:1--6:41}.
\newblock
\urldef\tempurl%
\url{https://doi.org/10.1145/2491522.2491523}
\showDOI{\tempurl}


\bibitem[\protect\citeauthoryear{Nanevski, Morrisett, and Birkedal}{Nanevski
  et~al\mbox{.}}{2008}]%
        {NanevskiMB08}
\bibfield{author}{\bibinfo{person}{Aleksandar Nanevski},
  \bibinfo{person}{J.~Gregory Morrisett}, {and} \bibinfo{person}{Lars
  Birkedal}.} \bibinfo{year}{2008}\natexlab{}.
\newblock \showarticletitle{Hoare type theory, polymorphism and separation}.
\newblock \bibinfo{journal}{\emph{J. Funct. Program.}} \bibinfo{volume}{18},
  \bibinfo{number}{5-6} (\bibinfo{year}{2008}), \bibinfo{pages}{865--911}.
\newblock
\urldef\tempurl%
\url{https://doi.org/10.1017/S0956796808006953}
\showDOI{\tempurl}


\bibitem[\protect\citeauthoryear{Naor and Yogev}{Naor and Yogev}{2019}]%
        {NaorY19}
\bibfield{author}{\bibinfo{person}{Moni Naor} {and} \bibinfo{person}{Eylon
  Yogev}.} \bibinfo{year}{2019}\natexlab{}.
\newblock \showarticletitle{Bloom Filters in Adversarial Environments}.
\newblock \bibinfo{journal}{\emph{{ACM} Trans. Algorithms}}
  \bibinfo{volume}{15}, \bibinfo{number}{3} (\bibinfo{year}{2019}),
  \bibinfo{pages}{35:1--35:30}.
\newblock
\urldef\tempurl%
\url{https://doi.org/10.1145/3306193}
\showDOI{\tempurl}


\bibitem[\protect\citeauthoryear{Pitts and Stark}{Pitts and Stark}{1998}]%
        {stark98}
\bibfield{author}{\bibinfo{person}{Andrew Pitts} {and} \bibinfo{person}{Ian
  Stark}.} \bibinfo{year}{1998}\natexlab{}.
\newblock \showarticletitle{Operational Reasoning for Functions with Local
  State}.
\newblock In \bibinfo{booktitle}{\emph{Higher Order Operational Techniques in
  Semantics}}, \bibfield{editor}{\bibinfo{person}{Andrew Gordon} {and}
  \bibinfo{person}{Andrew Pitts}} (Eds.). \bibinfo{publisher}{Publications of
  the Newton Institute, Cambridge University Press}, \bibinfo{pages}{227--273}.
\newblock
\urldef\tempurl%
\url{http://www.inf.ed.ac.uk/~stark/operfl.html}
\showURL{%
\tempurl}


\bibitem[\protect\citeauthoryear{Plotkin}{Plotkin}{1973}]%
        {plotkin1973lambda}
\bibfield{author}{\bibinfo{person}{Gordon Plotkin}.}
  \bibinfo{year}{1973}\natexlab{}.
\newblock \bibinfo{title}{Lambda-definability and logical relations}.
\newblock
\newblock


\bibitem[\protect\citeauthoryear{Ristenpart, Shacham, and Shrimpton}{Ristenpart
  et~al\mbox{.}}{2011}]%
        {RistenpartSS11}
\bibfield{author}{\bibinfo{person}{Thomas Ristenpart}, \bibinfo{person}{Hovav
  Shacham}, {and} \bibinfo{person}{Thomas Shrimpton}.}
  \bibinfo{year}{2011}\natexlab{}.
\newblock \showarticletitle{Careful with Composition: Limitations of the
  Indifferentiability Framework}. In \bibinfo{booktitle}{\emph{Advances in
  Cryptology - {EUROCRYPT} 2011 - 30th Annual International Conference on the
  Theory and Applications of Cryptographic Techniques, Tallinn, Estonia, May
  15-19, 2011. Proceedings}} \emph{(\bibinfo{series}{Lecture Notes in Computer
  Science}, Vol.~\bibinfo{volume}{6632})},
  \bibfield{editor}{\bibinfo{person}{Kenneth~G. Paterson}} (Ed.).
  \bibinfo{publisher}{Springer}, \bibinfo{pages}{487--506}.
\newblock
\urldef\tempurl%
\url{https://doi.org/10.1007/978-3-642-20465-4\_27}
\showDOI{\tempurl}


\bibitem[\protect\citeauthoryear{Sato}{Sato}{2016}]%
        {DBLP:journals/entcs/Sato16}
\bibfield{author}{\bibinfo{person}{Tetsuya Sato}.}
  \bibinfo{year}{2016}\natexlab{}.
\newblock \showarticletitle{Approximate Relational Hoare Logic for Continuous
  Random Samplings}. In \bibinfo{booktitle}{\emph{The Thirty-second Conference
  on the Mathematical Foundations of Programming Semantics, {MFPS} 2016,
  Carnegie Mellon University, Pittsburgh, PA, USA, May 23-26, 2016}}
  \emph{(\bibinfo{series}{Electronic Notes in Theoretical Computer Science},
  Vol.~\bibinfo{volume}{325})}, \bibfield{editor}{\bibinfo{person}{Lars
  Birkedal}} (Ed.). \bibinfo{publisher}{Elsevier}, \bibinfo{pages}{277--298}.
\newblock
\urldef\tempurl%
\url{https://doi.org/10.1016/j.entcs.2016.09.043}
\showDOI{\tempurl}


\bibitem[\protect\citeauthoryear{Sato, Aguirre, Barthe, Gaboardi, Garg, and
  Hsu}{Sato et~al\mbox{.}}{2019}]%
        {SatoABGGH19}
\bibfield{author}{\bibinfo{person}{Tetsuya Sato}, \bibinfo{person}{Alejandro
  Aguirre}, \bibinfo{person}{Gilles Barthe}, \bibinfo{person}{Marco Gaboardi},
  \bibinfo{person}{Deepak Garg}, {and} \bibinfo{person}{Justin Hsu}.}
  \bibinfo{year}{2019}\natexlab{}.
\newblock \showarticletitle{Formal verification of higher-order probabilistic
  programs: reasoning about approximation, convergence, Bayesian inference, and
  optimization}.
\newblock \bibinfo{journal}{\emph{{PACMPL}}} \bibinfo{volume}{3},
  \bibinfo{number}{{POPL}} (\bibinfo{year}{2019}),
  \bibinfo{pages}{38:1--38:30}.
\newblock
\urldef\tempurl%
\url{https://dl.acm.org/citation.cfm?id=3290351}
\showURL{%
\tempurl}


\bibitem[\protect\citeauthoryear{\'{S}cibior, Kammar, V\'{a}k\'{a}r, Staton,
  Yang, Cai, Ostermann, Moss, Heunen, and Ghahramani}{\'{S}cibior
  et~al\mbox{.}}{2017}]%
        {Scibior:2017:DVH:3177123.3158148}
\bibfield{author}{\bibinfo{person}{Adam \'{S}cibior}, \bibinfo{person}{Ohad
  Kammar}, \bibinfo{person}{Matthijs V\'{a}k\'{a}r}, \bibinfo{person}{Sam
  Staton}, \bibinfo{person}{Hongseok Yang}, \bibinfo{person}{Yufei Cai},
  \bibinfo{person}{Klaus Ostermann}, \bibinfo{person}{Sean~K. Moss},
  \bibinfo{person}{Chris Heunen}, {and} \bibinfo{person}{Zoubin Ghahramani}.}
  \bibinfo{year}{2017}\natexlab{}.
\newblock \showarticletitle{Denotational Validation of Higher-order Bayesian
  Inference}.
\newblock \bibinfo{journal}{\emph{Proc. ACM Program. Lang.}}
  \bibinfo{volume}{2}, \bibinfo{number}{POPL}, Article \bibinfo{articleno}{60}
  (\bibinfo{date}{Dec.} \bibinfo{year}{2017}), \bibinfo{numpages}{29}~pages.
\newblock
\showISSN{2475-1421}
\urldef\tempurl%
\url{https://doi.org/10.1145/3158148}
\showDOI{\tempurl}


\bibitem[\protect\citeauthoryear{Tassarotti and Harper}{Tassarotti and
  Harper}{2019}]%
        {TassarottiH19}
\bibfield{author}{\bibinfo{person}{Joseph Tassarotti} {and}
  \bibinfo{person}{Robert Harper}.} \bibinfo{year}{2019}\natexlab{}.
\newblock \showarticletitle{A Separation Logic for Concurrent Randomized
  Programs}.
\newblock \bibinfo{journal}{\emph{Proc. ACM Program. Lang.}}
  \bibinfo{volume}{3}, \bibinfo{number}{POPL}, Article \bibinfo{articleno}{64}
  (\bibinfo{date}{Jan.} \bibinfo{year}{2019}), \bibinfo{numpages}{30}~pages.
\newblock
\urldef\tempurl%
\url{https://doi.org/10.1145/3290377}
\showDOI{\tempurl}


\bibitem[\protect\citeauthoryear{V{\'{a}}k{\'{a}}r, Kammar, and
  Staton}{V{\'{a}}k{\'{a}}r et~al\mbox{.}}{2019}]%
        {VakarKS19}
\bibfield{author}{\bibinfo{person}{Matthijs V{\'{a}}k{\'{a}}r},
  \bibinfo{person}{Ohad Kammar}, {and} \bibinfo{person}{Sam Staton}.}
  \bibinfo{year}{2019}\natexlab{}.
\newblock \showarticletitle{A domain theory for statistical probabilistic
  programming}.
\newblock \bibinfo{journal}{\emph{{PACMPL}}} \bibinfo{volume}{3},
  \bibinfo{number}{{POPL}} (\bibinfo{year}{2019}),
  \bibinfo{pages}{36:1--36:29}.
\newblock
\urldef\tempurl%
\url{https://dl.acm.org/citation.cfm?id=3290349}
\showURL{%
\tempurl}


\end{thebibliography}

\newpage
\appendix
\section{Subtyping and Additional Typing Rules}
\label{ap:typing-rules}

We present here the subtyping rules (Figure~\ref{fig:subtyping}) and the
typing rules for expressions about memories (Figure~\ref{fig:typ-mem}).

\begin{figure*}[!htb]
\small
\begin{gather*}
    \inferrule*[]
    { }
    {\Xi \vdash B \preceq B}
    \qquad
    \inferrule*[]
    {\Xi \vdash \tau\preceq\tau' \\ \Xi \vdash \tau'\preceq\tau''}
    {\Xi \vdash \tau\preceq\tau''}
    \qquad
    \inferrule*[]
    {K \leq K'}
    {\Xi \vdash\nat[K] \preceq \nat[K']}
    \qquad
    \inferrule*[]
    {\Xi \vdash \Sigma\subseteq\Sigma' \\\\ k \leq k' \\ \Xi \vdash \tau\preceq\tau' }
    {\Xi \vdash\mon{\Sigma
        ,k}{\tau}\preceq\mon{\Sigma',k'}{\tau'}}
    \\
    \inferrule*[]
    {\Xi \vdash \tau \preceq \tau' \\ \Xi \vdash\sigma\preceq\sigma'}
    {\Xi \vdash \tau\times\sigma \preceq \tau'\times\sigma'}
    \qquad
    \inferrule*[]
    {\Xi \vdash\tau' \preceq \tau \\ \Xi \vdash\sigma\preceq\sigma'}
    {\Xi \vdash \tau\To\sigma \preceq \tau'\To\sigma'}
    \qquad
    \inferrule*[]
    {\Xi \vdash \tau \preceq \tau'}
    {\Xi \vdash\forall\alpha. \tau \preceq \forall\alpha.\tau'}
  \end{gather*}
\caption{Subtyping rules. Here, $\Xi\vdash \Sigma\subseteq\Sigma'$
for $\Xi=\alpha_1,\dots,\alpha_n$ if for every
$\Sigma_1,\dots,\Sigma_n\subseteq\loc$ we have
$\Sigma\subst{\alpha_1}{\Sigma_1}\dots\subst{\alpha_n}{\Sigma_n}
\subseteq
\Sigma'\subst{\alpha_1}{\Sigma_1}\dots\subst{\alpha_n}{\Sigma_n}$.}
\label{fig:subtyping}
\end{figure*}

\begin{figure*}[!htb]
\small
\begin{gather*}
    \inferrule*[]
	{\Gamma \vdash \tilde{t} \colon \tmem \\ a\in\loc}
	{\Gamma \vdash \tilde{t}[a] \colon \tval}
\qquad
    \inferrule*[]
	{\Gamma \vdash \tilde{t} \colon \tmem \\ \Gamma \vdash \tilde{u} \colon \tval \\ a\in\loc}
	{\Gamma \vdash \tilde{t}[a \mapsto \tilde{u}] \colon \tmem}
\end{gather*}
\caption{Typing rules for memory access}\label{fig:typ-mem}
\end{figure*}

\section{Additional Proof Rules}\label{ap:proof-rules}

We first present the standard well-formed rules for HOL assertions in Figure~\ref{fig:hol-wf}.
The extended Hoare triples and quadruples from our logics can be internalized into HOL.
To this end, we add a predicate former $\htriple{\sigma}{\delta}{P}{t}{Q}$ that internalizes
the monadic judgments into. This predicates are well-formed when
$P$ is a well-formed assertion (which may contain a variable for the state $\st\colon\tmem$),
$Q$ is a well-formed assertion (which may contain variables for the state $\st\colon\tmem$,
and the result $\vl\colon\sigma$) and $\vdash t \colon \mon{\Sigma,k}{\sigma}$.
The interpretation is equivalent to the corresponding monadic judgment. 
We add rules to switch between systems in Figure~\ref{fig:hol-triples}.  
\sk{Here omitting the grade makes the
   definition of $H$ ill-formed.  I think $H$ needs to be parameterised
   like $H_{\sigma,\Sigma,k,\delta,P,Q}$, and defined as a predicate on
   $T_{\Sigma,k}(\sigma)$.  Here, $\sigma$ is a type,
   $\Sigma\subseteq P(Loc)$, $k\in Nat$, $\delta\in[0,\infty]$, $P$ is
   a $\Omega$-predicate on $\{s:\tmem\}$, and $Q$ is a $\Omega$-predicate
   on $\{s:\tmem,v:\sigma\}$.  }

We also introduce a relational analog of this predicate, via a predicate
former $\hquad{\sigma,\tau}{k\delta}{P}{t_1}{t_2}{Q}$. Morally, this is valid whenever

\begin{figure}[ht]
\small
\begin{gather*}
    \inferrule*{R \subseteq t_1 : \tau_1 \times\dots\times t_k : \tau_k \\
    \Gamma \vdash t_1 : \tau_1 \dots \Gamma \vdash t_k : \tau_k}
    {\holwf{\Gamma}{R(t_1,\dots,t_k)}}
    \qquad
    \inferrule*{\holwf{\Gamma}{\phi_1}\\ \holwf{\Gamma}{\phi_2}}
        {\holwf{\Gamma}{\phi_1 \wedge \phi_2}}
    \qquad
    \inferrule*{\holwf{\Gamma}{\phi_1}\\ \holwf{\Gamma}{\phi_2}}
        {\holwf{\Gamma}{\phi_1 \To \phi_2}}
    \\
    \inferrule*{\holwf{\Gamma, x : \tau}{\phi}}
        {\holwf{\Gamma}{\forall x. \phi}}
    \qquad
    \inferrule*{\holwf{\Gamma, x : \tau}{\phi}}
        {\holwf{\Gamma}{\exists x. \phi}}
\end{gather*}
\caption{Selected well-formedness rules of HOL}\label{fig:hol-wf}
\end{figure}

\begin{figure}[ht]
\small
\begin{gather*}
    \inferrule*[Right = $\mathcal{H}_I$]
    {\jhol{\Gamma}{\Psi}{\phi\subst{\res}{t}} \\
    \holwf{\Gamma,\st\colon M}{P}}
    {\jhol{\Gamma}{\Psi}{\htriple{\sigma}{0}{\phi}{\unit{t}}{P \wedge \phi}}}
    \\
    \inferrule*[Right = $\mathcal{H}_E$]
    {\jhol{\Gamma}{\Psi}{\htriple{\sigma}{\delta}{P}{t}{Q}} \\
        \jhol{\Gamma,x\colon \sigma}{\Psi}{\htriple{\tau}{\delta'}{Q\subst{\vl}{x}}{u}{R}}
    }
    {\jhol{\Gamma}{\Psi}{\htriple{\tau}{\delta+\delta'}{P}{\mlet{x}{t}{u}}{Q}}}
\end{gather*}
\caption{HOL rules for internalized triples}\label{fig:hol-triples}
\end{figure}

\begin{figure}[ht]
\small
\begin{gather*}
    \inferrule*[Right = Ax]
    {}
    {\jhol{\Gamma}{\Psi,\phi}{\phi}}
    \qquad
    \inferrule*[Right = $\land_I$]
    {\jhol{\Gamma}{\Psi}{\phi_1}\\ \jhol{\Gamma}{\Psi}{\phi_2}}
    {\jhol{\Gamma}{\Psi}{\phi_1 \wedge \phi_2}} \\
\qquad
    \inferrule*[Right = $\land_E$]{\jhol{\Gamma}{\Psi}{\phi_1 \wedge \phi_2}}
    {\jhol{\Gamma}{\Psi}{\phi_1}}
\qquad
    \inferrule*[Right = $\To_I$]
    {\jhol{\Gamma}{\Psi, \phi_1}{\phi_2}}
    {\jhol{\Gamma}{\Psi}{\phi_1 \To \phi_2}} \\
\qquad
    \inferrule*[Right = $\To_E$]
    {\jhol{\Gamma}{\Psi}{\phi_1 \To \phi_2} \\
    \jhol{\Gamma}{\Psi}{\phi_1}}
    {\jhol{\Gamma}{\Psi}{\phi_2}}
\end{gather*}
\caption{Selected HOL rules}\label{fig:hol}
\end{figure}

\begin{figure*}[ht]
\small
\begin{gather*}
\inferrule*[right = \sf U-VAR]
 {\jlc{\Gamma}{x}{\sigma} \\ \jhol{\Gamma}{\Psi}{\phi\subst{\res}{x}}}
 {\juhol{\Gamma}{\Psi}{x}{\sigma}{\phi}}
\qquad
\inferrule*[right= \sf U-ABS]
  {\juhol{\Gamma,x:\tau}{\Psi,\phi'}{t}{\sigma}{\phi}}
  {\juhol{\Gamma}{\Psi}{\lambda x:\tau. t}{\tau \to \sigma}{\forall x. \phi'
    \Rightarrow \phi\subst{\res}{\res\ x}}}
\\
\inferrule*[right= \sf U-APP]
      {\juhol{\Gamma}{\Psi}{t}{\tau\to \sigma}{\forall x.
        \phi'\subst{\res}{x}\Rightarrow\phi\subst{\res}{\res\ x}}\\ \juhol{\Gamma}{\Psi}{u}{\tau}{\phi'}}
  {\juhol{\Gamma}{\Psi}{t\ u}{\sigma}{\phi\subst{x}{u}}}
  \\
\inferrule*[right = \sf U-SUB]
    {\juhol{\Gamma}{\Psi}{t}{\sigma}{\phi'} \\
           \jhol{\Gamma}{\Psi}{\phi'\subst{\res}{t} \Rightarrow \phi\subst{\res}{t}}}
    {\juhol{\Gamma}{\Psi}{t}{\sigma}{\phi}}
    \\
\inferrule*[right= \sf U-CASE]
{\juhol{\Gamma}{\Psi}{t}{\sigma}{b \Rightarrow \phi} \\
\juhol{\Gamma}{\Psi}{u}{\sigma}{\neg b \Rightarrow \phi}}
        {\juhol{\Gamma}{\Psi}{\casebool{b}{t}{u}}{\sigma}{\phi}}
\end{gather*}
\caption{Selected UHOL rules}\label{fig:uhol}
\end{figure*}

\begin{figure*}[ht]
\small
\begin{gather*}
\inferrule*[right= \sf ABS]
      {\jrhol{\Gamma,x_1:\tau_1,x_2:\tau_2}{\Psi,\phi'}{t_1}{\sigma_1}{t_2}{\sigma_2}{\phi}}
{\jrhol{\Gamma}{\Psi}{\lambda x_1:\tau_1. t_1}{\tau_1 \to \sigma_1}{\lambda x_2:\tau_2. t_2}{\tau_2\to \sigma_2}{\forall x_1,x_2. \phi' \Rightarrow \phi\subst{\resl}{\resl\ x_1}\subst{\resr}{\resr\ x_2}}}
\\
\inferrule*[right = \sf APP]
      {
\jrhol{\Gamma}{\Psi}{t_1}{\tau_1\to \sigma_1}{t_2}{\tau_2\to \sigma_2}{
          \forall x_1,x_2. \phi'\subst{\resl}{x_1}\subst{\resr}{x_2}\Rightarrow \phi\subst{\resl}{\resl\ x_1}\subst{\resr}{\resr\ x_2}}\\
\jrhol{\Gamma}{\Psi}{u_1}{\tau_1}{u_2}{\tau_2}{
          \phi'}
}
{\jrhol{\Gamma}{\Psi}{t_1 u_1}{\sigma_1}{t_2 u_2}{\sigma_2}{\phi\subst{x_1}{u_1}\subst{x_2}{u_2}}}
\\
\inferrule*[right = \sf VAR]
      {\jlc{\Gamma}{x_1}{\sigma_1} \\ \jlc{\Gamma}{x_2}{\sigma_2} \\ \jhol{\Gamma}{\Psi}{\phi\subst{\resl}{x_1}\subst{\resr}{x_2}}}
{\jrhol{\Gamma}{\Psi}{x_1}{\sigma_1}{x_2}{\sigma_2}{\phi}}
\\
\inferrule*[right = \sf ABS{-}L]
      {\jrhol{\Gamma,x_1:\tau_1}{\Psi, \phi'}{t_1}{\sigma_1}{t_2}{\sigma_2}{\phi}}
{\jrhol{\Gamma}{\Psi}{\lambda x_1:\tau_1. t_1}{\tau_1 \to \sigma_1}{t_2}{\sigma_2}{\forall x_1. \phi' \Rightarrow \phi \subst{\resl}{\resl\ x_1}}}
\\
\inferrule*[right = \sf APP{-}L]
      {\jrhol{\Gamma}{\Psi}{t_1}{\tau_1\to \sigma_1}{u_2}{\sigma_2}{\forall x_1. \phi'\subst{\resl}{x_1} \Rightarrow \phi\subst{\resl}{\resl\ x_1}}\\
          \juhol{\Gamma}{\Psi}{u_1}{\sigma_1}{\phi'}}
{\jrhol{\Gamma}{\Psi}{t_1 u_1}{\sigma_1}{u_2}{\sigma_2}{\phi\subst{x_1}{u_1}}}
\\
\inferrule*[right = \sf VAR{-}L]
      {\phi\subst{\resl}{x_1} \in \Psi \\  \resr\not\in\ FV(\phi) \\
       \jlc{\Gamma}{t_2}{\sigma_2}}
      {\jrhol{\Gamma}{\Psi}{x_1}{\sigma_1}{t_2}{\sigma_2}{\phi}}
\end{gather*}
\caption{Selected RHOL rules}\label{fig:rhol}
\end{figure*}

\section{Proofs of soundness of the adversary rules}

\subsection{Soundness of the \rname{ADV-U} rule}

To prove soundness of the adversary rules we will use logical relations.
Logical predicates and relations~\cite{plotkin1973lambda} are a technique used in programming
language theory to prove properties such as strong normalization or contextual
equivalence.  The idea of logical relations (or predicates) is that they allow us
to prove that all inhabitants of a certain type satisfy a particular property
that is defined inductively on the structure of types, rather than terms.

%
%Before defining the logical predicates, we need to define the preliminary
%notion of effect of a type, which is a subset of $\loc$ defined inductively as
%follows:
%\begin{gather*}
%	\eff{B} \triangleq \emptyset \qquad
%	\eff{\mon{\Sigma}{\tau}} \triangleq \Sigma \qquad
%    \eff{\sigma \times \tau} \triangleq \eff{\sigma} \cup \eff{\tau} \\
%	\eff{\sigma \to \tau} \triangleq \eff{\tau} \qquad
%    \eff{\forall \alpha. \tau} \triangleq \eff{\tau}\setminus \{\alpha \}
%\end{gather*}
%
%This will be useful to describe the locations that a program of a
%certain type can access and to state the parametricity
%results. Formally, $\eff{\tau}$ collects the memory footprint that a
%program has because of its own memory operations, rather than the ones
%that may come from its arguments. Therefore, we only collect  
%the efects that appear
%in positive position, specifically see the definition of $\eff{\sigma \to \tau}$.
%For example, an adversary of type
%$\forall \alpha. (\sigma \to \mpstn_{\alpha,1}(\tau)) \to \mpstn_{\Sigma
%  \cup \alpha, k}(\tau')$ has effect $\Sigma$.

%\subsection{Logical predicates}

We now define an
indexed logical predicate $\cL_{\phi,\delta}$ mapping types to predicates
(more concretely, it maps a type $\tau$ to a set of closed terms of type $\tau$).
%
%This is defined in the setting of the category $\Pred{\QBS}$ of
%predicates over $\QBS$ and an arbitrary\footnote
%{Technically, we need to require that the lifting $\dot\cP$ satisfies
%$\sample{\nu_\sigma} \in \dot\cP_0(\cL_{\phi,\delta}(\sigma))$ for every
%$\nu\in\Delta$, which is obvious in this case as long as we only sample
%from distributions over base types}
%%
%lifting $\dot\cP:\Pred{\QBS}\to\Pred{\QBS}$ from which
%$\dot\mpstn$ is constructed.
The logical predicate is indexed by an invariant $\phi$, which is a predicate
over memories and a real
$\delta\in[0,1]$ and it is defined as follows:
\begin{align*}
    \cL_{\phi,\delta}(B) &\triangleq \{ b \colon B \} \\
    \cL_{\phi,\delta}(\sigma\to\tau) &\triangleq
        \begin{cases}
            \{ t \colon \sigma\to\tau \mid 
                \forall x \colon \sigma. x \in \cL_{\phi,\delta}(\sigma)
            \To (t~x)\in \cL_{\phi,\delta}(\tau)\} & \text{if } \eff{\sigma}\subseteq\eff{\tau} \\
            \{ t \colon \sigma\to\tau \mid \forall x \colon \sigma. (t~x)\in\cL_{\phi,\delta}(\tau)\}&\text{otherwise}
        \end{cases}\\
    \cL_{\phi,\delta}(\sigma\times\tau) &\triangleq \{t \colon \sigma\times\tau \mid
    \pi_1(t) \in \cL_{\phi,\delta}(\sigma)\wedge \pi_2(t) \in \cL_{\phi,\delta}(\tau) \} \\
    \cL_{\phi,\delta}(\mon{\Sigma,k}{\sigma}) &\triangleq
    \{ t\colon\mon{\Sigma,k}{\sigma}\mid \htriple{\sigma}{k\delta}{\phi}{t}{\vl\in\cL_{\phi,\delta}(\sigma) \wedge \phi}\} \\
    \cL_{\phi,\delta}(\forall \alpha.\tau) &\triangleq
    \{ t \colon \forall \alpha.\tau \mid \forall \Sigma. t
    \in \cL_{\phi,\delta}(\tau\subst{\alpha}{\Sigma}) \}
\end{align*}

%Recall that $\dtimes$ and $\dTo$ denote the product and exponential in $\cP$,
%and that due to the preservation of the CCC structure, we have
%$\dtimes : \Pred{\QBS}_\sigma \times \Pred{\QBS}_\tau \to \Pred{\QBS}_{\sigma\times \tau}$
%and 
%$\dTo : (\Pred{\QBS}_\sigma)^{op} \times \Pred{\QBS}_\tau \to \Pred{\QBS}_{\sigma \To \tau}$.
%

The definition of the logical predicate involves two subtleties.
First note that there are two different definitions of the logical
predicate for arrow types, depending on whether the effect of the
argument is contained in the effect of the result. The idea is that if
it is not, then the argument is ignored, so we do not need to require
that it satisfies the logical predicate.  Otherwise, we get the usual
definition: a function satisfies the logical predicate for $\sigma\to\tau$
if arguments that satisfy the predicate for $\sigma$ get mapped to results
satisfying the predicate the predicate for $\tau$.
The second subtlety is that the definition ignores the first grading
of the monad. A different definition, without indexing the predicate
by $\phi$ and defining instead:
\[
    \cL_{\delta}(\mon{\Sigma,k}{\sigma}) \triangleq
    \{ t\colon\mon{\Sigma,k}{\sigma}\mid \forall \phi\in{\sf Safe}(\Sigma).\htriple{\sigma}{k\delta}{\phi}{t}{\vl\in\cL_{\phi,\delta}(\sigma) \wedge \phi}\} \\
\]

%\[
%    \cL_{\delta}(\mpstn_{\Sigma,k}(\sigma)) \triangleq \bigwedge_{\phi\in\safe{\Sigma}}
%    \phi \dTo \dot\cP_{k\delta}(\cL_{\delta}(\sigma)\wedge\phi)
%\]
%\[
%    \cL_{\delta}(\mpstn_{\Sigma,k}(\sigma)) \triangleq \bigwedge_{\phi\in\safe{\Sigma}}
%    \phi \dTo \dot\cP_{k\delta}(\cL_{\delta}(\sigma)\wedge\phi)
%\]
would impose overly strong conditions on monadic types that appear in
argument position.  Namely, it would force us to prove that they
preserve \emph{all} the invariants that are safe for a given region
$\Sigma$, but we only know that the oracle preserves a particular
invariant $\phi$.
%so we can only hope that the adversary will preserve
%$\phi$.
Note however that the grading $k$ is used to scale the grading of the lifting.
The grading $\Sigma$ of the monad is used in the premise of the
Basic Lemma, which we now state, using the notion of safety that
we defined in Section~\ref{sec:ubl}.

\begin{lemma}[Basic lemma]\label{lem:basic-unary}
    Let $\dot\cP$ be as above, and assume that for every $\nu\in\Delta$,
    $\sample{\nu_\sigma} \in \dot\cP_0(\cL_{\phi,\delta}(\sigma))$.
    Let $\vdash t \colon \sigma$ be a closed term, and $\holwf{\st\colon M}{\phi}$
    such that $\phi\in{\sf Safe}(\eff{\sigma})$. Then, for all $\delta\geq 0$,
    $t \in \cL_{\phi,\delta}(\sigma)$.
\end{lemma}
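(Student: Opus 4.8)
The plan is to prove a stronger statement about open terms by induction on the typing derivation and to recover the stated lemma as the empty-context instance. Concretely, I would show: whenever $\Gamma \vdash t : \sigma$ and $\phi \in \safe{\eff{\sigma}}$, then $t[\gamma] \in \cL_{\phi,\delta}(\sigma)$ for every closing substitution $\gamma$ sending each variable $x : \tau_x$ with $\eff{\tau_x} \subseteq \eff{\sigma}$ to a closed term of $\cL_{\phi,\delta}(\tau_x)$, and the remaining variables to arbitrary closed terms of the appropriate type. This effect-indexed side condition on $\gamma$ deliberately mirrors the two-case definition of $\cL_{\phi,\delta}$ on arrow types, and it is exactly what makes the binder cases compose.

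Before the main induction I would record two auxiliary facts. First, \emph{safety is antitone}: if $\Sigma' \subseteq \Sigma$ and $\phi \in \safe{\Sigma}$, then $\phi \in \safe{\Sigma'}$ (immediate from the definition of $\safe{\cdot}$). Since in every typing rule the effect of each premise is contained in the effect of the conclusion, this lets the hypothesis $\phi \in \safe{\eff{\sigma}}$ thread down to subterms. Second, \emph{subtyping monotonicity}: $\tau' \preceq \tau$ implies $\cL_{\phi,\delta}(\tau') \subseteq \cL_{\phi,\delta}(\tau)$, proved by a side induction on the subtyping derivation; the monadic case uses that enlarging the cost $k$ only weakens the grading of the lifting (the first clause in the definition of a graded lifting) and that the region annotation is erased inside the predicate. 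These discharge the \textsf{Subtype} case.

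The core is a case analysis on the last typing rule, with the monadic cases carried through the three consequences of Lemma~\ref{lem:gr-lift}. For \textsf{Unit}, the unit consequence gives the triple at grade $0$ after the induction hypothesis places the returned value in $\cL_{\phi,\delta}(\tau)$. For \textsf{Read}, \textsf{Skip} and \textsf{Sample} the state is left unchanged, so $\phi$ is preserved with error $0$ (for sampling, the returned value lies in $\cL_{\phi,\delta}(\sigma)$ by the lemma's standing assumption on base distributions). The decisive leaf is \textsf{Write}: the subterm has type $\mon{\{a\},0}{\tunit}$, so the threaded hypothesis is precisely $\phi \in \safe{\{a\}}$, i.e.\ $\phi$ does not observe location $a$, whence $\wrt{a}{u}$ preserves $\phi$ with error $0$. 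For \textsf{Bind} I would combine the induction hypothesis on $t_1$ (a triple at grade $k_1\delta$) with the induction hypothesis on $t_2$ read uniformly over all values in $\cL_{\phi,\delta}(\tau)$ (a triple at grade $k_2\delta$), gluing them with the Kleisli-composition and strength consequences of Lemma~\ref{lem:gr-lift}; the grades add to $(k_1{+}k_2)\delta$, matching the monadic type of $\mlet{x}{t_1}{t_2}$. The \textsf{Fold} case iterates this composition with an inner induction on the bound $K$, yielding grade $(k{+}K\cdot k')\delta$. The pure cases are routine and rely on the relevance condition on $\gamma$: for abstraction and application, when $\eff{\sigma} \subseteq \eff{\tau}$ the induction hypothesis on the argument applies (its effect is safe by antitonicity and relevance), and when $\eff{\sigma} \not\subseteq \eff{\tau}$ the argument is discarded and no hypothesis on it is required, exactly matching the two branches of the arrow predicate.

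\textbf{Main obstacle.} The step I expect to be hardest is region polymorphism, i.e.\ \textsf{ForAll-I}/\textsf{ForAll-E}. Membership in $\cL_{\phi,\delta}(\forall\alpha.\tau)$ requires membership in $\cL_{\phi,\delta}(\tau\subst{\alpha}{\Sigma'})$ for \emph{every} instantiation $\Sigma'$, whereas the available hypothesis only furnishes $\phi \in \safe{\eff{\tau\subst{\alpha}{\emptyset}}}$, and $\eff{\tau\subst{\alpha}{\Sigma'}}$ may be strictly larger (it can contain $\Sigma'$). The resolution --- and the reason the arrow predicate is split into two cases and the bind rule adds $\eff{\tau}$ --- is that in the adversary type the quantified region occurs only in argument (oracle) position, so unfolding the type selects the effect-containment branch of the arrow predicate: the oracle is \emph{assumed} to inhabit $\cL_{\phi,\delta}$ and hence to preserve $\phi$ with error $\delta$ per call, while every \emph{direct} \textsf{Write} inside the adversary body is confined to the concrete region for which $\phi$ is genuinely safe. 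Thus $\phi$ never needs to be safe for the fresh region $\Sigma'$: writes to $\Sigma'$ happen only inside the oracle and are discharged by the oracle's own predicate membership, and the cost annotation on the result type bounds the number of such calls so that the accumulated error is $k\delta$. Making precise that the safety obligation arising at each \textsf{Write} leaf always concerns a concrete region and never the quantified one is the crux of the argument.
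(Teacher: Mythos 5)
Your overall strategy coincides with the paper's: generalize to open terms, with the closing substitution required to land in $\cL_{\phi,\delta}$ exactly on the variables whose effect is contained in the result effect (the paper phrases this as a context split $\Gamma,\Gamma'$), then induct on the typing derivation, handling the monadic constructs through Lemma~\ref{lem:gr-lift} and the \textsf{Write} leaf through safety at the concrete location $\{a\}$. All of that matches, and your treatment of \textsf{Subtype} via monotonicity of $\cL_{\phi,\delta}$ along $\preceq$ together with antitonicity of $\safe{\cdot}$ fills in a case the paper omits.

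However, there is a genuine gap exactly at the point you flag as the crux. Your induction statement quantifies only over ordinary contexts $\Gamma$ and assumes $\phi\in\safe{\eff{\sigma}}$; it says nothing about terms typed under a non-empty region context $\Xi$. The premise of \textsf{ForAll-I} is a judgment $\Xi,\alpha;\Delta;\Gamma\vdash t:\tau$ with $\alpha$ free, so your hypothesis cannot even be applied to it, and if you first instantiate $\alpha$ by an arbitrary $\Sigma'$ you would need $\phi\in\safe{\eff{\tau\subst{\alpha}{\Sigma'}}}$, which you do not have --- precisely the obstacle you describe. Your proposed resolution appeals to the shape of the adversary type (the quantified region occurring only in oracle position), but that property is not available inside the induction: the \textsf{ForAll-I} case must be proved for an arbitrary $\forall\alpha.\tau$, since the lemma quantifies over all types. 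The paper closes this gap by changing the induction invariant itself: its generalized lemma assumes only $\phi\in{\sf Safe}(\eff{\sigma}\setminus\Xi)$ --- safety for the \emph{concrete} part of the effect, with all region variables removed --- and quantifies over instantiations $\Xi\vdash\xi$ of the region context. With this relativization, \textsf{ForAll-I} is immediate, because safety for $\eff{\tau}\setminus(\Xi\cup\{\alpha\})$ is the same as safety for $\eff{\forall\alpha.\tau}\setminus\Xi$ (recall $\eff{\forall\alpha.\tau}=\eff{\tau\subst{\alpha}{\emptyset}}$); \textsf{ForAll-E} follows from antitonicity of safety; and the \textsf{Write} case is unaffected since written locations are always concrete. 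This relativized invariant is the formal content of your remark that ``the safety obligation at each \textsf{Write} leaf always concerns a concrete region and never the quantified one''; without building it into the statement being proved, your induction does not close.
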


\begin{proof}
    We actually prove a generalization of the Basic Lemma, which makes
    the cases of abstraction and application easier to handle. We omit
    the adversary context from the proof, but note that this extends easily
    since they must be replaced by closed terms of the appropriate type.

\begin{lemma}
    Let $\Xi; \Gamma,\Gamma'\vdash t : \sigma$ be a well-typed term and $\phi$ a predicate
    over memories such that $\phi\in{\sf Safe}(\eff{\sigma}\setminus\Xi)$. 
    Assume further that for every $(x:\tau)\in\Gamma'$,
    $\eff{\tau}\not\subseteq\eff{\sigma}$ and that $\Xi \vdash \xi$ is an instantiation
    of the context $\Xi$. Let also $\Gamma\xi\vdash \gamma$, $\Gamma'\xi\vdash\gamma'$ be instantiations
    of the typing contexts. 
    If $\gamma$ is such that $\gamma(x_i)\in \cP_{\phi,\delta}(\sigma_i)$
    for every $(x_i\colon \sigma_i)\in\Gamma$, then 
    $ t\gamma\gamma' \in \cP_{\phi,\delta}(\sigma\xi)$.
\end{lemma}

This can be proven by induction on the typing derivation. We show a few cases,
omitting the $\Xi$ context when irrelevant.

    \begin{itemize}

        \item Variable. We have $\Gamma, \Gamma'\vdash  x:\sigma$, and
            by assumption, $x$ cannot be in $\Gamma'$. Therefore,
            also by assumption, $\sem{x\gamma} \in \cL_{\phi,\delta}(\sigma)$.

        \item Abstraction. Assume $\phi \in \safe{\eff{\sigma\to\tau}}$.
            If $\eff{\sigma}\subseteq \eff{\tau}$, then also $\phi \in \safe{\eff{\sigma}}$.
            We apply I.H., and we have that
            \[t\gamma[x\mapsto u]\gamma' \in \cL_{\phi,\delta}(\tau)\] 
            for all $u \in \cL_{\phi,\delta}(\sigma)$. 
            Therefore,
            $(\lambda x.t)\gamma\gamma' \in \cL_{\phi,\delta}(\sigma,\tau)$.

            If $\eff{\sigma}\not\subseteq \eff{\tau}$, then by I.H., for all $u \colon \sigma$,
            $t\gamma(\gamma'[x\mapsto u]) \in \cL_{\phi,\delta}(\tau)$. Therefore,
            $(\lambda x.t)\gamma\gamma' \in \cL_{\phi,\delta}(\sigma,\tau)$ too.

        \item Application. By I.H., $t\gamma\gamma' \in \cL_{\phi,\delta}(\sigma\to\tau)$.
            If $\eff{\sigma}\subseteq\eff{\tau}$, then
            by I.H. we also have
            $u\gamma\gamma' \in \cL_{\phi,\delta}(\sigma)$, and therefore by
            definition
            $(t~u)\gamma\gamma' \in \cL_{\phi,\delta}(\sigma)$.
            Otherwise, by definition of the logical predicate we have again
            $(t~u)\gamma\gamma' \in \cL_{\phi,\delta}(\sigma)$.

        \item Unit. WLOG we can assume that $\Gamma'=\emptyset$. We have that
            $\phi\in{\sf safe}(\eff{\cT_{\Sigma}(\sigma)})$, so also 
            $\phi\in{\sf safe}(\eff\sigma))$. By I.H.,
            $t\gamma \in \cL_{\phi,\delta}(\sigma)$.
            By the properties of $\dot\cP$ :
            \[ (\unit{t})\gamma \in \htriple{\sigma}{0}{\phi}{-}{\vl\in\cL_{\phi,\delta}(\sigma) \wedge \phi}\]

        \item Bind. WLOG we can assume that $\Gamma'=\emptyset$. Our premises are
            $\Gamma\vdash t \colon \cT_\Sigma(\tau)$ and $\Gamma, x\colon \tau \vdash u \colon \cT_{\Sigma'}(\sigma)$,
            and by assumption, $\Phi\in{\sf Safe}(\Sigma\cup\Sigma'\cup\eff{\tau}\cup\eff{\sigma})$.
            Therefore, $\Phi\in{\sf Safe}(\Sigma\cup\eff{\tau})$ and $\Phi\in{\sf Safe}(\Sigma'\cup\eff{\sigma})$.
            So we can apply I.H. to both premises. For the first one, we have that
            \[ t\gamma \in  \htriple{\sigma}{k\delta}{\phi}{-}{\vl\in\cL_{\phi,\delta}(\sigma) \wedge \phi}\]
            and for the second one, we have that for all $e \in \cL_{\phi,\delta}(\sigma)$,
            \[ u\gamma[x\mapsto e] \in \htriple{\tau}{k'\delta}{\phi}{-}{\vl\in\cL_{\phi,\delta}(\tau) \wedge \phi}\]
            By properties of the lifting, we get
            \[ (\mlet{x}{t}{u})\gamma \in \htriple{\tau}{(k+k')\delta}{\phi}{-}{\vl\in\cL_{\phi,\delta}(\tau) \wedge \phi}\]

        \item Forall introduction. By assumption, $\Xi,\alpha; \Gamma \vdash t \colon \tau$,
            and $\phi\in{\sf Safe}(\eff{\forall \alpha. \tau}\setminus \Xi)$,
            so $\phi\in{\sf Safe}(\eff{\tau}\setminus (\Xi \cup \alpha))$.
            For all $\Xi \vdash \xi$ and $\Sigma$, by I.H.,
            $t\gamma \in \cL_{\phi,\delta}(\tau\xi[\alpha\mapsto\Sigma])$,
            and therefore
            $(\Lambda \alpha. t)\gamma \in \cL_{\phi,\delta}(\forall \alpha. \tau)$.

        \item Forall elimination. By assumption, $\Xi;\Gamma\vdash t\colon \forall \alpha.\tau$
            and $\phi \in {\sf Safe}(\eff{\tau\subst{\alpha}{\Sigma}}\setminus \Xi)$.
            Recall that $\eff{\forall \alpha. \tau} = \eff{\tau\subst{\alpha}{\emptyset}}$.
            Since $\eff{\tau\subst{\alpha}{\emptyset}} \subseteq \eff{\tau\subst{\alpha}{\Sigma}}\setminus \Xi$,
            then also $\phi \in {\sf Safe}(\eff{\forall \alpha. \tau} \setminus \Xi)$
            (in other words, if the effect is smaller, then the predicate is still safe).
            Therefore, we can apply I.H., and we get that for $\Xi\vdash \xi$
            $t\gamma \in \cL_{\phi,\delta}(\forall \alpha. \tau\xi)$.
            From this, we can conclude that
            $t[\Sigma]\gamma \in  \cL_{\phi,\delta}(\tau\subst{\alpha}{\Sigma}\xi)$.

    \end{itemize}

\end{proof}

And from this, we can conclude:

\begin{corollary}
    The \rname{ADV-U} rule is sound.
\end{corollary}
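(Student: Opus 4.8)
The plan is to instantiate the general machinery of the logical predicate $\cL_{\phi,\delta}$ at the fixed invariant $\phi \defeq P$. Soundness of \rname{ADV-U} means that for every closed term $u$ of type $\tau_{\cA} \defeq \forall \alpha.\,(\sigma \to \mon{\alpha,1}{\tau}) \to \mon{\Sigma \cup \alpha, k}{\tau'}$ substituted for $\cA$, the conclusion triple $\aumjnoctxt{P}{u\,(\lambda x.t)}{\mon{\Sigma\cup\Sigma',k}{\tau'}}{P}{k\cdot\delta}$ is semantically valid. Since the interpretation of a monadic triple coincides, through the stateful lifting $\lpstn$ of Lemma~\ref{lem:gr-lift}, with membership in the logical predicate at the corresponding monadic type, it suffices to show $u\,(\lambda x.t) \in \cL_{P,\delta}(\mon{\Sigma\cup\Sigma',k}{\tau'})$. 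I would obtain this from the closure of $\cL_{P,\delta}$ under forall-elimination and application, once membership of the two components $u$ and $\lambda x.t$ is in hand.

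For the adversary itself, I would appeal to the Basic Lemma (Lemma~\ref{lem:basic-unary}). The key computation is that $\eff{\tau_{\cA}} = \Sigma$: unfolding $\eff{-}$ gives $\eff{\tau_{\cA}} = \eff{\mon{\Sigma\cup\alpha,k}{\tau'}}\subst{\alpha}{\emptyset} = (\Sigma\cup\alpha\cup\eff{\tau'})\subst{\alpha}{\emptyset}$, which collapses to $\Sigma$ because $\tau'$ is non-monadic, so $\eff{\tau'} = \emptyset$. As $u$ is closed and the premise supplies $P \in \safe{\Sigma} = \safe{\eff{\tau_{\cA}}}$, the Basic Lemma yields $u \in \cL_{P,\delta}(\tau_{\cA})$ directly, with the error $k\cdot\delta$ already accounted for inside its Bind case, where the per-call cost $1$ is summed $k$ times.

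For the oracle argument $\lambda x.t$ I cannot invoke the Basic Lemma, since $P$ need not lie in $\safe{\Sigma'}$; instead I would feed in the second premise of the rule explicitly. Unfolding the arrow clause of $\cL_{P,\delta}(\sigma\to\mon{\Sigma',1}{\tau})$: because $\sigma$ is non-monadic, $\cL_{P,\delta}(\sigma)$ is the full set of closed terms, so it remains to check that $t$ lands in $\cL_{P,\delta}(\mon{\Sigma',1}{\tau})$ for every argument. This unfolds to the internalized triple $\htriple{\tau}{\delta}{P}{t}{\vl\in\cL_{P,\delta}(\tau)\wedge P}$, whose first conjunct is vacuous since $\tau$ is non-monadic, leaving exactly the premise $\aumjnoctxt{P}{t}{\mon{\Sigma',1}{\tau}}{P}{\delta}$. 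Combining the two memberships, I would instantiate the quantified region in $u \in \cL_{P,\delta}(\tau_{\cA})$ at $\alpha \defeq \Sigma'$ and apply the arrow clause, whose effect-containment side condition holds because $\eff{\sigma\to\mon{\Sigma',1}{\tau}} = \Sigma' \subseteq \Sigma\cup\Sigma'$, to conclude $u\,(\lambda x.t) \in \cL_{P,\delta}(\mon{\Sigma\cup\Sigma',k}{\tau'})$, which is the desired triple.

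The main obstacle is not this final assembly but the Basic Lemma on which it rests, proven by induction over typing derivations with the predicate indexed by the fixed invariant $P$. This indexing by $\phi$ is precisely what lets the adversary, established via safety, and the oracle, established via an explicit derivation that need \emph{not} be safe for $\Sigma'$, inhabit the \emph{same} predicate $\cL_{P,\delta}$; and the two-case definition of the arrow clause is what prevents the non-safe argument effect $\Sigma'$ from breaking invariant preservation. The delicate point to verify is that the grade arithmetic lines up: the cost annotation $k$ on the adversary's result type, together with the summation of per-call costs in the Bind case of the Basic Lemma, must reproduce exactly the $k\cdot\delta$ error bound in the conclusion.
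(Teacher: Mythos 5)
Your proposal is correct and follows essentially the same route as the paper's proof: the Basic Lemma (via $P \in \safe{\Sigma} = \safe{\eff{\tau_{\cA}}}$) places any closed instantiation of $\cA$ in $\cL_{P,\delta}$ at the adversary type, the rule's premise places $\lambda x.t$ in $\cL_{P,\delta}(\sigma\to\mon{\Sigma',1}{\tau})$, and instantiating the quantifier at $\Sigma'$ and applying the arrow clause gives membership at $\mon{\Sigma\cup\Sigma',k}{\tau'}$, whose monadic clause is exactly the desired triple with error $k\cdot\delta$. Your extra bookkeeping (computing $\eff{\tau_{\cA}}=\Sigma$, checking the effect-containment side condition of the arrow clause, and noting that non-monadic types trivially inhabit the predicate) just makes explicit details the paper leaves implicit.
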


\begin{proof}
    By the premise on $\cA$, safety of $\phi$  and the Basic Lemma, we can prove that
    \[
            \cA \in \cL_{\phi,\delta}(\forall \alpha. (\sigma \to \cT_{\alpha,1}(\tau)) \to \cT_{\Sigma \cup \alpha, k}(\tau'))
    \]
    On the other hand, by the assumption on the oracle we have that:
    \[\lambda x. t 
    \in \cL_{\phi,\delta}(\sigma\to\cT_{\Sigma',1}(\tau))\]
    From this, we can derive:
    \[
            \cA[\Sigma'](\lambda x. t) \in \cL_{\phi,\delta}(\cT_{\Sigma \cup \Sigma', k}(\tau'))
    \]
    and the conclusion follows directly.
\end{proof}

\subsection*{Soundness of the \rname{ADV-R} rule}

We generalize the logical predicates to the relational case.
%That is, 
%we consider a category $\Relqbs\Omega$ of generalized
%predicates, an embedding $H:\Relqbs{}\to\Relqbs{\Omega}$, an arbitrary\footnote
%{Technically, the
%lifting $\ddot\cP$ needs to satisfy that
%$(\sample{B},\sample{B}) \in \ddot\cP(\cR_{\phi,(\epsilon,\delta)}(B))$}
%%
%lifting
%$\ddot\cP:\Relqbs\Omega\to\Relqbs\Omega$ of the $\cP\times\cP$ monad, and 
%$\ddot\cT$ constructed from
%$\ddot\cP$.
For an invariant $\holwf{\stl\colon M, \str\colon M}{\phi}$
and $\delta\in[0,\infty]$ we define a logical relation,
i.e., a map from a type $\sigma$ to pairs of closed terms of type $\sigma$ as follows:
\begin{align*}
    \cR_{\phi,\delta}(B) &\triangleq \{ b_1, b_2 \colon B  \mid b_1 = b_2 \} \\
    \cR_{\phi,\delta}(\sigma\to\tau) &\triangleq
                                                \begin{cases}
            \left\{ t_1,t_2 \colon \sigma\to\tau \left|
                \begin{array}{l}
                \forall u_1, u_2 \colon \sigma. (u_1,u_2) \in \cR_{\phi,\delta}(\sigma)\\
                \To(t_1~u_1, t_2~u_2)\in \cR_{\phi,\delta}(\tau)
                \end{array}\right.\right\}
            \vspace*{.5mm}
            & \text{if}\ \eff{\sigma}\subseteq\eff{\tau} \\
            \{ t_1,t_2 \colon \sigma\to\tau \mid \forall u_1,u_2 \colon \sigma. 
            (t_1~u_1,t_2~u_2)\in\cR_{\phi,\delta}(\tau)\} &\text{otherwise}
        \end{cases}\\
    \cR_{\phi,\delta}(\sigma\times\tau) &\triangleq 
    \left\{t_1,t_2 \colon \sigma\times\tau \left|
        \begin{array}{l}
        (\pi_1(t_1),\pi_1(t_2)) \in \cR_{\phi,\delta}(\sigma)\\
        \wedge\; (\pi_2(t_1),\pi_2(t_2)) \in \cR_{\phi,\delta}(\tau)
        \end{array}\right.\right\}\\
        \cR_{\phi,\delta}(\cT_{\Sigma,k}(\sigma)) &\triangleq
    \{ t_1,t_2\colon\cT_{\Sigma,k}(\sigma)\mid 
    \hquad{\sigma}{k\delta}{\phi}{t_1}{t_2}{(\vll,\vlr)\in\cR_{\phi,\delta}(\sigma) \wedge \phi}\} \\
        \cR_{\phi,\delta}(\forall \alpha.\tau) &\triangleq
    \{ t_1,t_2 \colon \forall \alpha.\tau \mid \forall \Sigma. 
    (t_1[\Sigma], t_2[\Sigma]) \in \cR_{\phi,\delta}(\tau\subst{\alpha}{\Sigma}) \}
%    \cR_{\phi,(\epsilon,\delta)}(B) &\triangleq %\sem{b_1,b_2 \colon B \vdash b_1 = b_2} \\
%    (\sem{B},\sem{B},\{ (b_1,b_2) \in \sem{B}\times\sem{B} \mid b_1 = b_2 \}) \\
%    \cR_{\phi,(\epsilon,\delta)}(\sigma\to\tau) &\triangleq
%        \begin{cases}
%            \cR_{\phi,(\epsilon,\delta)}(\sigma)\ddTo \cR_{\phi,(\epsilon,\delta)}(\tau)\qquad\text{if}\ \eff{\sigma}\subseteq\eff{\tau} \\
%            \sem{\holwf{x_1,x_2:\sigma}{\top}}\ddTo \cR_{\phi,(\epsilon,\delta)}(\tau)\qquad\text{otherwise}
%        \end{cases}\\
%    \cR_{\phi,(\epsilon,\delta)}(\sigma\times\tau) &\triangleq  \cR_{\phi,(\epsilon,\delta)}(\sigma)\ddtimes \cR_{\phi,(\epsilon,\delta)}(\tau) \\
%    \cR_{\phi,(\epsilon,\delta)}(\cT_{\Sigma,k}(\sigma)) &\triangleq
%    \ddstmtr{\cP^{\it dp}}_{\sigma,(k\epsilon,k\delta)}
%    (\phi, \cR_{\phi,(\epsilon,\delta)}(\sigma)\wedge \phi) \\
%    \cR_{\phi,(\epsilon,\delta)}(\forall \alpha.\tau) &\triangleq
%    \cR_{\phi,(\epsilon,\delta)}(\tau\subst{\alpha}{\emptyset})
\end{align*}
The definition is analogous to the unary case. Note that the relation at
the base types changes, we now require equality to ensure that the computations
have the same control flow.
%
%\begin{definition}
%Let $\phi$ be an $\Omega$-predicate and $\Sigma \subseteq \loc$. We say that
%$\phi \in {\sf RSafe}_\Omega(\Sigma)$ if and only if,
%\[ \forall m_1, m_2, m_1', m_2'. 
%    (\forall l \not\in\Sigma. m_1[l] = m_1'[l] \wedge m_2[l] = m_2'[l])
%    \rightarrow (\phi(m_1,m_1') \Leftrightarrow_\Omega \phi(m_2,m_2'))
%\]
%and 
%\[ \forall m_1, m_2, l. l \in \Sigma \Rightarrow \phi(m_1, m_2)
%    \Rightarrow m_1[l] = m_2[l]
%\]
%\end{definition}
%
We state now the Basic Lemma:
\begin{lemma}[Basic lemma]\label{lem:basic-rel}
    Let $\ddot\cP$ be as above, and assume that for every $\nu\in\cD$,
    $(\sample{\nu},\sample{\nu}) \in \ddot\cP^{\it dp}_{(0,0)}(\cR_{\phi,\delta}(B))$.
    Let $\vdash t : \sigma$ be a closed term and $\stl\colon M, \str\colon M\vdash \phi$
    such that $\phi\in{\sf RSafe}(\eff{\sigma})$. 
    Then, for all $\epsilon,\delta\geq 0$, $(t,t)\in\cR_{\phi,\delta}(\sigma)$
    %    Let $\Relqbs{}$ denote the category $\BRel{\QBS}$, and assume that for every
%    $\nu\in\cD$,
%    $(\sample{\nu},\sample{\nu}) \in \ddot\cP^{\it dp}_{(0,0)}(\cR_{\phi,(\epsilon,\delta)}(B))$.
%    Let $\Xi; \Gamma \vdash t : \sigma$ be a well-typed term and $\phi\in\Relqbs{}_{(M,M)}$
%    such that $\phi\in{\sf RSafe}(\eff{\sigma}\setminus\Xi)$. 
%    Then, for all $\Xi \vdash \xi$ there is a morphism in
%    $\Relqbs{}$
%    \[ (\sem{\Gamma\vdash t : \sigma},\sem{\Gamma\vdash t : \sigma})
%    : \cR_{\phi,(\epsilon,\delta)}(\Gamma\xi)\ddto\cR_{\phi,(\epsilon,\delta)}(\sigma\xi)\]
\end{lemma}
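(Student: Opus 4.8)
The plan is to follow the structure of the unary Basic Lemma (Lemma~\ref{lem:basic-unary}): rather than attack the closed statement directly, I would first state and prove a strengthened version that accommodates open terms, and then specialize. The strengthening splits the variable context into two parts $\Gamma, \Gamma'$, where $\Gamma$ collects the variables whose latent effect is contained in $\eff{\sigma}$ --- these must be instantiated by \emph{pairs} lying in the appropriate logical relation --- while $\Gamma'$ collects variables with $\eff{\tau}\not\subseteq\eff{\sigma}$, which may be instantiated by arbitrary pairs since they are provably ignored by the computation's effect. Concretely I would prove: for any well-typed $\Xi;\Gamma,\Gamma'\vdash t:\sigma$ with $\phi\in\rsafe{\eff{\sigma}\setminus\Xi}$, given a region instantiation $\Xi\vdash\xi$, a pair of related $\Gamma$-instantiations, and arbitrary $\Gamma'$-instantiations, the two instantiated copies of $t$ lie in $\cR_{\phi,\delta}(\sigma\xi)$. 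The lemma then follows by taking $\Gamma=\Gamma'=\emptyset$.

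The argument proceeds by induction on the typing derivation, mirroring the unary proof case-for-case but now carrying pairs of terms and using the relational lifting $\ddot\cP^{\it dp}$ (Example~\ref{ex:DP-lifting}) in place of the unary $\dot\cP$. The purely functional cases --- variable, abstraction, application, pairing, projection --- are essentially identical to the unary ones, the only change being that the base-type relation is now equality; this forces the two control flows to agree at conditionals, which is exactly what the extra premise of \rname{MCASE-R} records (the guard $b$ is of base type $\BB$, so the induction hypothesis places $(b,b)$ in the equality relation and both sides take the same branch). The monadic cases \rname{UNIT-R} and \rname{MLET-R} rely on the relational generalization of Lemma~\ref{lem:gr-lift}: unit yields a pair satisfying the quadruple with grade $0$, and bind composes the two quadruples, summing the gradings to $(k+k')\delta$. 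The sampling case is discharged directly by the hypothesis $(\sample{\nu},\sample{\nu})\in\ddot\cP^{\it dp}_{(0,0)}(\cR_{\phi,\delta}(B))$, and the $\forall$-introduction and $\forall$-elimination cases use $\eff{\forall\alpha.\tau}=\eff{\tau\subst{\alpha}{\emptyset}}\subseteq\eff{\tau\subst{\alpha}{\Sigma}}$, so that safety of $\phi$ is inherited when instantiating a region variable.

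The main obstacle --- and the genuinely relational part of the argument --- is the treatment of \rname{READ-R} and \rname{WRITE-R}, where the strengthened safety condition $\phi\in\rsafe{\Sigma}$ is indispensable. For a read at a location $a\in\eff{\sigma}$, the two executions must return the \emph{same} value in order to land in the base-type relation (equality); this is guaranteed precisely by the conjunct $m_1[a]=m_2[a]$ of $\rsafe{}$, and since reads do not alter the state, $\phi$ is preserved trivially. For a write $a:=u$, since $u$ has non-monadic type $\tval$ the induction hypothesis gives $(u,u)\in\cR_{\phi,\delta}(\tval)$, i.e.\ $u$ evaluates to a common value $v$ on both sides; the conjunct $\phi(m_1,m_2)\Rightarrow\phi(m_1[a\mapsto v],m_2[a\mapsto v])$ of $\rsafe{}$ then yields that writing this shared value preserves $\phi$, giving the quadruple with grade $0$. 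This is exactly where the relational notion of safety strengthens its unary counterpart: the unary logic only needs writes to preserve the invariant, whereas here reads must additionally be synchronized so the two executions cannot diverge. I expect the bookkeeping about which variables fall into $\Gamma$ versus $\Gamma'$ --- the effect-containment side conditions that keep the abstraction and application cases sound --- to be the most delicate point to get exactly right, just as in the unary proof.
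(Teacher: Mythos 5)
Your proposal is correct and follows exactly the route the paper takes: the paper's own proof of Lemma~\ref{lem:basic-rel} simply states that one first formulates a generalization to open terms (with the same $\Gamma$/$\Gamma'$ context split as in the unary Lemma~\ref{lem:basic-unary}) and then proceeds by induction on the typing derivation. Your write-up in fact supplies more detail than the paper does, and the details you give --- equality at base types synchronizing control flow, the relational version of Lemma~\ref{lem:gr-lift} for the monadic cases, and the two conjuncts of $\rsafe{}$ discharging \rname{READ-R} and \rname{WRITE-R} --- are the right ones.
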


\begin{proof}
    The proof is very similar to the unary case, by first stating a generalization
    and then proving it by induction on the typing judgment.
%    We again prove a generalization of the Basic Lemma, which makes
%    the cases of abstraction and application easier to handle. For ease
%    of notation, we will use $f^2 \triangleq (f,f)$ and for $X\in\Predqbs{}$,
%    $X^2 = X\times X$ denotes the diagonal embedding into $\Relqbs{}$.
%
%\begin{lemma}
%    Let $\Xi; \Gamma,\Gamma'\vdash t : \sigma$ be a well-typed term and $\phi$ a predicate
%    over memories such that $\phi\in{\sf RSafe}(\eff{\sigma}\setminus\Xi)$. 
%    Assume further that for every $(x:\tau)\in\Gamma'$,
%    $\eff{\tau}\not\subseteq\eff{\sigma}$.
%    Then, for all $\Xi \vdash \xi$ there is a morphism in
%    $\Relqbs{}$
%    \[ \sem{\Gamma,\Gamma'\vdash t : \sigma}^2
%        : \cR_{\phi,(\epsilon,\delta)}(\Gamma\xi)\ddtimes\sem{\holwf{\Gamma'}{\top}}^2
%\ddto\cR_{\phi,(\epsilon,\delta)}(\sigma\xi)\]
%\end{lemma}
%The proof is by induction on the typing derivation.
\end{proof}

%\aag{Alternatively, we could state it as:
%    \begin{itemize}
%        \item If $\Gamma \vdash t : \sigma$ for non monadic $\sigma$, then
%            $\jrhol{\Gamma}{\cR_{\phi,(\epsilon,\delta)}(\Gamma)}
%            {t}{\sigma}{t}{\sigma}{\cR_{\phi,(\epsilon,\delta)}(\sigma)}$
%        \item If $\Gamma \vdash t : \cP_k\sigma$ then
%            $\armj{\Gamma}{\cR_{\phi,(\epsilon,\delta)}(\Gamma)}{\phi}
%            {t}{\cP_k\sigma}{t}{\cP_k\sigma}
%            {\phi\wedge\cR_{\phi,(\epsilon,\delta)}(\sigma)}{k\epsilon,k\delta}$
%    \end{itemize}
%            and prove it constructively by giving the explicit RHOL derivation
%}

By instantiating the Basic Lemma at the type of adversaries, we get
the following:
\begin{corollary}
    The \rname{ADV-R} rule is sound.
\end{corollary}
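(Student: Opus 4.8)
The plan is to prove both rules sound by \emph{logical predicates} (for \rname{ADV-U}) and \emph{logical relations} (for \rname{ADV-R}), each indexed by the invariant $\phi$ and the per-call error $\delta$. For the unary case I would define, by induction on types, a family $\cL_{\phi,\delta}(\cdot)$ of sets of closed terms. The decisive clause is the monadic one, $\cL_{\phi,\delta}(\mon{\Sigma,k}{\tau}) \triangleq \{ t \mid \htriple{\tau}{k\delta}{\phi}{t}{\vl\in\cL_{\phi,\delta}(\tau)\wedge\phi}\}$, which forces a cost-$k$ computation to preserve $\phi$ with total error $k\delta$ and to return a value again in the predicate. For arrow types I would \emph{case-split on whether $\eff{\sigma}\subseteq\eff{\tau}$}: in the containment case I take the usual clause (predicate-respecting arguments map to predicate-respecting results), and otherwise I drop all hypotheses on the argument, since it cannot influence $\phi$. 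For $\forall$ I erase the region, $\cL_{\phi,\delta}(\forall\alpha.\tau) \triangleq \{t \mid \forall\Sigma.\, t\in\cL_{\phi,\delta}(\tau\subst{\alpha}{\Sigma})\}$, mirroring the semantics. The monadic clause is stated through the abstract triple predicate $\mathcal{H}$, whose interpretation is the stateful lifting of the chosen base lifting; consequently the whole argument is parametric in the lifting and covers both the union-bound and expectation readings of \rname{ADV-U} uniformly.

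The heart of the proof is a \textbf{Basic Lemma}: any closed $t:\sigma$ lies in $\cL_{\phi,\delta}(\sigma)$ whenever $\phi\in\safe{\eff{\sigma}}$. I would establish it by induction on the typing derivation, first \emph{strengthening} it to open terms in the standard way: for $\Xi;\Gamma,\Gamma'\vdash t:\sigma$ with $\phi\in\safe{\eff{\sigma}\setminus\Xi}$, where $\Gamma'$ collects exactly those variables whose latent effect is \emph{not} contained in $\eff{\sigma}$, any substitution sending the $\Gamma$-variables into their predicates (and the $\Gamma'$-variables to arbitrary well-typed terms) sends $t$ into $\cL_{\phi,\delta}(\sigma)$. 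The monadic cases (unit, bind, read, write, sample, fold) follow from the structural properties of the stateful lifting in Lemma~\ref{lem:gr-lift}; in particular its sequencing property adds the two error bounds, which is exactly the $k\delta$ bookkeeping needed for bind and fold. The abstraction and application cases are where the $\eff{\sigma}\subseteq\eff{\tau}$ split earns its keep, routing each bound variable into $\Gamma$ or $\Gamma'$ so that the induction never has to demand that an ignored argument preserves $\phi$.

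Given the Basic Lemma, \rname{ADV-U} follows quickly. Because $\cA$ is declared in $\Delta$ it denotes a \emph{closed} term of its second-order type, and by the definition of $\eff{-}$ the effect of that type reduces to $\Sigma$; thus the side-condition $P\in\safe{\Sigma}$ is precisely the hypothesis the Basic Lemma requires, placing $\cA$ in $\cL_{P,\delta}(\forall\alpha.(\sigma\to\mon{\alpha,1}{\tau})\to\mon{\Sigma\cup\alpha,k}{\tau'})$. The oracle's premise, the triple $\{P\}\,t\,\{\!\{P\}\!\}_\delta$, directly witnesses $\lambda x.t\in\cL_{P,\delta}(\sigma\to\mon{\Sigma',1}{\tau})$ --- here the Basic Lemma is deliberately \emph{not} invoked, since $P$ need not be safe for $\Sigma'$. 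Instantiating the quantifier at $\Sigma'$ and applying the arrow clause yields $\cA\,(\lambda x.t)\in\cL_{P,\delta}(\mon{\Sigma\cup\Sigma',k}{\tau'})$, which unfolds to exactly the conclusion, a triple preserving $P$ with error $k\delta$.

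For \rname{ADV-R} I would repeat the construction with a relation $\cR_{\phi,\delta}(\cdot)$ whose base-type clause demands \emph{equality} of the two sides (so both runs take the same branch), whose monadic clause uses the relational quadruple $\hquad{\sigma}{k\delta}{\phi}{t_1}{t_2}{(\vll,\vlr)\in\cR_{\phi,\delta}(\sigma)\wedge\phi}$, and whose safety side-condition is upgraded to $\rsafe{\Sigma}$ to also force reads at $\Sigma$-locations to agree across the two memories; the Basic Lemma and the final derivation are structurally identical. I expect the \textbf{main obstacle} to lie entirely in pinning down the logical predicate: the effect-containment split in the arrow clause and the decision to ignore the first monadic grading while scaling the error by $k$ are both subtle, and a naive variant --- e.g.\ quantifying over \emph{all} invariants safe for $\Sigma$ inside the monadic clause --- would impose obligations on oracle arguments that we cannot discharge. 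Checking that the one chosen definition simultaneously satisfies the Basic Lemma and is strong enough to derive the adversary rules is the real work.
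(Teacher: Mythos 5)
Your proposal matches the paper's own proof essentially step for step: the same logical predicate with the effect-containment split at arrow types and the $k\delta$-graded monadic clause, the same Basic Lemma strengthened to open terms via the auxiliary context $\Gamma'$, the same derivation of \rname{ADV-U} (adversary via the Basic Lemma, oracle via its premise), and the same relational generalization $\cR_{\phi,\delta}$ with equality at base types and $\rsafe{\Sigma}$ for \rname{ADV-R}. You even flag the same pitfall the paper discusses, namely that quantifying over all safe invariants in the monadic clause would impose undischargeable obligations on oracle arguments.
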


\section{Additional examples}

\subsection{Example: Accuracy for differentially private mechanisms}

Differential privacy~\cite{DR14} is a family of techniques focused on preventing
queries from databases from revealing private data about the entries in the database.
Concretely, we want to have plausible deniability that a concrete entry is in
the database.  This is usually achieved by making the queries through a
\emph{mechanism}, an algorithm that adds randomness to the output of the
queries before releasing them. Privacy comes with the tradeoff of accuracy;
the randomness that differentially private mechanisms add means that the result
of the query cannot be exact. Nonetheless, we can often give bounds about how
large the error can be.

In this example we will verify an accuracy bound for a differentially private
mechanism known as Sparse Vector (SV) algorithm. This algorithm is used to make $k$
numerical queries (i.e., queries whose output is a number) to a database, and
answering for how many of them the result is above some threshold $T$. It can
also be modelled in an online manner~\cite{BartheGGHS16}, in which an adversary
makes the queries one by one and observes the result (whether it is above or
below the threshold) before deciding which query to make next.

As in many differentially private mechanisms, the randomness in the SV
algorithm comes from a Laplace distribution, which has the following rule:

\[
\inferrule*[right =\sf LAP-UBL]
    { \Gamma \mid \Psi \vdash t \colon \RR}
  {\aumj{\Gamma}{\Psi}{P}
  {\lap{\epsilon}{t}}{\mon{\emptyset,0}{\RR}}{P\wedge |\vl - t| \leq \dfrac{1}{\epsilon} \log\dfrac{1}{\delta}}{\delta}}
\]

What this rule is saying is: (1) sampling does not change the state, and (2) with probability $1-\delta$,
the value sampled from $\lap{\epsilon}{t}$ will be within an interval of radius $\dfrac{1}{\epsilon} \log\dfrac{1}{\delta}$
centered at $t$. In other words, it tells us how accurate it is to replace $t$ by $\lap{\epsilon}{t}$.

We model the SV algorithm by having an oracle that provides access to the database.
The codes of the oracle and the main algorithm are given below:

\[\begin{array}{l} 	  \\
	\cO(q\colon \RR \to \RR):\\
	\quad x = {\sf evalQ}(q);\\
	\quad \mletA{y}{\lap{\epsilon/4}{x}}\\
	\quad \mletA{t'}{\rd{t}} \\
	\quad z = \casebool{y \geq t'}{\ttrue}{\ffalse};\\
	\quad a[r] := z; \\
	\quad b[r] := x; \\
	\quad {\sf inc}~r; \\
	\quad \unit{z};
	\end{array}
\qquad
\begin{array}{l} 	  \\
	{\bf mainSV}(T \colon \RR):\\
	\quad \mlet{x}{\lap{\epsilon/2}{T}}\\
	\quad t := x; \\
	\quad r := 0; \\
%	\quad {\sf init}(a); \\
%	\quad {\sf init}(b); \\
	\quad \cA(\cO);
	\end{array}
\]

The oracle runs the query $q$, adds Laplace noise to its result, checks whether it is above
the threshold $t$, and returns its result to the adversary. For the verification process,
we add some ghost variables: $r$ records the round number, $b$ is an array containing the
noiseless result of the queries, and $a$ records whether the noisy queries are above the threshold.

In the main procedure, we simply initialize the $t$ variable by adding noise to the threshold
$T$, set the auxiliary variables to $0$ and then instantiate the adversary with
the oracle $\cO$. We want to show the following accuracy bound
\[ \vdash \{ \top \} {\bf mainSV}(T) \colon \mon{\{a,b,r,t\}, 1}{\RR} \{\!\{ \forall i \leq k. \Phi(i) \}\!\}_{\beta} \]
where $\Phi$ is defined as:
\[ \Phi(i) \triangleq (\st[a[i]]=\ttrue \Rightarrow \st[b[i]] \geq T - \dfrac{6}{\epsilon}\log\dfrac{k+1}{\beta})
	\wedge (\st[a[i]]=\ffalse \Rightarrow \st[b[i]] \leq T + \dfrac{6}{\epsilon}\log\dfrac{k+1}{\beta})
\]
For this, we will use the adversary rule, with the invariant:
\[ \Phi'(j) = |T - \st[t]| \leq \dfrac{2}{\epsilon}\log\dfrac{k+1}{\beta} \wedge \forall i \leq j. \Phi(i) \]\
The first part of the invariant states how close the noisy threshold is to the original threshold $T$, while
the second part states an accuracy bound on the first $j$ queries.
The key part of the proof is showing that $\cO$ preserves the invariant for any
query, that is:
\[ \vdash \{ \Phi'(r) \} \cO(q) \{\!\{ \Phi'(r) \}\!\}_{\beta/(k+1)} \]
We will ellaborate further on this part of the proof. We go backwards from the end of the procedure,
and we will use the notation $\lfloor \cO(q) \rfloor_n$ to denote the program formed by the first
$n$ instructions of $\cO(q)$. Since the last instruction is $r:=r+1$,
we can apply the \rname{WRITE} rule, and we need to show
\[ \vdash \{ \Phi'(r) \} \lfloor \cO(q) \rfloor_5 \{\!\{|T - \st[t]| \leq \dfrac{2}{\epsilon}\log\dfrac{k+1}{\beta} \wedge \forall i \leq r+1. \Phi(i)\}\!\}_{\beta/(k+1)}  \]
Since our precondition contains already $\forall i \leq r. \Phi(i)$ and $t$ does not change, the only interesting part
is proving $\Phi(r+1)$ (the other cases can be proven by using \rname{AND-POST-U}). By applying \rname{WRITE} again and then \rname{CASE}, it is sufficient to show:
\[ \vdash \{ \Phi'(r) \} \lfloor \cO(q) \rfloor_2 \{\!\{  (\vl \geq \st[t] \wedge x \geq T - \dfrac{6}{\epsilon}\log\dfrac{k+1}{\beta})
	\vee (\vl < \st[t] \wedge x \leq T + \dfrac{6}{\epsilon}\log\dfrac{k+1}{\beta}) \}\!\}_{\beta/(k+1)}  \]
Now we can apply the \rname{LAP-UBL} rule setting $\delta$ to $\beta/(k+1)$. By computation, we can see that
\[ |\vl - x| \leq \dfrac{4}{\epsilon} \log \dfrac{k+1}{\beta} \wedge
\vl \geq \st[t] \wedge |T - \st[t]| \leq \dfrac{2}{\epsilon}\log\dfrac{k+1}{\beta} \Rightarrow x \geq T - \dfrac{6}{\epsilon}\log\dfrac{k+1}{\beta} \]
and
\[ |\vl - x| \leq \dfrac{4}{\epsilon} \log \dfrac{k+1}{\beta} \wedge
\vl <\st[t] \wedge |T - \st[t]| \leq \dfrac{2}{\epsilon}\log\dfrac{k+1}{\beta} \Rightarrow x \leq T + \dfrac{6}{\epsilon}\log\dfrac{k+1}{\beta} \]
and this completes the proof.

\section{Quasi-Borel Spaces}

Quasi-borel spaces are defined as follows:
\begin{definition}
  A quasi-Borel space is a pair $(X, M_X)$ of a set $X$ and a set
  $M_X \subseteq \RR \to X$ satisfying the following closure properties:
  \begin{enumerate}
  \item If $\alpha : \RR \to X$ is constant, then $\alpha \in M_X$.
  \item If $\alpha \in M_X$ and $f : \RR \to \RR$ is (Borel)
    measurable, then $\alpha \circ f \in M_X$.
  \item
    
    % If $\{ S_i \}_{i \in \nat}$ is a disjoint covering of~$\RR$ with
    % Borel sets, and $\alpha$ is piecewise
    % definable from $\{ \alpha_i \}_{i \in \nat} \subseteq M_X$,
    % taking
    % value $\alpha_i(x)$
    % for $x \in S_i$, then $\alpha \in M_$.
    If $S \colon \RR \to \nat$ (Borel) measurable and
    $\{ \alpha_i \}_{i \in \nat} \subseteq M_X$ then
    $\lambda r. \alpha_{S(r)}(r) \in M_X$.
  \end{enumerate}
  A morphism between quasi-Borel spaces $(X, M_X)$ and $(Y, M_Y)$ is a
  function $f : X \to Y$ such that for every $\alpha \in M_X$,
  $f \circ \alpha \in M_Y$.
  Quasi-Borel spaces and morphisms between them form a category
  $\mathbf{QBS}$.
\end{definition}
\begin{lemma}
  For each standard Borel space $A$, $(A,\Meas(\RR,A))$ is a QBS.
\end{lemma}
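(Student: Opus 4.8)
The plan is to verify directly the three closure conditions in the definition of quasi-Borel space for the set $M_A \defeq \Meas(\RR, A)$ of Borel measurable maps $\RR \to A$. Throughout I equip $\RR$ with its Borel $\sigma$-algebra and $A$ with the $\sigma$-algebra of its standard Borel structure; in fact the standard-Borel hypothesis plays no essential role here, and the argument works verbatim for any measurable space $A$. Conditions (1) and (2) are immediate. For (1), a constant map $\alpha : \RR \to A$ is measurable because the preimage of any measurable $U \subseteq A$ is either $\emptyset$ or $\RR$. For (2), if $\alpha \in M_A$ and $f : \RR \to \RR$ is Borel measurable, then $\alpha \circ f$ is measurable as a composite of measurable maps, hence lies in $M_A$.

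The only condition requiring an argument is (3). Given a Borel measurable $S : \RR \to \nat$ and a countable family $\{\alpha_i\}_{i \in \nat} \subseteq M_A$, set $\beta(r) \defeq \alpha_{S(r)}(r)$; I must show $\beta \in M_A$. The key step is, for a measurable $U \subseteq A$, to decompose the preimage as
\[
  \beta^{-1}(U) = \bigcup_{i \in \nat} \bigl( S^{-1}(\{i\}) \cap \alpha_i^{-1}(U) \bigr).
\]
This identity holds because $r$ lies in the left-hand side exactly when, putting $i = S(r)$, we have both $r \in S^{-1}(\{i\})$ and $\alpha_i(r) \in U$. Each $S^{-1}(\{i\})$ is measurable since $S$ is measurable, and each $\alpha_i^{-1}(U)$ is measurable since $\alpha_i \in M_A$; thus every set in the union is measurable, and a countable union of measurable sets is measurable. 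Hence $\beta^{-1}(U)$ is measurable for every measurable $U$, so $\beta \in M_A$ and (3) holds.

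I do not anticipate any genuine obstacle, since the statement reduces to three standard facts: constant maps are measurable, composites of measurable maps are measurable, and preimages commute with countable unions. The only step deserving care is the decomposition in (3), where the restriction of the family to a countable index set $\nat$ is precisely what lets me write $\beta^{-1}(U)$ as a countable --- and therefore measurable --- union; this also explains why the gluing condition of quasi-Borel spaces is stated for $\nat$-indexed families rather than arbitrary ones.
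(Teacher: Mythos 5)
Your proof is correct. The paper itself states this lemma without proof (it is a standard fact imported from the quasi-Borel space literature), and your verification is exactly the expected argument: conditions (1) and (2) follow from measurability of constants and of composites, and condition (3) follows from the decomposition $\beta^{-1}(U) = \bigcup_{i \in \nat} \bigl( S^{-1}(\{i\}) \cap \alpha_i^{-1}(U) \bigr)$ into a countable union of Borel sets. Your side remark is also accurate: the standard-Borel hypothesis is not needed for this direction of the embedding --- it matters only for stronger statements about the functor $A \mapsto (A,\Meas(\RR,A))$ (e.g.\ fullness, or compatibility of the probability monads), not for the bare fact that the structure is a QBS.
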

There is a strong monad for probabilistic choices. We first define
the concept of probability measures on QBSs.
\begin{definition}
  Let $(X, M_X)$ be a quasi-Borel space. A (probability) measure over
  $(X, M_X)$ is a tuple $(A,\Sigma_A,\alpha,\mu)$ where
  $(A, \Sigma_A)$ is a standard Borel space, $\alpha \in \QBS(Z, X)$
  is a morphism and $\mu$ is a (probability) measure over
  $(A, \Sigma_A)$. We can define an equivalence relation between
  (probability) measures when they define the same integration
  operator:
  \begin{align*}
    (A,\Sigma_A,\alpha,\mu) \equiv (A',\alpha',\mu') \iff
    \forall f \in \QBS(X, \RR), 
    \int_A (f\circ \alpha) d\mu = \int_{A'} (f \circ \alpha') d\mu'
  \end{align*}
\end{definition}
\begin{definition}
  The {\em probability monad} on $\QBS$ is defined by
  \begin{align*}
    \mprob(X) &\triangleq \{ (A,\alpha,\mu)\ \text{probability measure over }X \}/\equiv  \\
    M_{\mprob(X)} &\triangleq \{ \lambda r. [D_r, \alpha(r,-), \mu_r] \mid \mu\ \sigma\text{-finite}, 
                    D \subseteq \RR \times A\ \text{measurable},
                    \alpha \in \QBS(D,X)\}
  \end{align*}
  where $D_r = \{ \omega \mid (r,\omega) \in D \}$.  The unit and
  Kleisli lifting are also defined by
  \begin{align*}
    \eta_X(x) & \triangleq (\{ \ast \}, \lambda y. x, \delta)
                \quad\text{where}\quad \delta(\{\ast\})=1.
    \\
    f^\#(A,\alpha,\mu) & \triangleq [D, \beta, (\mu \otimes \mu')|_D]
                         \quad\text{where} \quad (f \in \QBS(X,\mprob(Y)))
  \end{align*}
  where $f(\alpha(r)) = [D_r, \beta(r,-), \mu']$, and $\otimes$
  computes the product measure.
\end{definition}

\subsection{The Interpretation of Non-Monadic Part of the Language}
\label{sec:interpb}

First some preparations.  We write
$d_{A,X}:(A\cdot 1)\times X\to A\cdot X$ for the canonical
isomorphism. This exists because $\QBS$ is a bi-CCC.  Let
$\qbszero^0\triangleq\iota^{0,1}_0:1\to 0\cdot 1$ and
$\qbssucc^K\triangleq[\iota^{K+1,1}_{i+1}]_{i\in K}:K\cdot 1\to (K+1)\cdot 1$.
The interpretation of the non-monadic part of the language is given as
follows. Note that this is an induction on the {\em derivation} of
$\Gamma\vdash t:\tau$; the last rule is the case of the subtyping
rule.
\begin{align*}
  % Variable
  \sem{\G \vdash x \colon \tau}
  &\triangleq \pi^\G_x\qquad\qquad (x\in\dom(\G)) \\
  % Unit
  \sem{\G \vdash \star \colon \tunit}
  &\triangleq {!_{\sem\G}} \\
  % Zero
  \sem{\G\vdash 0\colon\NN[0]}
  &\triangleq \qbszero^0\circ {!_{\sem\G}} \\
  % Successor
  \sem{\G\vdash S~t\colon\NN[K+1]}
  &\triangleq \qbssucc^{K}\circ\sem{\G\vdash t\colon\NN[K]} \\
  % Abstraction
  \sem{\G \vdash \lambda x. t \colon \sigma \to \tau}
  &\triangleq
    \lambda (\sem{\G, x:\sigma \vdash t \colon \tau}\circ \mix\G x\tau) \\
    % Application
  \sem{\G \vdash t~u \colon \tau}
  &\triangleq
    \mathit{\ev}\circ \langle \sem{\G \vdash t \colon \sigma \to \tau},
    \sem{\G \vdash u \colon \tau} \rangle \\
    % Pairing
  \sem{\G \vdash \langle t, u \rangle \colon \tau \times \sigma}
  &\triangleq
    \langle \sem{\G \vdash t \colon \tau}, \sem{\G\vdash u\colon\sigma}\rangle \\
    % Ifzenelse
  \sem{\G\vdash\casebool{b}{t_1}{t_2}:\tau}
  &\triangleq [\sem{\G\vdash t_1:\tau},\sem{\G\vdash t_2:\tau}]\circ d_{\sem\G,1}\circ\langle\sem{\G\vdash b:\tbool} ,id_{\sem\G}\rangle \\
  % Projection
  \sem{\G \vdash \pi_i(t) \colon \tau_i}
  &\triangleq 
    \pi_i\circ \sem{\G \vdash t \colon \tau_1 \times \tau_2}\qquad\qquad(i = 1,2) \\
  % forall types
  \sem{\G\vdash t:\forall\alpha.\tau} &= \sem{\G\vdash t:\tau} \\
  \sem{\G\vdash t:\tau[\Sigma/\alpha]} &= \sem{\G\vdash t:\forall\alpha.\tau} \\
    % Subtype
  \sem{\G\vdash t:\tau} &= \coe{\tau'}{\tau}\circ\sem{\G\vdash t:\tau'}\qquad\qquad(\tau'\preceq\tau)
\end{align*}
The rules Adv and Adv-Inst are not interpreted, as adversaries will
be instantiated by closed terms. \sk{Is this correct?}

\subsection{Logic }

Semantics of the logic is defined inductively by:
\begin{align*}
  \sem{\holwf{\G}{\top}}
  &\triangleq
    \top \\
  \sem{\holwf{\G}{\bot}}
  &\triangleq
    \bot \\
    \sem{\holwf{\G}{P(t_1,\dots,t_k)}}
  &\triangleq 
    (\sem{\G\vdash t_1\colon \sigma_1}\times\cdots\times \sem{\G\vdash t_k\colon\sigma_k})^*\sem{P}\\
  \sem{\holwf{\G}{\phi \wedge \psi}}
  &\triangleq
    \sem{\holwf{\G}{\phi}} \sqcap \sem{\holwf{\G}{\psi}} \\
    \sem{\holwf{\G}{\phi \vee \psi}}
  &\triangleq
    \sem{\holwf{\G}{\phi}} \sqcup \sem{\holwf{\G}{\psi}} \\
  \sem{\holwf{\G}{\phi \Rightarrow \psi}}
  &\triangleq 
    \sem{\holwf{\G}{\phi}} \Rrightarrow \sem{\holwf{\G}{\psi}} \\
    \sem{\holwf{\G}{\forall (x:\sigma). \phi}}
  &\triangleq 
    \forall_{\sem\G,\sem\sigma}(m^*_{\G,x:\sigma}\sem{\holwf{\G,x:\sigma}{\phi}})
\end{align*}
where the interpretation $\sem{P}$ for each base predicate
$P : \sigma_1\times\cdots\times\sigma_k$
is a chosen element in
$\UPred\S{\sem{\sigma_1\times\cdots\times\sigma_k}}$.

\subsection{Graded Lifting for Union Bound in Example \ref{ex:lifting-UB}}
We show that  $\glubn$ is indeeed a $\S$-valued strong $([0,\infty],\leq,+,0)$-graded lifting $\glubn$ of $\mprobn$.

  (1) We show $P(y) = \top \implies \glub[X] 0 P(\eta^\mprob_X(y)) = \top$.
  For all $f \in \QBS(X,D\{0,1\})$ such that $P \sqsubseteq |f|$, we have $f(y) = 1$,
  hence $\Pr_{x\sim\eta^\mprob(y)}[f(x)=1] = 1$.
  
  (2) Let $\Xi \in \mprob \mprob X$. We show 
  $\glub[\mprob X] {\delta_1} {\glub[X] {\delta_2} P}{(\Xi)} = \top \implies 
  \glub[X] {\delta_1 + \delta_2} P {(\mu^\mprob (\Xi))} = \top$.
  
   For all $f \in \QBS(X,D\{0,1\})$ such that $P \sqsubseteq |f|$,
   We have $\glub[X] {\delta_2} P \sqsubseteq \glub[X] {\delta_2} {|f|}$.
  \begin{align*}
      \top &=
      \glub[\mprob X] {\delta_1} {\glub[X] {\delta_2} P}{(\Xi)}\\
      &\sqsubseteq 
      \glub[\mprob X] {\delta_1} {\glub[X] {\delta_2} {|f|}}{(\Xi)}\\
      &=
      \glub[\mprob X] {\delta_1} {\lambda \nu.~\Pr_{x\sim\nu}[f(x)=1] \geq 1-\delta_2}{(\Xi)}\\
      &=
      \forall g \in \QBS(\mprob(X),D\{0,1\}),(\lambda \nu.~\Pr_{x\sim\nu}[f(x)=1] \geq 1-\delta_2) \sqsubseteq |f|~.~
      \Pr_{\nu\sim\Xi}[g(\nu)=1] \geq 1-\delta_1\\
      &=
      \Pr_{\nu\sim\Xi}[\Pr_{x\sim\nu}[f(x)=1] \geq 1-\delta_2] \geq 1-\delta_1\\
      &\qquad \{\text{ Markov's inequality (it holds for probability measures on quasi-Borel spaces). }\}\\
      &\sqsubseteq
		\begin{cases}
      \EE_{\nu\sim\Xi}[\Pr_{x\sim\nu}[f(x)=1]] \geq (1-\delta_1)(1-\delta_2) & \delta_1 < 1 \land \delta_2 < 1\\
      1 \geq 1-\delta_1 & \text{otherwise}
      \end{cases}\\
      &\sqsubseteq
      \EE_{\nu\sim\Xi}[\Pr_{x\sim\nu}[f(x)=1]] \geq 1-\delta_1-\delta_2\\
      &=
      \Pr_{x\sim \mu^\mprob_X(\Xi)}[f(x)=1] \geq 1-\delta_1-\delta_2.
  \end{align*}
  (3) Let $x \in X$ and $\mu \in \mprob Y$.
  If $P(x) = \top$ and $\glub[Y] \delta Q(\mu) = \top$ then
  $\glub[Y] \delta {P\dot\times Q } {(\theta^\mprob(x,\mu))}$.
  We have $\forall f \in \QBS(Y,D\{0,1\}),Q \sqsubseteq |f|~.~\Pr_{x\sim\nu}[f(x)=1] \geq 1-\delta$
  
  Here, for any $g \in \QBS(X \times Y,D\{0,1\})$ such that $P \dot\times Q \sqsubseteq |g|$,
  $g(x,-) \colon Y \to D\{0,1\}$ satisfy $Q \sqsubseteq g(x,-)$.
  This is a specific property of the case of $\Omega = \S$.
  
  Hence,
  \begin{align*}
  \top &= \glub[Y] \delta Q(\mu)\\
  & = \forall f \in \QBS(Y,D\{0,1\}),Q \sqsubseteq |f|~.~\Pr_{x \sim \mu}[f(x)=1] \geq 1-\delta\\
  &\sqsubseteq
  \forall g \in \QBS(X \times Y,D\{0,1\}),P \dot\times Q \sqsubseteq |g|~.~
  \Pr_{y'\sim\mu}[g(x,y')=1] \geq 1-\delta\\
  & \qquad \{ \text{Fubini theorem (coherence property of strength $\theta^\mprob$)} \}\\
  &=
  \forall g \in \QBS(X \times Y,D\{0,1\}),P \dot\times Q \sqsubseteq |g|~.~
  \Pr_{(x',y') \sim (\theta^\mprob(x,\mu)}[g(x,y)=1] \geq 1-\delta\\
  &= \glub[Y] \delta {P\dot\times Q } {(\theta^\mprob(x,\mu))}
  \end{align*}
  (4) It is obvious that 
  $\delta \leq \delta' \implies \glub[X]\delta P(\nu) \sqsubseteq \glub[X]{\delta'} P(\nu)$
  holds by definition of $\glubn$.

\subsection{Graded Lifting for Differential privacy in Example \ref{ex:DP-lifting}}
We show that $\dot\mprob^{\mathrm{dp}}$ is a $\S$-valued strong $([0,\infty],+,0,\leq)^2$-graded relational lifting of $\mprob$.

(1) 
Assume $(x,y) \models P$.
For all $(f,g) \colon P \dot\to S(\epsilon',\delta')$, we have 
\[
(f^\sharp \eta^\mprob_X(x), g^\sharp \eta^\mprob_Y(y)) = (f(x),g(y)) \models S(\epsilon',\delta').
\]
This implies that $\dot\mprob_{X,Y}^{\mathrm{dp}}(0,0)(P)(\eta^\mprob_X(x),\eta^\mprob_Y(x))$ holds.

(2) 
Consider
$(\nu_1,\nu_2) \models \dot\mprob_{X,Y}^{\mathrm{dp}}(\epsilon,\delta)(P) $,
$(f,g) \colon P \dot\to \dot\mprob_{X',Y'}^{\mathrm{dp}}(\epsilon',\delta')(Q)$ and
$(k,l) \colon Q \dot\to S(\epsilon'',\delta'')$.
Then,
$(k^\sharp f(x), l^\sharp g(y)) \models S(\epsilon'+\epsilon'',\delta'+\delta'')$ holds for all
$(x,y) \models P$.
Hence, $(k^\sharp f, l^\sharp g) \colon P \dot\to S(\epsilon'+\epsilon'',\delta'+\delta'')$.
Hence, $(k^\sharp f^\sharp, l^\sharp g^\sharp) \colon \dot\mprob_{X,Y}^{\mathrm{dp}}(\epsilon,\delta)(P) \dot\to S(\epsilon+\epsilon'+\epsilon'',\delta+\delta'+\delta'')$.
Hence  $ (k^\sharp f^\sharp \nu_1, l^\sharp g^\sharp \nu_2) \models S(\epsilon+\epsilon'+\epsilon'',\delta+\delta'+\delta'')$.
Since $(k,l)$ is arbitrary, we obtain $(f^\sharp \nu_1,g^\sharp \nu_2) \models \dot\mprob_{X',Y'}^{\mathrm{dp}}(\epsilon'+\epsilon'',\delta'+\delta'')(Q)$.
Since $(\nu_1,\nu_2)$ is arbitrary we conclude $(f^\sharp,g^\sharp) \colon \dot\mprob_{X,Y}^{\mathrm{dp}}(\epsilon,\delta)(P) \dot\to \dot\mprob_{X',Y'}^{\mathrm{dp}}(\epsilon'+\epsilon'',\delta'+\delta'')(Q)$.

We have $(\mathrm{id}_{\mprob X},\mathrm{id}_{\mprob Y}) \colon \dot\mprob_{X,Y}^{\mathrm{dp}}(\epsilon',\delta')(P) \dot\to \dot\mprob_{X,Y}^{\mathrm{dp}}(\epsilon',\delta')(P)$.
Then we have 
\[
(\mu^\mprob_X,\mu^\mprob_Y)=(\mathrm{id}_{\mprob X}^\sharp,\mathrm{id}_{\mprob Y}^\sharp)
\colon \dot\mprob_{X,Y}^{\mathrm{dp}}(\epsilon,\delta)(\dot\mprob_{X,Y}^{\mathrm{dp}}(\epsilon',\delta')(P)) \dot\to \dot\mprob_{X,Y}^{\mathrm{dp}}(\epsilon'+\epsilon'',\delta'+\delta'')(P).
\]

(3)
Consider
$(x,y)\models P$ and 
$(\nu_1,\nu_2) \models \dot\mprob_{X',Y'}^{\mathrm{dp}}(\epsilon,\delta)(Q)$.

For all $(f,g) \colon P\dot\times Q \dot\to S(\epsilon',\delta')$,
we have $(f(x,-),g(y,-)) \colon Q \dot\to S(\epsilon',\delta')$,
(it is specific in the case of $\Omega = \S$).

Then,
\[
(f(x,-)^\sharp(\nu_1),g(y,-)^\sharp(\nu_2)) \models S(\epsilon+\epsilon',\delta+\delta').
\]
Since
$f(x,-)^\sharp(\nu_1) = f^\sharp \theta^\mprob_{X,X'}(x,\nu_1)$ and 
$g(y,-)^\sharp(\nu_2) = g^\sharp \theta^\mprob_{Y,Y'}(x,\nu_2)$
\tetsuya{Moggi's uniqueness theorem of strength},
we have 
\[
(\theta^\mprob_{X,X'}(x,\nu_1),\theta^\mprob_{Y,Y'}(x,\nu_2)) \models \dot\mprob_{X\times X',Y\times Y'}^{\mathrm{dp}}(\epsilon,\delta)(P\dot \times Q).
\]

(4)
It is obvious that if $\epsilon \leq \epsilon'$ and $\delta \leq \delta'$ then $\dot\mprob_{X,Y}^{\mathrm{dp}}(\epsilon,\delta)(P)(\nu_1,\nu_2) \sqsubseteq \dot\mprob_{X,Y}^{\mathrm{dp}}(\epsilon',\delta')(P)(\nu_1,\nu_2)$.

\section{Proofs of Section~\ref{sec:semantics}}

We begin by formally stating the soundness of HOL:

\begin{theorem}[Soundness of HOL]\label{thm:hol-soundness}
  Let $\jhol{\G}{\Psi}{\phi}$ be a derivable HOL judgment. Then it
  is valid, i.e.
  $\sem{\holwf{\G}{\bigwedge \Psi}} \sqsubseteq
  \sem{\holwf{\G}{\phi}}$ holds in $\UPred\S{\sem\G}$.
\end{theorem}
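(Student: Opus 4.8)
The plan is to proceed by induction on the derivation of $\jhol{\G}{\Psi}{\phi}$, showing that each inference rule corresponds to a valid inequality in $\UPred\S{\sem\G}$. Abbreviating the meet of the hypotheses' interpretations, the goal in every case is $\sem{\holwf{\G}{\bigwedge\Psi}} \sqsubseteq \sem{\holwf{\G}{\phi}}$. The key structural fact, already recorded in Section~\ref{sec:semantics}, is that $\UPred\S{\sem\G}$ is a complete Heyting algebra under the pointwise order, so finite and arbitrary meets/joins and a pseudo-complement $\Rrightarrow$ satisfying the adjunction $a \sqcap b \sqsubseteq c \iff a \sqsubseteq (b \Rrightarrow c)$ are all available. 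Each interpretation clause then translates a syntactic rule into the matching algebraic fact, and the induction essentially amounts to checking these correspondences.

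First I would dispatch the propositional fragment of Figure~\ref{fig:hol}. The axiom rule reduces to $\sem{\holwf{\G}{\bigwedge\Psi}} \sqcap \sem{\holwf{\G}{\phi}} \sqsubseteq \sem{\holwf{\G}{\phi}}$, a projection of a meet. The rules $\land_I$ and $\land_E$ are, respectively, the universal property of $\sqcap$ and meet-projection. The rule $\To_I$ is exactly the transpose under the Heyting adjunction of the inductive hypothesis $\sem{\holwf{\G}{\bigwedge\Psi}} \sqcap \sem{\holwf{\G}{\phi_1}} \sqsubseteq \sem{\holwf{\G}{\phi_2}}$, and $\To_E$ combines the two hypotheses via modus ponens in the algebra, $(a \Rrightarrow b) \sqcap a \sqsubseteq b$. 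Negation and the constants are handled as instances, using $\neg\phi = \phi \To \bot$, $\sem\top = \top$ and $\sem\bot = \bot$.

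Next I would treat the quantifier rules. Here $\forall$ is interpreted via the operator $\forall_{\sem\G,\sem\sigma}$, which is the right adjoint to reindexing along the projection $\sem\G\times\sem\sigma\to\sem\G$ (composed with the iso $m_{\G,x:\sigma}$); the dual $\exists$ uses the corresponding left adjoint. Introduction and elimination for $\forall$ then follow from the unit and counit of this adjunction, transposing the inductive hypothesis in the extended context $\G,x:\sigma$ back to $\G$, using that hypotheses in $\Psi$ not mentioning $x$ are weakened by $m^*_{\G,x:\sigma}$. The instantiation rule additionally needs that interpreting a substituted formula coincides with pulling back along the substitution morphism, $\sem{\G\vdash\phi[t/x]} = (\sub{\G\vdash t:\tau}{x})^*\sem{\G,x:\tau\vdash\phi}$, which is precisely the substitution property already established in Section~\ref{sec:semantics}; I would invoke it, together with the Beck--Chevalley compatibility of $\forall$ with pullback, rather than reprove it. The base cases for atomic predicates $R(t_1,\dots,t_k)$ follow directly from their interpretation as a chosen predicate pulled back along $\sem{t_1}\times\cdots\times\sem{t_k}$.

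The one genuinely non-routine family is the internalized-triple rules $\mathcal{H}_I$ and $\mathcal{H}_E$ of Figure~\ref{fig:hol-triples}, whose soundness is not a fact about the Heyting structure alone but about the stateful lifting $\lpstn$ interpreting the predicate former $\htriple{\sigma}{\delta}{P}{t}{Q}$. For these I would unfold the interpretation of $\htriple{\sigma}{\delta}{P}{t}{Q}$ to the membership $\sem{t}\models\lpst{\delta}{\sem{P},\sem{Q}}$ and appeal to Lemma~\ref{lem:gr-lift}: $\mathcal{H}_I$ is discharged by its first clause (the unit $\eta^\mpstn$ lands in $\lpst{0}{Q,IP\dtimes Q}$), while $\mathcal{H}_E$ is discharged by the second clause, since Kleisli lifting composes gradings additively to $\delta+\delta'$. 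I expect the main effort to lie here: matching the syntactic pre/postconditions and gradings against the inequalities of Lemma~\ref{lem:gr-lift} requires the same bookkeeping that underlies Proposition~\ref{prop:ho-ubl-sound}, and care is needed because the predicate former must be parametrized by the full type, effect, grading and pre/post data (as flagged in the well-formedness remark on $\htriple{\sigma}{\delta}{P}{t}{Q}$) for the membership statement to even typecheck.
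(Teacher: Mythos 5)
Your proof is correct, and it in fact covers more of the system than the paper's own proof does. Both arguments share the same skeleton---induction on the derivation, checking each rule against the order structure of $\UPred\S{\sem\G}$---but they diverge in two ways. First, the paper argues concretely and classically: it treats $\S$-valued predicates as subsets of $|\sem\G|$, verifies only \rname{Ax}, $\To_I$ and $\To_E$ by writing the implication as $(|\sem\G|\setminus\semp{\G\vdash\phi_1})\cup\semp{\G\vdash\phi_2}$ and manipulating complements, unions and intersections, and dismisses the remaining rules as ``almost obvious''. You instead reason abstractly from the complete Heyting algebra structure (the adjunction $a\sqcap b\sqsubseteq c\iff a\sqsubseteq b\Rrightarrow c$ and its counit $(a\Rrightarrow b)\sqcap a\sqsubseteq b$); the paper's set-theoretic identities are exactly what these laws become in the Boolean powerset case, so the two are interchangeable here, but your version would survive verbatim if $\S$ were replaced by an arbitrary complete Heyting algebra $\bO$, whereas the complement-based reasoning would not. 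Second, you treat the rules the paper omits entirely: the quantifier rules, via the adjoint $\forall_{\sem\G,\sem\sigma}$ to reindexing together with the substitution property (for $\forall$-elimination the counit plus functoriality of pullback along $\sub{\G\vdash t:\tau}{x}$ already suffices, so Beck--Chevalley is slightly more than you need), and the internalized-triple rules $\mathcal{H}_I$/$\mathcal{H}_E$ of Figure~\ref{fig:hol-triples} via the first two clauses of Lemma~\ref{lem:gr-lift}. Your identification of the triple rules as the genuinely non-routine cases is apt: they are part of HOL as the paper defines it, they cannot be discharged by Heyting-algebra reasoning alone, and the paper only proves the corresponding facts later, for the monadic judgments, inside Proposition~\ref{prop:ho-ubl-sound}. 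So where the paper's proof is a sketch of the propositional core, yours is the complete argument for the theorem as stated.
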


\begin{proof}
The proof is done by the induction on the derivation tree of $\G\mid\Psi \vdash \phi$.
It is almost obvious.
We see the rules {Ax}, $\Rightarrow_E$ and $\Rightarrow_I$.

(Ax)
A judgment of the form $\G\mid \Psi,\phi \vdash \phi$ is always valid:
\[
\semp{\G\vdash \bigwedge (\Psi, \phi) } = 
\semp{\G\vdash (\bigwedge \Psi) \land \phi}
=
\semp{\G\vdash \bigwedge \Psi} 
\cap
\semp{\G\vdash \phi}
\subseteq \semp{\G\vdash \phi}.
\]

($\Rightarrow_I$)
Suppose that the judgment $\jhol{\G}{\Psi, \phi_1}{\phi_2}$ is derivable.
By induction hypoithesis, it is valid. 
Then we have $ \semp{\G\vdash\bigwedge \Psi} \cap \semp{\G\vdash\phi_1} \subseteq \semp{\G\vdash \phi_2}$.
This implies
\[ \semp{\G\vdash\bigwedge \Psi} \subseteq 
(|\sem\G| \setminus \semp{\G\vdash\phi_1})  \cup \semp{\G\vdash \phi_2}
= \semp{\G\vdash \phi_1 \To \phi_2}.
\] 

($\Rightarrow_E$)
Suppose that
the judgments $\jhol{\G}{\Psi}{\phi_1 \To \phi_2}$ and $\jhol{\G}{\Psi}{\phi_1}$ are derivable. By induction hypoithesis, they are valid. 
We have
$\semp{\G\vdash\bigwedge \Psi} \subseteq \semp{\G\vdash \phi_1 \To \phi_2} = 
|\sem\G| \setminus \semp{\G\vdash \phi_1} \cup \semp{\G\vdash \phi_2}
$ and 
$\semp{\G\vdash\bigwedge \Psi} \subseteq \semp{\G\vdash \phi_1}$.
We have 
\[
\semp{\G\vdash\bigwedge \Psi} \subseteq
(|\sem\G| \setminus \semp{\G\vdash \phi_1} \cup \semp{\G\vdash \phi_2}) \cap 
\semp{\G\vdash \phi_1}
\subseteq \semp{\G\vdash \phi_2}. 
\]
\end{proof}

A judgment in UHOL can
be seen a pair of a typing judgment and a logical judgment that contains an
extra distinguished variable $\res$ referring to the typed term. With this in mind,
we show:

\begin{theorem}[Soundness of UHOL]
    \label{thm:uhol-fib-sound}
    Let $\juhol{\G}{\Psi}{t}{\sigma}{\phi}$ be a derivable UHOL
    judgment. Then, for any $\g\in |\sem\G|$,
    $ \g\models\sem{\holwf{\G}{\textstyle\bigwedge \Psi}}$ implies
    $\sem{\G\vdash t:\sigma}(\g) \models
    \sem{\holwf{\G, \res : \sigma}{\phi}}_\gamma$.
\end{theorem}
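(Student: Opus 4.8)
The plan is to reduce the semantic soundness of UHOL to the semantic soundness of HOL (Theorem~\ref{thm:hol-soundness}), exploiting two facts already available above: the relative completeness of the syntax-directed UHOL rules, which gives the equivalence $\juhol{\G}{\Psi}{t}{\sigma}{\phi} \Leftrightarrow \jhol{\G}{\Psi}{\phi\subst{\res}{t}}$, and the interpretation of substitution as an inverse image, $\sem{\G\vdash\phi\subst{\res}{t}} = (\sub{\G\vdash t:\sigma}{\res})^*\,\sem{\G,\res:\sigma\vdash\phi}$.

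The first step I would carry out is a purely semantic identification: the conclusion of the theorem is exactly the validity of the HOL judgment $\jhol{\G}{\Psi}{\phi\subst{\res}{t}}$. Recalling that $\sub{\G\vdash t:\sigma}{\res} = \mix{\G}{\res}{\sigma}\circ\langle id_{\sem\G}, \sem{\G\vdash t:\sigma}\rangle$, its underlying function sends $\gamma$ to $|\mix{\G}{\res}{\sigma}|(\gamma,\sem{\G\vdash t:\sigma}(\gamma))$, which is precisely the point at which the partially applied predicate $\sem{\holwf{\G,\res:\sigma}{\phi}}_\gamma$ is evaluated. Unfolding the inverse image in the substitution lemma therefore yields, for every $\gamma\in|\sem\G|$, the equivalence $\gamma\models\sem{\G\vdash\phi\subst{\res}{t}}$ iff $\sem{\G\vdash t:\sigma}(\gamma)\models\sem{\holwf{\G,\res:\sigma}{\phi}}_\gamma$.

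With this identification the argument is then immediate. From the derivability of $\juhol{\G}{\Psi}{t}{\sigma}{\phi}$ and the relative completeness equivalence, the HOL judgment $\jhol{\G}{\Psi}{\phi\subst{\res}{t}}$ is derivable; by Theorem~\ref{thm:hol-soundness} it is valid, i.e.\ $\sem{\holwf{\G}{\bigwedge\Psi}}\sqsubseteq\sem{\G\vdash\phi\subst{\res}{t}}$ in $\UPred\S{\sem\G}$. Hence any $\gamma$ with $\gamma\models\sem{\holwf{\G}{\bigwedge\Psi}}$ satisfies $\gamma\models\sem{\G\vdash\phi\subst{\res}{t}}$, which by the previous step is the desired $\sem{\G\vdash t:\sigma}(\gamma)\models\sem{\holwf{\G,\res:\sigma}{\phi}}_\gamma$.

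The only delicate point is the book-keeping in the identification step: one must verify carefully that, through the chosen isomorphisms $\mixm{\G}{\res:\sigma}$ and the simplifying conventions $|X\times Y|=|X|\times|Y|$, the predicate $\sem{\phi\subst{\res}{t}}$ really coincides with $\sem{\phi}_\gamma$ evaluated at $\sem{t}(\gamma)$. Alternatively, I would prove the statement directly by induction on the UHOL derivation; there the cases \textsf{U-VAR}, \textsf{U-SUB} and \textsf{U-CASE} are routine, and the genuine obstacle is in \textsf{U-ABS} and \textsf{U-APP}, where the interpretation of the quantified implication $\forall x.\,\phi'\Rightarrow\phi\subst{\res}{\res\,x}$ must be reconciled with the exponential structure of $\sem{\tau\to\sigma}$ and with the substitution $\phi\subst{x}{u}$ in the application rule. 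Since the reduction via relative completeness sidesteps these cases entirely, I would favour it.
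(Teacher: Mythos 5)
Your proof is correct and is essentially the argument the paper intends: the paper states this theorem without an explicit proof, prefacing it only with the remark that a UHOL judgment can be seen as a typing judgment paired with an HOL judgment over the distinguished variable $\res$, which is exactly your reduction via the relative-completeness equivalence $\juhol{\G}{\Psi}{t}{\sigma}{\phi} \Leftrightarrow \jhol{\G}{\Psi}{\phi\subst{\res}{t}}$, the substitution-as-inverse-image lemma, and Theorem~\ref{thm:hol-soundness}. The bookkeeping step you flag also goes through as you describe, since under the paper's conventions ($|X\times Y|=|X|\times|Y|$ and the fixed isomorphism $\mixm{\G}{\res:\sigma}$) the underlying function of $\sub{\G\vdash t:\sigma}{\res}$ sends $\gamma$ to the very point at which $\sem{\holwf{\G,\res:\sigma}{\phi}}_\gamma$ is evaluated, namely $\sem{\G\vdash t:\sigma}(\gamma)$.
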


Now we prove soundness of HO-UBL. We recall here the statement: 

\begin{proposition*}
  Let $\aumj{\G}{\Psi}{P}{t}{\mon{\Sigma,k}{\tau}}{Q}{\delta}$ be a
  derivable HO-UBL judgment without the adversary rule.
  Then, for any
  $\g\in|\sem{\G}|$, $\g\models\sem{\holwf{\G}{\textstyle\bigwedge \Psi}}$
  implies
  \begin{align*}
    &
      \sem{\G\vdash t : \mon{\Sigma,k}{\tau}}(\g)
      \models
      \dstmtr{\glubn}(\delta)(\sem{\G,\st:\tmem\vdash P}_\g,\sem{\G,\vl:\tau,\st:\tmem\vdash Q}_\g).% \\
  \end{align*}
\end{proposition*}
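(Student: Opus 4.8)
The plan is to argue by induction on the derivation of the HO-UBL judgment, excluding the adversary rule \textsf{ADV-U}. The key reformulation is that, by the definition of the stateful lifting $\dstmtr{\glubn}$ obtained from the graded lifting $\glubn$ of Example~\ref{ex:lifting-UB}, the conclusion to be established says precisely that the interpreted state transformer maps the precondition into the lifted postcondition, i.e. $\sem{\G\vdash t:\mon{\Sigma,k}{\tau}}(\g) : \sem P_\g \dto \glub\delta{\sem Q_\g}$, where $\sem P_\g = \sem{\G,\st:\tmem\vdash P}_\g$ and $\sem Q_\g = \sem{\G,\vl:\tau,\st:\tmem\vdash Q}_\g$. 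Phrased as such a generalized Hoare triple, each syntactic rule of Figures~\ref{fig:muhol} and~\ref{fig:str-muhol} becomes an instance of a semantic closure property of the lifting, and the induction proceeds rule by rule.

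For the monadic rules I would appeal to Lemma~\ref{lem:gr-lift}, which transports the three defining properties of $\glubn$ (behaviour on $\eta^\mprob$, on $\mu^\mprob$ composed with the Kleisli extension, and on the strength $\theta^\mprob$) to the state-transformer monad $\mpstn$. The rule \textsf{UNIT-U} follows from the first clause $\eta^\mpstn_X:P\dto\lpst[X]0{Q,IP\dtimes Q}$ together with the soundness of UHOL (Theorem~\ref{thm:uhol-fib-sound}), which discharges the non-monadic premise $\uj{\Gamma}{\Psi}{t}{\tau}{\phi}$ and shows the returned value lands in $\inj\phi$, yielding error grade $0$. The rule \textsf{MLET-U} is the image of the second clause: reading the semantics of $\mlet xtu$ from Figure~\ref{fig:semnon} as a composite of $\theta^\mpstn$ and the Kleisli lifting, the additivity $\delta+\delta'$ of the failure grade is exactly the additivity of the grading monoid carried through $\mu^\mprob$. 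The rules \textsf{READ-U} and \textsf{WRITE-U} are checked by computing $\sem{\rd a}$ and $\sem{\wrt au}$ directly and verifying the triple at grade $0$, the backward substitution in the precondition matching precomposition with the update morphism $u_a$. Finally \textsf{MCASE-U} splits on the boolean guard and uses the interpretations of $\sqcap$, $\sqcup$ and $\inj\cdot$ to reduce the conclusion to one of the two branch premises.

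The structural rules of Figure~\ref{fig:str-muhol} reduce to monotonicity. For \textsf{CONSEQ-U} I would combine monotonicity of $\glub\delta{\cdot}$ in its predicate argument, the grade-weakening axiom $e\le e'\implies\glprob eP\sqsubseteq\glprob{e'}P$, and the soundness of HOL (Theorem~\ref{thm:hol-soundness}) to read each entailment premise $\jhol{\Gamma}{\Psi}{P\jimp P'}$ as an inclusion of predicates. \textsf{OR-PRE-U} uses that the join $\sem{P\sqcup P'}_\g$ distributes over the two derivations, while \textsf{AND-POST-U} is where the union bound genuinely enters: the sum $\delta+\delta'$ is the price of meeting both postconditions at once, which has to be verified against the concrete definition of $\glubn$. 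The rule \textsf{MFOLD-U} calls for an auxiliary induction on the numeric bound $K$, showing that the interpreted iterator preserves the invariant $Q$ with accumulated failure grade $\delta+K\cdot\delta'$, each step of the inner induction being an instance of the \textsf{MLET-U} case already treated.

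The genuinely logic-specific step, and hence the main obstacle, is the sampling rule \textsf{SAMPLE-UBL}: unlike the structural and monadic rules it cannot be obtained from the abstract axioms of Lemma~\ref{lem:gr-lift} and must be proved by unfolding the concrete lifting $\glubn$ of Example~\ref{ex:lifting-UB}, relying on the fact that on measurable predicates the lifting witnesses $\Pr_{x\sim\unif\sigma}[x\notin\phi]\le N/|\sigma|$. The remaining care is bookkeeping rather than conceptual: threading the environment $\g$ and the partial-application notation $\sem P_\g$ coherently through the predicate substitution identity $\sem{\phi[t/x]}=(\sub{\G\vdash t:\tau}x)^*\sem\phi$, and keeping the semantic grades $\Sigma,k$ — which the type interpretation erases — in step with the syntactic gradings accumulated by the monadic rules.
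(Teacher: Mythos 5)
Your proposal matches the paper's own proof in all essentials: both proceed by induction on the derivation, reformulate the conclusion via the stateful lifting as a generalized Hoare triple, discharge \textsf{UNIT-U} and \textsf{MLET-U} through the clauses of Lemma~\ref{lem:gr-lift} together with UHOL soundness, handle \textsf{READ-U}/\textsf{WRITE-U} by direct computation of the semantics against the substitution identity, and fall back on the concrete definition of $\glubn$ only for the sampling rule. The extra cases you spell out (the structural rules and \textsf{MFOLD-U}, which the paper elides as routine) are treated correctly and do not change the method.
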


  \begin{proof}[Proof of Proposition~\ref{prop:ho-ubl-sound}]
    In this proof we simply write $\lpstn$ for $\dstmtr{\glubn}$.
    % instead of $\g\in\QBS(1,\sem{\G})$ and
    % $\g^*\sem{\holwf{\G,\st:\tmem}{P}}$,
    The proof is by induction on the derivation. We show the
    more interesting cases:
    \begin{itemize}
    \item Unit. By soundness of non-monadic HO-UBL, we have that
      \begin{displaymath}
        \g \models \sem{\holwf{\G}{\bigwedge \Psi}} \To \sem{\G\vdash
          t:\tau}(\gamma) \models
        \sem{\holwf{\G,\res:\sigma}{\phi}}_\g
      \end{displaymath}
      Now consider an arbitrary $\holwf{\G, \st : \tmem}{P}$.  By
      definition,
      $\sem{\G\vdash \unit{t}:\mon{\Sigma,k}\sigma}(\g) = \eta^\mpstn \circ
      \sem{\G\vdash t:\sigma}(\g)$. By Lemma~\ref{lem:gr-lift} we
      conclude that
      \begin{displaymath}
        \eta^\mpstn \circ \sem{\G\vdash t:\sigma}(\g) \models
        \lpst[\sem\sigma]{0} { \sem{\holwf{\G, \st : \tmem}{P}}_\g,
          I\sem{\holwf{\G}{\phi}}\dtimes \sem{\holwf{\G, \st :
              \tmem}{P}}_\g }
      \end{displaymath}

    \item Bind. By I.H. we have that, for all
      $\g\models\sem{\holwf{\G}{\Psi}}$
      \begin{displaymath}
        \sem{\G\vdash t:\mon{\Sigma,k}\tau}(\g) \in
        \lpst{\delta}{\sem{\holwf{\G,\st:\tmem}{P}}_\g,
          \sem{\holwf{\G,\st:\tmem,\vl:\tau}{Q}}_\g}
      \end{displaymath} and
      for all $e\in|\sem\tau|$, $(\g,e)\models\sem{\holwf{\G,x:\tau}{\Psi}}$, so
      \begin{displaymath}
        \sem{\G, x:\tau \vdash u:\mon{\Sigma,k}\sigma}_{(\g,e)}
        \models \lpst{\delta'}{\sem{\holwf{\G,x:\tau,t:\tmem}{Q}}_{(\g,e)},
          \sem{\holwf{\G,x:\tau,\st:\tmem,\vl:\sigma}{R}}_{(\g,e)}}
      \end{displaymath}
      Since $x\not\in FV(R)$, then also
      \begin{displaymath}
        \sem{\G, x:\tau \vdash u:\mon{\Sigma,k}\sigma}(\g,e)
        \models
        \lpst{\delta'}{
          \sem{\holwf{\G,x:\tau,t:\tmem}{Q}}_{(\g,e)},
          \sem{\holwf{\G,\st:\tmem,\vl:\sigma}{R}}_\g
        }
      \end{displaymath}
      Note that
      $\lambda^{-1}(\sem{\G, x:\tau \vdash
        u:\mon{\Sigma,k}\sigma}(\g))$ is a morphism
      $\tau \times \tmem \to \mprob(\sigma\times \tmem)$ and that we can
      derive
      \begin{displaymath}
        \lambda^{-1}(\sem{\G, x:\tau \vdash u:\mon{\Sigma,k}\sigma}(\g))
        :
        \sem{\holwf{\G,x:\tau,t:\tmem}{Q}}_\g
        \dto
        \glprob{\delta'}{\sem{\holwf{\G,\st:\tmem,\vl:\sigma}{R}}_\g}
      \end{displaymath}
      By Lemma~\ref{lem:gr-lift},
      \begin{displaymath}
        \begin{array}{c} (\sem{\G, x:\tau \vdash u:\mon{\Sigma,k}\sigma}(\g))^\#:
          \lpst{\delta}{\sem{\holwf{\G,\st:\tmem}{P}}_\g, \sem{\holwf{\G,\st:\tmem,\vl:\tau}{Q}}_\g}
          \dto\\
          \lpst{\delta+\delta'}{
          \sem{\holwf{\G,\st:\tmem}{P}}_\g,
          \sem{\holwf{\G,\st:\tmem,\vl:\sigma}{R}}_\g}
        \end{array}
      \end{displaymath}
      We know that Bind can be equivalently defined as:
      \begin{displaymath}
        \sem{\G\vdash \mlet{x}{t}{u} : \mon{\Sigma,k}{\sigma}}
        \triangleq
        \sem{\G,x:\sigma\vdash u : \mon{\Sigma,k}{\sigma}}^\# \circ \theta
        \circ \langle id_{\G}, \sem{\G\vdash t : \mon{\Sigma,k}{\tau}}
        \rangle
      \end{displaymath} so we conclude.

    \item Read. Recall that
      $\sem{\G\vdash \rd{a} : \mon{\Sigma,k}{\tval}} \triangleq
      \lambda(\eta^\mprobn \circ \langle \pi_a, id \rangle \circ
      \pi_2)$, so
      $\sem{\G\vdash \rd{a} : \mon{\Sigma,k}{\tval}}(\g) \triangleq
      \eta^\mprobn \circ \langle \pi_a, id \rangle$.  Consider an
      arbitrary $\holwf{\G,\vl:\tval, \st : \tmem}{P}$. By definition
      of $\tmem$, we can see $P\subst{\vl}{\st[a]}$ as a predicate over
      $\sem\G\times \qbsmem$, and by the semantics of substitution,
      \begin{displaymath}
        \langle \pi_a, id \rangle
        :
        \sem{\holwf{\G,\st : \tmem}{P\subst{\vl}{\st[a]}}}_\g
        \dto
        \sem{\holwf{\G,\vl:\tval, \st : \tmem}{P}}_\g
      \end{displaymath}
      By the properties of $\glprobn$ and $\eta^\mprobn$, we conclude.

    \item Write. Recall that
      $\sem{\G\vdash \wrt{a}{t} : \mon{\Sigma,k}{\tunit}} \triangleq
      \lambda(\eta^\mprob\circ\langle !,id\rangle\circ
      u_a(\sem{\Gamma\vdash t:\tval}))$. As in the previous case, we use
      the semantics of substitution to show:
      \begin{displaymath}
        \langle !,id\rangle\circ u_a(\sem{\Gamma\vdash t:\tval})(\gamma,-)
        :
        \sem{\holwf{\G,\st:\tmem}{P\subst{\st}{\st[a\mapsto t]}}}_\g
        \dto
        \sem{\holwf{\G,\vl:\tunit, \st:\tmem}{P}}_\g
      \end{displaymath}

    \item Monadic Case. By I.H., Let $\g\in|\sem\G|$ such that
      $\g\models\sem{\G\vdash\bigwedge\Psi}$.
      This, together with
      the induction hypothesis on each branch
      entails
      \begin{displaymath}
        \sem{\G \vdash t_1 : \mon{\Sigma,k}{\tau}}(\g)
        \models
        \lpst{\delta}{
          \sem{\holwf{\G,\st:\tmem}{\langle b=\ttrue\rangle\sqcap P_1}}_\g,
          \sem{\holwf{\G,\st:\tmem,\vl:\tau}{Q}}_\g
        }
      \end{displaymath}
      \begin{displaymath}
        \sem{\G \vdash t_2 : \mon{\Sigma,k}{\tau}}(\g)
        \models
        \lpst{\delta}{
          \sem{\holwf{\G,\st:\tmem}{\langle b=\ffalse\rangle\sqcap P_2}}_\g,
          \sem{\holwf{\G,\st:\tmem,\vl:\tau}{Q}}_\g}
      \end{displaymath}
      Here we used the fact that
      $
        \gamma\models\Psi\wedge\Phi\implies P_\gamma\le Q_\gamma
      $
      implies
      $
        \gamma\models\Psi\implies (\pi_1^*I\Phi\sqcap P_\gamma)\le Q_\gamma.
      $
      Then from the standard
      reasoning on conditional expression we conclude
      \begin{displaymath}
        \sem{\G \vdash\casebool{b}{t_1}{t_2}}(\g)
        \models
        \lpst{\delta}{
          \sem{\holwf{\G,\st:\tmem}{(\langle b=\ttrue\rangle\sqcap P_1)\sqcup(\langle b=\ffalse\rangle\sqcap P_2)}}_\g,
          \sem{\holwf{\G,\st:\tmem,\vl:\tau}{Q}}_\g
        }
      \end{displaymath}
    \item Uniform sampling. Here we use the concrete definition of
      $\glprobn$.
      $\sem{\G \vdash \unif{\sigma} : \mon{\Sigma,k}\sigma} =
      \lambda. \theta \circ \langle \unif{\sem{\sigma}} , id \rangle
      $.  By definition of the uniform distribution, if
      $\{ x \in \sigma | x \in \phi \}/ |\sigma| = \delta$, then
      $\Pr_{x \sim \unif{\sem{\sigma}}}[x\in\phi] = \delta$, so
      $\unif{\sem{\sigma}} \in \glub_{1-\delta}(\phi)$, and therefore,
      \begin{align*}
        & \sem{\G \vdash \unif{\sigma} :
          \mon{\Sigma,k}\sigma} : \sem{\holwf{\G, \st:\tmem}{P}}\\
        &\dto
        \glub{1-\delta}{\sem{\holwf{x:\sigma}{\phi}}}\dtimes
        \sem{\holwf{\G, \st:\tmem}{P}}\\
        &\dto
        \glub{1-\delta}{\sem{\holwf{\G, x:\sigma, \st:\tmem}{\phi
            \wedge P}}}.
      \end{align*}
      % Therefore,
      % \begin{displaymath} \langle id_{\G},
      %   \sem{\G\vdash t : \mon{\Sigma,k}{\tau}} \rangle :
      %   \sem{\holwf{\G}{\bigwedge \Psi}}\to
      %   \sem{\holwf{\G}{\bigwedge \Psi}}
      %   \dot\times
      %   \sem{\holwf{\G,\st:\tmem}{P}} \dTo
      %   \glprobn(\sem{\holwf{\G,\vl:\tau,\st:\tmem}{Q}})\end{displaymath}
      % by composing
      % with $\theta$ we get:
      % \begin{displaymath} \theta \circ \langle
      %   id_{\G}, \sem{\G\vdash t :
      %   \mon{\Sigma,k}{\tau}} \rangle :
      %   \sem{\holwf{\G}{\bigwedge \Psi}}\to
      %   \sem{\holwf{\G,\st:\tmem}{P}} \dTo
      %   \glprobn(\sem{\holwf{\G}{\bigwedge \Psi}}
      %   \dot\times\sem{\holwf{\G,\vl:\tau,\st:\tmem}{Q}})\end{displaymath}
      % 
      % By the properties of bind, we have that:
      % \aag{This is another
      % condition we need to enforce on the
      % lifting.}
      % \begin{displaymath}\begin{array}{c}
                             %                              {\sf bind} : \
                             %                              \left(\sem{\holwf{\G,\st:\tmem}{P}} \dTo
                             %                              \glprobn(\sem{\holwf{\G,\vl:\tau,\st:\tmem}{Q}})\right)
                             %                            \times \\
                             %                            \left(\sem{\holwf{\G,x:\tau,\st:\tmem}{Q\subst{\tval}{x}}} \dTo \glprobn(\sem{\holwf{\G,x:\tau,\vl:\sigma,\st:\tmem}{R}})\right) \\
                             %                            \rightarrow \sem{\holwf{\G,\st:\tmem}{P}} \dTo
                             %     %                            \glprobn(\sem{\holwf{\G,\vl:\tau,\st:\tmem}{R}})
                             %                          \end{array}
                             %                        \end{displaymath}
    \end{itemize}
  \end{proof}

A judgment in RHOL can be seen as a triple formed by
two typing judgments and a logical judgment with two extra
distinguished variables.
\begin{theorem}[Soundness of RHOL]
  Let $\jrhol{\G}{\Psi}{t_1}{\sigma_1}{t_2}{\sigma_2}{\phi}$ be a
  derivable UHOL judgment.  Then, for any $\g\in |\sem\G|$,
  $\g\models\sem{\holwf{\G}{\textstyle\bigwedge \Psi}}$ implies
  \begin{displaymath}
    (\sem{\jlc{\G}{t_1}{\sigma_1}}(\g), \sem{\jlc{\G}{t_2}{\sigma_2}}(\g))
    \models
    \sem{\holwf{\G, \res_1:\sigma_1,\res_2:\sigma_2}{\phi}}_\gamma.
  \end{displaymath}
\end{theorem}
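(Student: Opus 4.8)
The plan is to avoid reproving the inductive content and instead \emph{transport} the already-established soundness of HOL (Theorem~\ref{thm:hol-soundness}) across the two facts that relate RHOL to HOL: the syntactic relative-completeness equivalence $\jrhol{\G}{\Psi}{t_1}{\sigma_1}{t_2}{\sigma_2}{\phi} \Leftrightarrow \jhol{\G}{\Psi}{\phi\subst{\resl}{t_1}\subst{\resr}{t_2}}$ recalled above, and the semantic substitution property of the interpretation, $\sem{\G \vdash \phi[t/x]} = (\sub{\G\vdash t:\tau}{x})^*\sem{\G,x:\tau\vdash\phi}$. This is the same recipe that derives Theorem~\ref{thm:uhol-fib-sound} from HOL soundness, only carried out with two distinguished variables $\resl,\resr$ instead of one.

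First I would use the forward direction of the equivalence: from the derivable judgment $\jrhol{\G}{\Psi}{t_1}{\sigma_1}{t_2}{\sigma_2}{\phi}$ I obtain a derivable HOL judgment $\jhol{\G}{\Psi}{\phi\subst{\resl}{t_1}\subst{\resr}{t_2}}$, using the RHOL convention that the free variables of $t_1,t_2$ are disjoint and that $\resl,\resr$ occur in neither term, so that the two substitutions commute and remain well-typed after weakening. Applying Theorem~\ref{thm:hol-soundness} to this judgment yields $\sem{\holwf{\G}{\bigwedge\Psi}} \sqsubseteq \sem{\holwf{\G}{\phi\subst{\resl}{t_1}\subst{\resr}{t_2}}}$ in $\UPred\S{\sem\G}$; in particular, any $\g$ with $\g\models\sem{\holwf{\G}{\bigwedge\Psi}}$ also satisfies $\g\models\sem{\holwf{\G}{\phi\subst{\resl}{t_1}\subst{\resr}{t_2}}}$.

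It then remains to rewrite the substituted interpretation as a pullback. I would apply the substitution property twice — once for $\resr:=t_2$ over the context $\G,\resl:\sigma_1$, and once for $\resl:=t_1$ over $\G$ — and compose the two pullbacks by functoriality of $(-)^*$, obtaining
\[
  \sem{\G\vdash \phi\subst{\resl}{t_1}\subst{\resr}{t_2}}
  = \big(\mixm{\G}{\resl:\sigma_1,\resr:\sigma_2}\circ\langle id_{\sem\G},\sem{\jlc{\G}{t_1}{\sigma_1}},\sem{\jlc{\G}{t_2}{\sigma_2}}\rangle\big)^*\,\sem{\holwf{\G,\resl:\sigma_1,\resr:\sigma_2}{\phi}}.
\]
Unfolding $f^*P = P\circ|f|$ turns $\g\models\sem{\phi\subst{\resl}{t_1}\subst{\resr}{t_2}}$ into $(\g,\sem{t_1}(\g),\sem{t_2}(\g))\models\sem{\holwf{\G,\resl:\sigma_1,\resr:\sigma_2}{\phi}}$, which by the definition of partial application $\sem{P}_\gamma$ is exactly $(\sem{\jlc{\G}{t_1}{\sigma_1}}(\g),\sem{\jlc{\G}{t_2}{\sigma_2}}(\g))\models\sem{\holwf{\G,\resl:\sigma_1,\resr:\sigma_2}{\phi}}_\gamma$, the desired conclusion.

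The step I expect to require the most care is the iterated-substitution bookkeeping in the displayed identity: checking that $\G,\resl:\sigma_1\vdash t_2:\sigma_2$ holds by weakening (using $\resl\notin FV(t_2)$), that $\sem{\G,\resl:\sigma_1\vdash t_2}$ is $\sem{\G\vdash t_2}$ precomposed with the projection, and that the composite of the two $\mathrm{sub}$-morphisms coincides with the pairing $\mixm{\G}{\resl:\sigma_1,\resr:\sigma_2}\circ\langle id,\sem{t_1},\sem{t_2}\rangle$. These are routine consequences of the definition of the substitution morphisms, but they are precisely where the disjointness-of-free-variables hypothesis of RHOL is genuinely used. Alternatively one could prove the statement by a direct induction on the RHOL derivation of Figure~\ref{fig:rhol}, invoking Theorem~\ref{thm:uhol-fib-sound} in the mixed rules \rname{APP-L} and \rname{VAR-L}; but the completeness-based argument is shorter and reuses Theorem~\ref{thm:hol-soundness} as a black box.
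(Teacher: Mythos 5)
You are correct, and your argument goes through; but note that the paper itself gives \emph{no} proof of this theorem: it is stated bare in the appendix, immediately after the proof of Theorem~\ref{thm:hol-soundness}, preceded only by the remark that an RHOL judgment ``can be seen as a triple formed by two typing judgments and a logical judgment with two extra distinguished variables.'' Your transport argument is exactly the reduction that remark — together with the equivalence $\jrhol{\G}{\Psi}{t_1}{\sigma_1}{t_2}{\sigma_2}{\phi} \Leftrightarrow \jhol{\G}{\Psi}{\phi\subst{\resl}{t_1}\subst{\resr}{t_2}}$ asserted in Section~\ref{sec:bool-rhol} — gestures at, and every ingredient you invoke is available in the paper: the forward half of the equivalence is a purely proof-theoretic fact imported from the relative completeness theorem of \citet{ABGGS17} (hence independent of the QBS semantics, so no circularity with soundness), Theorem~\ref{thm:hol-soundness} is proved by induction on HOL derivations, and the substitution identity $\sem{\G \vdash \phi[t/x]} = (\sub{\G\vdash t:\tau}{x})^*\sem{\G,x\colon\tau \vdash \phi}$ is stated in Section~\ref{sec:semantics}. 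The two obligations you flag as delicate — commutation of the substitutions $\subst{\resl}{t_1}$ and $\subst{\resr}{t_2}$, and the semantic weakening fact that $\sem{\G,\resl:\sigma_1\vdash t_2:\sigma_2}$ is $\sem{\G\vdash t_2:\sigma_2}$ precomposed with the evident projection $\sem{\G,\resl:\sigma_1}\to\sem{\G}$ — are indeed the only bookkeeping needed, and both follow from the RHOL convention that the free variables of $t_1,t_2$ are disjoint and distinct from $\resl,\resr$. The alternative you mention, induction on the derivations of Figure~\ref{fig:rhol} (using Theorem~\ref{thm:uhol-fib-sound} for the one-sided rules), is the route that matches the style of the proofs the paper actually writes out (Theorem~\ref{thm:hol-soundness} and Propositions~\ref{prop:ho-ubl-sound} and~\ref{prop:ho-rpl-sound} are all inductions on derivations); your route is shorter and reuses HOL soundness as a black box, at the price of resting on the imported completeness theorem rather than on the relational rules themselves.
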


We now prove soundness of the relational logic. We first recall the statement:

\begin{proposition*}
  Let
  $\armj{\G}{\Psi}{P}{t_1}{\mon{\Sigma,k}{\tau_1}}{t_1}{\mon{\Sigma,k}{\tau_2}}{Q}{\delta}$
  be a derivable HO-PRL judgment without the \rname{ADV-R} rule.  Then
  for any $\g\in|\sem{\G}|$,
  $\gamma\models\sem{\holwf{\G}{\bigwedge \Psi}}$ implies 
  \begin{displaymath}
    (\sem{\G\vdash t_1: \mon{\Sigma,k}{\tau_1}}(\g),\sem{\G\vdash t_2:
      \mon{\Sigma,k}{\tau_2}}(\g))
    \models
    \lpst{0,\delta}{\sem P_\g,\sem Q_\g},
  \end{displaymath}
  where $\sem P_\g \triangleq \sem{\G,\st_1:\tmem,\st_2:\tmem\vdash
    P}_\g$ and  $Q_\g\triangleq\sem{\G,\st_1:\tmem,\vl_1:\sigma,\st_2:\tmem,\vl_2:\sigma\vdash
    Q}_\g$.
\end{proposition*}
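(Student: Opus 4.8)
The plan is to prove the statement by induction on the derivation of the relational monadic judgment, following the shape of the unary soundness argument for Proposition~\ref{prop:ho-ubl-sound} but interpreting triples through the stateful relational lifting $\dstmtr{\gldpn}$ induced by the differential-privacy relational lifting $\gldpn$ of Example~\ref{ex:DP-lifting}. Because the grading monoid is $(\posr,\le,0,+)^2$ and statistical distance lives on the $\epsilon=0$ slice, every grade occurring in an \rname{ADV-R}-free derivation has the form $(0,\delta)$, and the monoid law $(0,\delta)+(0,\delta')=(0,\delta+\delta')$ keeps us on that slice; the additive bookkeeping of error bounds is thus identical to the unary case. The one preparatory step is to spell out the relational form of Lemma~\ref{lem:gr-lift}, which the excerpt only asserts ``can be generalized accordingly'': from the unit, multiplication, strength and monotonicity axioms of an $\bO$-valued strong $E$-graded relational lifting one derives
\begin{align*}
(\eta^\mpstn,\eta^\mpstn)&:P\dto\lpst{0}{Q,IP\dtimes Q},\\
(f,f'):R\dto\glprob{e}{S}&\implies\bigl((\lambda(f))^{\#\mpstn},(\lambda(f'))^{\#\mpstn}\bigr):\lpst{e'}{Q,R}\dto\lpst{e'+e}{Q,S},\\
(\theta^\mpstn,\theta^\mpstn)&:P\dtimes\lpst{e}{Q,R}\dto\lpst{e}{Q,\incl P\sqcap R},
\end{align*}
now read over $\BRel\bO{-}{-}$ in place of $\UPred\bO{-}$. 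These are the only structural facts the induction consumes.

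With these in hand the generic inductive cases go through mechanically, mirroring the unary proof. For each two-sided rule I combine the relational lemma with the soundness of RHOL (deferred to the appendix) applied to the non-monadic premises: \rname{UNIT-R} uses the unit consequence together with the RHOL judgment $\rj{\Gamma}{\Psi}{t_1}{\tau_1}{t_2}{\tau_2}{\phi}$ that supplies $\langle\phi\rangle$; \rname{MLET-R} threads the precondition with the strength consequence and sequences the two computations with the Kleisli-lifting consequence, the grades summing as $(0,\delta)+(0,\delta')$; \rname{READ-R} and \rname{WRITE-R} use the semantics of memory substitution from Section~\ref{sec:semantics} to rewrite the pre- and post-relations and then invoke the unit law; and \rname{MCASE-R} uses its RHOL premise $b_1=b_2$ to force both executions into the same branch, reducing to the two branch premises by the coproduct case analysis of the language semantics. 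The one-sided rules \rname{L-UNIT-R} and \rname{L-MLET-R} are the evident asymmetric instances in which one component is the interpretation of $\sskip$. The structural rules---consequence, $\sqcup$ in the precondition, $\sqcap$ in the postcondition---follow from monotonicity $e\le e'\Rightarrow\glprob{e}{P}\sqsubseteq\glprob{e'}{P}$ and from the fact that a provable entailment $\jhol{\Gamma}{\Psi}{P\jimp Q}$ denotes $P\sqsubseteq Q$; none of these cases depends on the concrete choice of $\gldpn$.

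The one genuinely lifting-specific case, and the step I expect to be the main obstacle, is the sampling rule \rname{SAMPLE-R}, which is sound only under the statistical-distance interpretation. Here I must unfold the codensity-style definition of $\gldpn$ from Example~\ref{ex:DP-lifting} at $\epsilon=0$ and check directly that, for finite $B_1\subseteq B_2$, the pair $(\unif{B_1},\unif{B_2})$ inhabits $\gldp{0,\delta}{\langle\vll=\vlr\rangle}$ for the grade $\delta$ prescribed by the rule; this reduces to a finite computation bounding the discrepancy between the two uniform laws against every admissible test pair $(f,g)$ appearing in the definition of $\gldpn$, which is where the explicit finite coupling of the two samplings is used. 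Once this single quantitative fact is established, the rule of consequence---monotonicity of the lifting in its grade---discharges the general form, and the remaining primitive-sampling rules follow the same template, so this sampling estimate is the crux of the whole proof.
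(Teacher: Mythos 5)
Your proposal matches the paper's own proof in all essentials: both proceed by induction on the derivation, take as the key preparatory tool the relational generalization of Lemma~\ref{lem:gr-lift} (the unit, Kleisli-lifting and strength laws of the stateful relational lifting), discharge the non-monadic premises via RHOL/UHOL soundness, and treat the generic monadic and structural cases independently of the concrete lifting. The paper's appendix only writes out the unit and bind cases; your singling out of \rname{SAMPLE-R} as the one lifting-specific step requiring the concrete definition of $\gldpn$ is consistent with the paper's remark that the sampling rules are valid only for the statistical-distance interpretation.
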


    \begin{proof}[Proof of Proposition~\ref{prop:ho-rpl-sound}]
    To simplify the proof, we will use set theory notation, i.e.
    instead of $\g\in\QBS(1,\sem{\G})$ and $\langle\g^*,{id}\rangle\sem{\holwf{\G,\st\colon M}{P}}$,
    we write $\g\in\sem{\G}$ and $\sem{\holwf{\G,\st\colon M}{P}}(\g)$.
    We also use the shorthand 
    $\ddot\cT_{\sigma_1,\sigma_2,(\epsilon,\delta)} \triangleq \ddstmtr{\ddot\cP^{\it dp}}_{\sigma_1,\sigma_2,(\epsilon,\delta)}$,
    and omit $\sigma_1,\sigma_2$ when they are clear from the context.
    We only show a few interesting cases:
    \begin{itemize}
        \item \rname{UNIT-L}. By soundness of UHOL,
            \[ \langle id_{\sem{\G}},\sem{\G\vdash t\colon \tau_1}\rangle :
            \sem{\holwf{\G}{\bigwedge \Psi}}\to\sem{\holwf{\G, \res\colon \tau_1}{\phi}}\]
            so for all $\g\in\sem{\holwf{\G}{\bigwedge \Psi}}$,
            $\sem{\G\vdash t \colon  \tau_1}(\g)\in\sem{\holwf{\G, \res\colon\tau_1}{\phi}}(\g)$.
            Consider an arbitrary predicate $\holwf{\G, \stl \colon  M, \str\colon  M}{P}$. 
            By definition,
            \[\sem{\G\vdash \unit{t}\colon \cT\tau_1}(\g) = \eta^\cT (\sem{\G\vdash t\colon \tau_1}(\g))\]
            and
            \[\sem{\G\vdash {\sf skip}\colon \cT\tunit}(\g) = \eta(*) \]
            Where $*$ is the only element of the singleton set.
            By Lemma~\ref{lem:rel-gr-lift} we can conclude that
            \begin{align*}
                &{}(\eta^\cT (\sem{\G\vdash t\colon \tau_1}(\g)), \eta^\cT(*)) \in \\
                &{}\; \ddot\cT_{\tau_1,\tunit,(0,0)}(\sem{\holwf{\G, \stl \colon  M, \str}{P}}(\g), 
        \sem{\holwf{\G}{\phi}}(\g)\ddtimes \sem{\holwf{\G, \stl \colon  M, \str\colon M}{P}}(\g))
            \end{align*}

        \item \rname{BIND}. By applying I.H. to the first premise we have that,
        for all $\g\in\sem{\holwf{\G}{\Psi}}$
        \begin{align*} 
            &{}(\sem{\G\vdash t_1:\cT\tau_1}(\g),\sem{\G\vdash t_2:\cT\tau_2}(\g)) \in \\
            &{}\; \ddot\cT_{(\epsilon,\delta)}(\sem{\holwf{\G,\stl:M,\str:M}{P}}(\g), 
            \sem{\holwf{\G,\stl:M,\vll:\tau_1,\str:M,\vlr:\tau_2}{Q}}(\g))
        \end{align*}
        and for all $e_1 : \tau_2$, $e_2 : \tau_2$, we have 
        $(\g,e_1,e_2) \in \sem{\holwf{\G, x_1:\tau_1, x_2:\tau_2}{\Psi}}$, so
        by applying I.H. to the second premise,
        \begin{align*}
        &{}(\sem{\G, x_1:\tau_1,x_2:\tau_2, \vdash u_1:\cT\sigma_1}(\g,e_1,e_2),
        \sem{\G, x_1:\tau_1, x_2:\tau_2 \vdash u_2:\cT\sigma_2}(\g,e_1,e_2)) \in\\
        &{}\;\ddot\cT_{(\epsilon',\delta')}(\sem{\holwf{\G,x_1:\tau_1,x_2:\tau_2,\stl:M, \str:M}{Q}}(\g,e_1,e_2),\\
        &{}\qquad\qquad \sem{\holwf{\G,x_1:\tau_1,x_2:\tau_2,\stl:M,\vll:\sigma_1,\str:M,\vlr:\sigma_2}{R}}(\g,e_1,e_2))
        \end{align*}
        Since $x_1,x_2\not\in FV(R)$, then also
        \begin{align*}
        &{}(\sem{\G, x_1:\tau_1,x_2:\tau_2, \vdash u_1:\cT\sigma_1}(\g,e_1,e_2),
        \sem{\G, x_1:\tau_1, x_2:\tau_2 \vdash u_2:\cT\sigma_2}(\g,e_1,e_2)) \in\\
        &{}\;\ddot\cT_{(\epsilon',\delta')}(\sem{\holwf{\G,x_1:\tau_1,x_2:\tau_2,\stl:M, \str:M}{Q}}(\g,e_1,e_2),\\
        &{}\qquad\qquad \sem{\holwf{\G,\stl:M,\vll:\sigma_1,\str:M,\vlr:\sigma_2}{R}}(\g))
        \end{align*}
        
        Also, since $x_1\not\in FV(u_2)$ and $x_2\not\in FV(u_1)$, we have morphisms
        \[\lambda^{-1}(\sem{\G, x_1:\tau_1 \vdash u_1:\cT\sigma_1}(\g)) \colon 
            \tau_1 \times M \to \cP(\sigma_1\times M)\]
        and
        \[\lambda^{-1}(\sem{\G, x_2:\tau_2 \vdash u_2:\cT\sigma_2}(\g)) \colon 
            \tau_2 \times M \to \cP(\sigma_2\times M)\]
        so,
        \begin{align*} 
        &{}(\lambda^{-1}(\sem{\G, x_1:\tau_1 \vdash u_1:\cT\sigma_1}(\g)),
        \lambda^{-1}(\sem{\G, x_2:\tau_2 \vdash u_2:\cT\sigma_2}(\g))) \colon \\
        &{}\qquad\sem{\holwf{\G,x_1:\tau_1,\stl:M,x_2:\tau_2,\stl:M}{Q}}(\g) \to\\
        &{}\qquad\qquad    \ddot\cP^{\it dp}_{(\epsilon',\delta')}(
            \sem{\holwf{\G,\stl:M,\vll:\sigma_1,\str:M,\vll:\sigma_2}{R}}(\g))
        \end{align*}
        By the relational analogue of Lemma~\ref{lem:gr-lift}, we get
        \begin{align*}
            &{}((\sem{\G, x_1:\tau_1 \vdash u_1:\cT\sigma_1}(\g))^\#,
            (\sem{\G, x_2:\tau_2 \vdash u_2:\cT\sigma_2}(\g))^\#) \colon \\
            &{}\ddot\cT_{(\epsilon,\delta)}(\sem{\holwf{\G,\stl:M,\str:M}{P}}(\g), 
            \sem{\holwf{\G,\stl:M,\vll:\tau_1,\str:M,\vlr:\tau_2}{Q}}(\g)) \to\\
            &{}\ddot\cT_{(\epsilon+\epsilon',\delta+\delta')}
            (\sem{\holwf{\G,\stl:M,\str:M}{P}}(\g), 
            \sem{\holwf{\G,\stl:M,\vll:\sigma_1,\str:M,\vlr:\sigma_2}{R}(\g)})
        \end{align*}
       and from this we can conclude.
    \end{itemize}
\end{proof}

%%% Local Variables:
%%% mode: latex
%%% TeX-master: "main"
%%% End:

%\input{examples}

\end{document}